\documentclass[11pt]{article}
\usepackage[margin=1in, centering]{geometry}
\usepackage{graphicx} 
\usepackage{xlmacros}
\usepackage{amsmath}
\usepackage{esint}
\usepackage{mathtools}
\usepackage{amssymb}
\usepackage{amsthm}
\usepackage{xcolor}
\usepackage{bbm}
\usepackage{enumitem}






\title{Cell-Probe Lower Bounds via Semi-Random CSP Refutation: Simplified and the Odd-Locality Case}
\author{Venkatesan Guruswami\thanks{Simons Institute for the Theory of Computing and Departments of EECS \& Mathematics, UC Berkeley. {\tt venkatg@berkeley.edu}. Research supported in part by a Simons Investigator award and NSF grant CCF-2211972.} \and Xin Lyu \thanks{UC Berkeley. {\tt xinlyu@berkeley.edu}. Supported by a Google PhD fellowship.} \and Weiqiang Yuan \thanks{EPFL. {\tt weiqiang.yuan@epfl.ch}. Supported by the Swiss State Secretariat for Education, Research and Innovation (SERI) under contract number MB22.00026.}}
\date{July 2025}

\begin{document}

\maketitle
\thispagestyle{empty}

\begin{abstract}
A recent work (Korten, Pitassi, and Impagliazzo, FOCS 2025) established an insightful connection between static data structure lower bounds, range avoidance of $\text{NC}^0$ circuits, and the refutation of pseudorandom CSP instances, leading to improvements to some longstanding lower bounds in the cell-probe/bit-probe models. Here, we improve these lower bounds in certain cases via a more streamlined reduction to XOR refutation, coupled with handling the odd-arity case. Our result can be viewed as a complete derandomization of the state-of-the-art semi-random $k$-XOR refutation analysis (Guruswami, Kothari and Manohar, STOC 2022, Hsieh, Kothari and Mohanty, SODA 2023), which complements the derandomization of the even-arity case obtained by Korten et al.

As our main technical statement, we show that for any multi-output constant-depth circuit that substantially stretches its input, its output is very likely far from strings sampled from distributions with sufficient independence, and further this can be efficiently certified. Via suitable shifts in perspectives, this gives applications to cell-probe lower bounds and range avoidance algorithms for $\NC^0$ circuits.

\end{abstract}

\section{Introduction}

A recent work by Korten, Pitassi, and Impagliazzo \cite{KPI25} improved some longstanding static data structure lower bounds, based on an insightful connection to range avoidance of $\mathsf{NC}^0$ circuits and refutation of pseudorandom CSP instances~\cite{KPI25}. In this work, we improve their cell-probe lower bounds in certain cases, using methods to refute odd-arity semi-random XOR, while streamlining some aspects of the proof, such as the reduction to $k$-XOR instances. Furthermore, our use of Fourier analysis for the reduction not only cleans up the bound, but also gives clear and direct improvements over \cite{KPI25}'s lower bounds (for \emph{bit}-probe data structures) and algorithms (for range avoidance) for \emph{all} ranges of arities (both odd and even).

We first state our main technical result, which says that for every simple circuit that stretches its input substantially, its output will almost certainly be far from a string sampled from any distribution with sufficient independence. Furthermore, this property can be certified (with high probability) via a polynomial-time algorithm. 

This refutation result then yields, via suitable shifts in perspective (including the recent connections made in \cite{KPI25}), improved data structure lower bounds and range avoidance algorithms for simple circuits.  We will only describe the context of these applications briefly, referring the reader to the excellent introduction of \cite{KPI25} for further discussion concerning these connections. 

\medskip\noindent\textbf{Roadmap of Introduction.} We introduce our main refutation result in \Cref{sec: intro certify remote}. Applications to data structure lower bounds and range avoidance algorithms will be delivered in \Cref{sec: intro data structure} and \Cref{sec: intro range avoidance}, respectively. Finally, \Cref{sec: intro technique} gives a broad overview of our proofs.



\subsection{Certifying Remote Points for Simple Circuits.} \label{sec: intro certify remote}

Given a circuit $C:\{\pm 1\}^n\to \{\pm 1\}^{m}$ and a string $y\in \{\pm 1\}^m$, the statement that $y\in \mathrm{Range}(C)$ is clearly in $\NP$. We are interested in the complexity of the negated statement ``$y\notin \mathrm{Range}(C)$.'' Observe that this is a $\mathbf{coNP}$ statement by definition. We ask: under which conditions can we certify that $y\notin \mathrm{Range}(C)$ with a computation less powerful than $\mathbf{coNP}$? Note that an analogous question has been asked and thoroughly investigated for CSP refutation: namely, given a SAT instance $\psi$, the claim that $\psi$ is satisfiable is clearly in $\NP$ and easily verifiable (given a witness), while $\psi$ being unsatisfiable is generally harder to verify. 
 
\paragraph*{Notation.} Before formally stating our main result alluded to above, we need some pieces of notation. For two Boolean strings $x,y\in \{\pm 1\}^m$ of the same length $m$, we use $\Delta(x,y) := \frac{1}{m} \sum_{i=1}^{m}\mathbf{1}\{x_i \ne y_i\}$ to denote their (relative) Hamming distance. 
A distribution $\mathcal{D}$ on $\{0,1\}^m$ is said to be $\eta$-almost $k$-wise independent distribution if the projection of $\mathcal{D}$ onto any $k$ coordinates is at most $\eta$-far (in infinity norm) from the uniform distribution on $\{\pm 1\}^k$ in total variation distance. 

We now state our main result.


\begin{theorem}[Main]\label{thm: main odd arity}
    There is a universal constant $c_{\text{remote}}>0$ such that the following is true. Let $k,t,w,n,m\ge 1$ be integer parameters satisfying $k\ge t\log n$ and let $\eps \in (0, 1)$. Further, let $\Sigma$ be a finite alphabet of size $2^w$, and $\mathcal{D}$ be a distribution over $\{\pm 1\}^{m}$ that is $\eta$-almost $k$-wise independent where $\eta \le (2^{-tw}\cdot \eps^{4}\cdot n^{-\frac{k}{\log n}})^{O(1)}$. 

    Let $C:\Sigma^{n}\to \{\pm 1\}^m$ be an arbitrary multi-output circuit. Suppose each output of $C(x)_i$, $1\le i\le m$, can be computed by a $t$-query (adaptive) decision tree over the input $x$. Then, provided that 
    $$m \ge c_{\text{remote}}\cdot n\cdot \biggl(\frac{n \log n}{k}\biggr)^{t/2-1}\cdot \log(n) \cdot \eps^{-4} \cdot 2^{O(tw)} \ , $$
    the following holds:  
    \begin{align*}
        \Pr_{\bm{b}\sim \mathcal{D}}\left[ \min_{x\in \Sigma^n} \{ \Delta(C(x, *), \bm{b}) \} \ge \frac{1}{2} - \eps \right] \ge 1 - \frac{1}{\mathrm{poly}(n)} \ .
    \end{align*}
    Furthermore, there is a deterministic algorithm running in time $\mathrm{poly}(m, n^{O(t)})$ such that, with probability $1-\frac{1}{\mathrm{poly}(n)}$ over $\bm{b}\sim \mathcal{D}$, the algorithm certifies\footnote{This is to say, the algorithm always outputs either ``certified'' or ``uncertain,'' and outputs  ``certified''  with probability $1-\frac{1}{\mathrm{poly}(n)}$ over the choice of $\bm{b}$. Whenever the algorithm outputs ``certified,'' the claimed bound on $\Delta(C(x, *), \bm{b})$ is guaranteed to hold.} the bound given the input $C$ and $\bm{b}$.
\end{theorem}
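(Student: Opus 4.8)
The plan is to convert the remoteness claim into a (weighted) XOR refutation problem, split it by arity, certify each part by bounding the spectral norm of a Kikuchi-type matrix whose randomness lives in $\bm b$, and then derandomize that spectral bound using the almost $k$-wise independence of $\mathcal D$. For Step~1: using $\mathbf 1\{u\ne v\}=\tfrac12(1-uv)$ for $u,v\in\{\pm1\}$ gives $\Delta(C(x,*),\bm b)=\tfrac12-\tfrac1{2m}\langle C(x),\bm b\rangle$, so it suffices to certify, with probability $1-1/\mathrm{poly}(n)$ over $\bm b\sim\mathcal D$, that $\max_{x\in\Sigma^n}\langle C(x),\bm b\rangle\le 2\eps m$. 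Regard each block $x_i\in\Sigma$ as $w$ sign variables, so $C(\cdot)_i:\{\pm1\}^{nw}\to\{\pm1\}$. Since $C(\cdot)_i$ is a depth-$t$ adaptive decision tree reading whole blocks, it has at most $2^{O(tw)}$ nonzero Fourier coefficients, each computable from the tree, and each supported on a character $\chi_\alpha$ touching at most $t$ distinct blocks (at most $tw$ bits). Writing $\langle C(x),\bm b\rangle=\sum_\alpha c_\alpha\chi_\alpha(x)$ with $c_\alpha:=\sum_{i=1}^m\bm b_i\widehat{C(\cdot)_i}(\alpha)$ and grouping $\alpha$ by the number $j\in\{0,\dots,t\}$ of blocks it touches, we get $\langle C(x),\bm b\rangle=\sum_{j=0}^t g_j(x)$. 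The $j=0$ term equals $\sum_i\bm b_i\,\mathbb E_x[C(x)_i]$, a weighted sum of $m$ numbers in $[-1,1]$ with (almost) $k$-wise independent signs, hence $O(\sqrt{km\log n})=o(\eps m)$ with probability $1-1/\mathrm{poly}(n)$ by a standard $k$-wise-independent tail bound and the lower bound on $m$; so it remains to certify $\max_x|g_j(x)|\le \eps m/(2t)$ for each $j\in[t]$.

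\emph{Step 2: per-arity semi-random XOR instances.} For fixed $j$, write $g_j(x)=\sum_{B\in\binom{[n]}{j}}h_B(x_B)$, where $h_B$ collects the characters supported on exactly the $j$ blocks of $B$; this is a $j$-ary XOR-type instance over the alphabet $\Sigma$, and $\max_x|g_j(x)|$ is exactly its refutation value. The constraint hypergraph (which $B$'s appear and the supports within them) is adversarially determined by $C$, while all the randomness sits in the coefficients $c_\alpha$, each a fixed linear form in the (almost) $k$-wise independent bits $\bm b$. This is precisely the semi-random $j$-XOR setting, and the $2^w$-ary structure of the variables is the alphabet version handled in the prior semi-random refutation analyses that \cite{KPI25} derandomizes in the even case; the alphabet enters only as $2^{O(tw)}$ multiplicative losses (through the $\le 2^{O(tw)}$ characters per output and the alphabet dependence of the spectral bound).

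\emph{Step 3: spectral certification, including odd arity.} I would bound $\max_x|g_j(x)|$ by the spectral norm of a Kikuchi matrix $A^{(j)}$ at a level $\ell$ tuned to $k,t,n$: for even $j$, the standard level-$\ell$ lift with rows indexed by $\ell$-subsets of the $n$ blocks and entry $A^{(j)}_{S,S'}$ carrying $c_\alpha$ whenever $S\triangle S'$ realizes a scope; for odd $j$, the bipartite Kikuchi matrix obtained after one Cauchy--Schwarz step, which splits each arity-$j$ scope into halves of sizes $\lceil j/2\rceil$ and $\lfloor j/2\rfloor$ (this is the odd-arity handling promised in the abstract, where \cite{KPI25} had only the even case). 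Both $A^{(j)}$ and $\|A^{(j)}\|$ are computable from $C$ and $\bm b$ in time $\mathrm{poly}(m,n^{O(t)})$, so the certifier is: form the $A^{(j)}$ for $j\in[t]$, add up the resulting bounds together with the $j=0$ estimate, and output ``certified'' iff the total is $\le 2\eps m$. It remains to show that, for $\bm b\sim\mathcal D$, this total is $\le 2\eps m$ with probability $1-1/\mathrm{poly}(n)$, which simultaneously yields the probabilistic (non-algorithmic) part of the statement.

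\emph{Step 4: derandomizing the spectral bound, and the main obstacle.} Each entry of $A^{(j)}$ is a low-degree polynomial in $\bm b$ (degree $O(t)$, after the Cauchy--Schwarz doubling in the odd case), so $\mathbb E_{\bm b}\,\mathrm{tr}\big((A^{(j)})^{2q}\big)$ is a polynomial in $\bm b$ of degree $O(qt)$; taking $q=\Theta(\log n)$ to obtain an $1/\mathrm{poly}(n)$ failure probability forces $O(qt)\le k$, i.e.\ $k\gtrsim t\log n$ — this is the role of that hypothesis. Since $\mathcal D$ is $\eta$-almost $k$-wise independent, this trace moment matches its value under uniform $\bm b$ up to an additive error controlled by $\eta$; the uniform-case combinatorial trace bound of the semi-random analysis then applies, and Markov on $\mathrm{tr}((A^{(j)})^{2q})$ gives $\|A^{(j)}\|\le\eps m/(2t)$ with the claimed probability, provided $\eta\le(2^{-tw}\eps^4 n^{-k/\log n})^{O(1)}$ to absorb the accumulated errors and provided $m\ge c_{\text{remote}}\,n\,(n\log n/k)^{t/2-1}\log n\,\eps^{-4}\,2^{O(tw)}$, so that at the chosen level the spectral bound beats $\eps m/(2t)$ (the $n^{t/2-1}\to(n\log n/k)^{t/2-1}$ saving and the $\eps^{-4}$ both trace to this tuning of $\ell$ and to the Cauchy--Schwarz/high-probability steps). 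Chaining Steps~1--4 over $j\in[t]$ proves the theorem. The main obstacle is Steps~3--4 in the odd-arity case: making the bipartite Kikuchi construction deliver the same $(n\log n/k)^{t/2-1}$ saving as the even case while keeping the trace-moment degree inside the $k$-wise budget and aligning all the $\eps^{-4}$, $2^{O(tw)}$, and $\log n$ losses with the stated bound on $m$; the even case is essentially the derandomization already in \cite{KPI25}, so the genuinely new content is the odd case together with the streamlined Fourier reduction of Steps~1--2.
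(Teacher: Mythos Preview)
Your high-level plan—Fourier-decompose into XOR-type pieces, certify each via a Kikuchi spectral bound whose trace power has degree $\le k$ in $\bm b$, then invoke (almost) $k$-wise independence—is exactly the paper's strategy. But your reduction (Steps~1--2) is coarser than the paper's in a way that creates a real issue. You group characters only by the number $j$ of blocks they touch; within each resulting instance $g_j$, a single output $i$ may contribute up to $2^{O(tw)}$ distinct characters, \emph{all sharing the same sign $\bm b_i$}. So the per-constraint signs in your instance are not one-per-constraint independent, and this is not ``precisely the semi-random $j$-XOR setting.'' Concretely, in $\E_{\bm b}\,\mathrm{tr}((A^{(j)})^{2q})$ the surviving walks are those where each \emph{output index} $i$ appears evenly, which is a strictly weaker constraint than the usual ``each hyperedge appears evenly''; the slack compounds over $2q=\Theta(\log n)$ steps and is not absorbed by a single $2^{O(tw)}$ factor. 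The paper avoids this via a finer reduction (Lemma~\ref{lemma: XOR reduction}): it first passes to a \emph{layered} input in $\{\pm1\}^{nwt}$ (one layer per query step), then groups characters by their per-layer shape $(\beta_1,\dots,\beta_t)$ \emph{and} an ordering index $\ell$, yielding $4^{tw}$ separate $t$-XOR schemes each with exactly one constraint per output $i$. Only after this splitting are the signs $\bm b_i$ cleanly one-per-constraint, so Theorem~\ref{thm: refute semi random xor almost independence} applies as a black box; the $4^{tw}$ loss enters once in $\eps$, not multiplicatively in the trace.

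Your odd-arity sketch is also off. The paper does not build a bipartite Kikuchi by ``splitting each arity-$j$ scope into halves of sizes $\lceil j/2\rceil$ and $\lfloor j/2\rfloor$'' (that asymmetric lift is suboptimal). Following \cite{HsiehKM23}, it squares the instance (your Cauchy--Schwarz) and places Kikuchi edges at \emph{pairs} of constraints, $S\oplus T=e_C\oplus e_{C'}$, after bucketing $(C,C')$ by overlap (the decomposition $\mathcal H=\mathcal H^{(1)}\sqcup\cdots\sqcup\mathcal H^{(k-1)}$ in Section~\ref{sec: construct kikuchi odd k}). The resulting matrix entries are degree~$2$ in $\bm b$, not degree~$O(t)$, and the trace needs independence up to $2\ell=O(r\log n)\le k$—which is how the hypothesis $k\ge t\log n$ enters. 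With these two fixes (finer splitting in the reduction; the HKM pairing for odd arity) your outline becomes the paper's proof.
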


Near-optimal constructions of almost independent distributions are well-known \cite{NaorN93,AlonGHP92}. Specifically, the distribution $\mathcal{D}$ required in \Cref{thm: main odd arity} can be sampled with $O(\log(n) + k + \log(1/\eta)) \le O(k + \log(1/\eps) + tw + \log(n))$ bits and the sampling algorithm (given random seeds) is efficient. Consequently, \Cref{thm: main odd arity} implies that there exists an explicit ensemble of $\mathrm{poly}(n, 2^k, 1/\eps)$ Boolean strings of length $m$, 
such that for any circuit $C:\Sigma^n\to \{\pm 1\}^m$ obeying the form of \Cref{thm: main odd arity}, a randomly drawn string from the ensemble is far away from the range of $C$ with high probability.
In \Cref{sec: pseudorandomness}, we collect a list of pseudorandom distributions that we frequently use throughout the paper.


\paragraph*{Discussion.} We compare \Cref{thm: main odd arity} with what is known in the literature. When each output $C_i$ is a simple predicate (such as XOR, OR, NAE, etc.) over at most $t$ inputs, and $\bm{b}$ is drawn from the \emph{uniform} distribution, this is precisely the question of semi-random CSP refutation, and there has been a rather rich literature (see, e.g.~\cite{Feige07,AGK21,GKM,HsiehKM23}). In particular, it has already been observed in~\cite{Feige02,COCF10} that refuting general $t$-CSPs can be reduced to \emph{strongly}\footnote{``Strong refutation'' means to certify not only that the CSP instance is unsatisfiable, but also that any assignment cannot satisfy more than $\frac{1}{2} + \eps$ fraction of constraints.} refuting $t$-XOR instances. 

Motivated by the challenge of proving static data structure lower bounds with \emph{adaptive} query strategies, the recent work of \cite{KPI25} initiated the quest to investigate the case of $C_i$'s being low-depth decision trees and, more importantly, the case when $\bm{b}$ is drawn from a \emph{pseudorandom} distribution. \Cref{thm: main odd arity} represents a new advance in this direction. In particular, when $t$ is even, \Cref{thm: main odd arity} provides the same quantitative bound as the main result of \cite{KPI25}. The case of odd $t$'s was highlighted as an open problem in \cite[Open Problem 1]{KPI25}, which is affirmatively answered by \Cref{thm: main odd arity}. We thus bring all of what is known about semi-random CSP refutation with \emph{truly random} ``$\bm{b}$'' \cite{HsiehKM23} to the regime of \emph{pseudorandom} ``$\bm{b}$''. Furthermore, our proof of Theorem~\ref{thm: main odd arity}, while inspired by the result of \cite{KPI25}, offers a fresh perspective on the problem and provides a streamlined proof, as summarized below.

First, we use a principled Fourier analysis to reduce the question of refuting $C(x, *) = \bm{b}$ to strong refutation of $t$-XOR instances, in place of the combinatorial arguments employed by \cite{KPI25}. Our reduction provides a clear picture about what makes decision trees ``easy'' to refute, through the lens of their Fourier spectrum. In the second part of the proof, we make modular use of the state-of-the-art semi-random $t$-XOR refutation algorithm from \cite{HsiehKM23} (which in turn builds on \cite{GKM}):
We provide an exposition of \cite{HsiehKM23} and point out the key steps that allow for a pseudorandom ``$\bm{b}$'' to work. Overall, we believe our proof further clarifies and tightens the connection between semi-random CSP refutation and data structure lower bounds, an intriguing and beautiful connection first observed by \cite{KPI25}.



\subsection{New Data Structure Lower Bounds} \label{sec: intro data structure}

\paragraph*{The static cell-probe model.} In the static cell probe model~\cite{EF75, Yao81}, a data structure problem can be described by a mapping $f:[N]\times [m] \to \{0,1\}$. Think of $f$ as a matrix with $N$ rows and $m$ columns. The $N$ rows of $f$ describe the collection of strings required to be stored correctly (for example, for the task of efficient evaluation of low-degree $\mathbb{F}_2$-polynomials, each row is the evaluation table of a polynomial). A data structure solution for $f$ consists of an \emph{initialization} algorithm and a \emph{query} algorithm.
On initialization, the algorithm receives an index $i\in [N]$ (indexing the string of $f(i, *)$ to be stored), and produces a memory configuration $E\in \Sigma^S$ using $S$ cells, where each cell stores a word from a given alphabet $\Sigma$. In the query stage, the algorithm receives $j\in [m]$. Then it queries at most $t$ cells from $E$ and outputs the correct value of $f(i, j)$.

The efficiency of the data structure is measured by several parameters. The number of cells $S$ needed for storage is called the ``space complexity.'' The number of queries $t$ during the query stage is called the ``time complexity,'' and we call $\log |\Sigma|$ the ``word length.''
Lastly, depending on whether or not the query algorithm is allowed to make its $t$ queries adaptively, we have the distinction between adaptive and non-adaptive data structures. In the cell probe model, we do not assume computational upper bounds on the initialization and query algorithms, as we are primarily interested in the information-theoretic trade-offs between space $S$ and time $t$.

The counting argument~\cite{Mil93} demonstrates that most of the data structure problems $f:[N]\times m\to \ZO$ require either maximal space $S\approx m$, or maximal time $t\approx \log m$.
Much effort has then been devoted to construct an explicit data structure problem that satisfies these bounds.
However, the objective remains beyond our current reach:
There is a line of works~\cite{Siegel04,Mil93,PT06,Lar12,GGS23} dedicated to establishing increasingly stronger explicit data structure lower bounds.
However, the state-of-the-art lower bound is $S\ge \tilde{\Omega}(\log N\cdot (m/\log N)^{1/t})$ for adaptive cell-probe models~\cite{Siegel04} and $S\ge \tilde{\Omega}(\log N\cdot (m/\log N)^{1/{t-1}})$ for nonadaptive ones~\cite{GGS23} .

In this work, we focus on the parameter regime that $\log N \ll m$. 
In this setting, the best lower bound was $S\ge \tilde{\Omega}(m^{1/t})$ inherited from the aforementioned bound given by~\cite{Siegel04}, until very recently,
Korten, Pitassi, and Impagliazzo~\cite{KPI25} established a significantly improved bound of $S\ge \tilde{\Omega}(m^{2/t})$ for every even $t$.

\paragraph*{New lower bounds.} Having defined the model, we are ready to describe our new lower bounds. First, a reparameterization of Theorem~\ref{thm: main odd arity} and a perspective shift give us the following: Briefly, the circuit is given by the execution of the query-answering algorithm for the data structure. The circuit operates on an array of $S$ cells and thus $S$ plays the role of $n$ in \Cref{thm: main odd arity}.

%
\begin{theorem} \label{thm: data structure lb}
    Let $f:[N]\times [m]\to \{0,1\}$ be a data structure problem whose rows form the support of an $\eta$-almost $k$-wise independent distribution where $\eta \le (2^{-kw}\cdot m^{-\frac{k}{\log n}})^{O(1)}$. Let $S,t,w\in \mathbb{N}$ be parameters and suppose $k\ge t\cdot\log m$. For any space $S$, time $t$, word length $w$ cell-probe data structure algorithm for $f$, we must have
    \begin{align*}
        S \ge \frac{m^{\frac{2}{t}}\cdot k^{1-\frac{2}{t}}}{ 2^{O(w)}\cdot \log m}.
    \end{align*}
\end{theorem}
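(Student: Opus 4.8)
The plan is to derive \Cref{thm: data structure lb} from \Cref{thm: main odd arity} by a standard encoding/decoding (``cell-probe $\Rightarrow$ remote point'') argument, essentially reversing the perspective shift hinted at in the paper. Suppose we have a cell-probe data structure solving $f$ with space $S$, time $t$, and word length $w$. Fix the memory configuration $E \in \Sigma^S$ produced on initialization with index $i \in [N]$. For each query coordinate $j \in [m]$, the query algorithm makes at most $t$ adaptive cell probes into $E$ and outputs $f(i,j)$; this is exactly a $t$-query adaptive decision tree over the input $E \in \Sigma^S$, with alphabet size $2^w$. Bundling these $m$ decision trees together yields a single multi-output circuit $C : \Sigma^S \to \{\pm 1\}^m$ with $C(E)_j = f(i,j)$, satisfying the hypotheses of \Cref{thm: main odd arity} with $n := S$. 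Crucially, for \emph{every} index $i$, there is some memory configuration $E$ with $C(E) = f(i,*)$, i.e., $\Delta(C(E), f(i,*)) = 0$.

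Now I would invoke \Cref{thm: main odd arity}. Let $\mathcal{D}$ be the $\eta$-almost $k$-wise independent distribution whose support is the set of rows $\{f(i,*) : i \in [N]\}$. Since $\Delta(C(E), f(i,*)) = 0$ for a suitable $E$, certainly $\min_{x \in \Sigma^S}\Delta(C(x), f(i,*)) = 0 < \tfrac12 - \eps$ for every $i$, hence this holds with probability $1$ over $\bm{b} \sim \mathcal{D}$. But \Cref{thm: main odd arity} asserts that if $m \ge c_{\text{remote}}\cdot S \cdot (S\log S / k)^{t/2 - 1}\cdot \log S \cdot \eps^{-4}\cdot 2^{O(tw)}$, then $\min_x \Delta(C(x),\bm{b}) \ge \tfrac12 - \eps$ with probability $1 - 1/\mathrm{poly}(S)$ — a contradiction. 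Therefore the space bound must be violated, i.e.,
\begin{align*}
m < c_{\text{remote}}\cdot S \cdot \left(\frac{S\log S}{k}\right)^{t/2 - 1}\cdot \log S \cdot \eps^{-4}\cdot 2^{O(tw)}.
\end{align*}
Setting $\eps$ to a small constant (say $\eps = 1/4$, which makes $\eps^{-4}$ a constant absorbed into $c_{\text{remote}}$), rearranging this inequality to isolate $S$, and using $S \le \mathrm{poly}(m)$ to replace stray $\log S$ factors by $O(\log m)$, one obtains $S^{t/2} \gtrsim m \cdot k^{t/2-1} / (2^{O(tw)}\log^{O(1)} m)$, and taking the $2/t$-th power gives $S \ge m^{2/t} k^{1 - 2/t} / (2^{O(w)}\log m)$ as claimed. (The $\log m$ versus $\log^{t}m$ bookkeeping is exactly where the ``$\tilde\Omega$'' normally hides; here the paper states a clean single $\log m$, so I would track constants carefully to confirm the lower-order terms collapse as stated, and note that the hypothesis $\eta \le (2^{-kw} m^{-k/\log n})^{O(1)}$ is chosen precisely to match the $\eta$ requirement of \Cref{thm: main odd arity} after this reparameterization.)

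The only genuinely delicate points, rather than true obstacles, are: (i) verifying that adaptive $t$-cell-probe query algorithms really are captured by $t$-query adaptive decision trees over $\Sigma^S$ — this is immediate since a ``cell probe'' reads one word and branches on its $2^w$ possible values, which is exactly a node of a $2^w$-ary decision tree; (ii) confirming that the rows of $f$ being the support of an $\eta$-almost $k$-wise independent distribution is the right hypothesis to feed into \Cref{thm: main odd arity} — it is, verbatim; and (iii) the arithmetic of solving for $S$ and checking that the exponent on $k$ comes out to $1 - 2/t$ and that the word-length dependence degrades from $2^{O(tw)}$ to $2^{O(w)}$ only after taking the $2/t$-th root. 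None of these require new ideas; the substance is entirely in \Cref{thm: main odd arity}, and \Cref{thm: data structure lb} is a corollary obtained by contraposition plus reparameterization.
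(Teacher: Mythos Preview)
Your proposal is correct and matches the paper's own proof essentially verbatim: the paper also interprets the query algorithm as a circuit $C:\Sigma^S\to\{0,1\}^m$, observes that $\mathrm{range}(C)\supseteq\mathrm{supp}(\mathcal{D})$, and derives a contradiction with \Cref{thm: main odd arity}, so that the space lower bound follows by the contrapositive and the same reparameterization you outline. Your discussion of the arithmetic (taking the $2/t$-th root to collapse $2^{O(tw)}$ to $2^{O(w)}$ and $(\log S)^{t/2}$ to a single $\log m$) is more explicit than the paper's, but the substance is identical.
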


Theorem~\ref{thm: data structure lb} improves over \cite[Theorem 1]{KPI25}, which gave the bound but only for even $t$'s. We are able to cover the whole range of $t$'s.
We also remark that the bound in~\Cref{thm: data structure lb} is almost tight: Siegel~\cite{Siegel04} constructed a function $f:[N]\times [m]\to \ZO$, such that the rows of $f$ form the support of a $k$-wise independent distribution for any $k=o(n)$. Moreover, $f$ can be computed by a non-adaptive bit-probe data structure algorithm with time $t$ and space $S=m^{\frac{2}{t}+o(1)}$.

\paragraph*{The bit-probe lower bounds.} We now switch our focus to the bit-probe model. This simply means that we will work with $\Sigma = \{0,1\}$: i.e., each cell now stores a single bit.
We will work with small-biased distributions \textemdash distributions that fool any parity functions \textemdash which enables us to break through the $m^{2/t}$ barriers~\cite{Siegel04}. 

\begin{restatable}{theorem}{AdaptiveDSLB} \label{thm: adaptive-ds-lb}
Let $f:[N]\times [m]\to \ZO$ be a data structure problem whose rows form the support of a $2^{-\Omega(t)}n^{-t}$-biased distribution.
Then any adaptive bit probe data structure for $f$ with time $t$ requires space
\[
S\ge \tilde{\Omega}(m^{\frac{1}{\frac{t}{2}-\frac{t-2}{2(t+2)}}}),
\]
where $\tilde{\Omega}$ hides a $\polylog(m)$ factor and some constant that depends on $t$.
\end{restatable}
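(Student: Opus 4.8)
The plan is to run the reduction from static data-structure lower bounds to ``certifying remote points'' that underlies Theorem~\ref{thm: data structure lb}, but feeding it a refutation analysis sharpened for small-biased right-hand sides so as to break the $m^{2/t}$ barrier. Concretely, a time-$t$, space-$S$ adaptive bit-probe data structure for $f$ yields a circuit $C\colon\{0,1\}^{S}\to\{0,1\}^{m}$ — the query algorithm acting on the $S$ memory bits — whose $j$-th output is computed by a depth-$t$ decision tree, and such that every row $f(i,\ast)$ equals $C(E^{(i)})$ exactly, where $E^{(i)}$ is any configuration produced by the initialization algorithm on input $i$. Since the rows are precisely the support of the $\delta$-biased distribution $\mathcal D$ in the hypothesis, $\min_{E}\Delta(C(E),\bm b)=0$ for \emph{every} $\bm b$ in the support of $\mathcal D$. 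Hence it suffices to establish the following remoteness statement: there is a constant $\eps>0$ such that whenever $m\ge C_t\cdot S^{q(t)}\cdot\polylog(m)$ with $q(t):=\tfrac t2-\tfrac{t-2}{2(t+2)}$, we have $\Pr_{\bm b\sim\mathcal D}\bigl[\min_{E}\Delta(C(E),\bm b)\ge\tfrac12-\eps\bigr]>0$. This contradicts the previous line, forcing $m<C_t\cdot S^{q(t)}\polylog(m)$, i.e.\ $S\ge\widetilde\Omega\bigl(m^{1/q(t)}\bigr)=\widetilde\Omega\bigl(m^{2(t+2)/(t^2+t+2)}\bigr)$, as claimed.

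To prove the remoteness statement I would mimic the Fourier reduction behind Theorem~\ref{thm: main odd arity}. If some $E^{\ast}$ had $\Delta(C(E^{\ast}),\bm b)<\tfrac12-\eps$, then, writing $c_{S}:=\sum_{j}\bm b_{j}\widehat{C_{j}}(S)$ and expanding $\sum_{j}\bm b_{j}C_{j}(E^{\ast})=\sum_{d=0}^{t}\sum_{|S|=d}c_{S}\chi_{S}(E^{\ast})\ge 2\eps m$, the degree-$0$ term is a linear form in $\bm b$ of typical size $O(\sqrt m\,\polylog)$ (controlled since $\mathcal D$ is small-biased), so some level $d^{\ast}\in\{1,\dots,t\}$ satisfies $\bigl|\sum_{|S|=d^{\ast}}c_{S}\chi_{S}(E^{\ast})\bigr|\ge\eps m/t$. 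It is therefore enough to certify, for each $d^{\ast}$, that $\max_{z\in\{\pm1\}^{S}}\bigl|\sum_{|S|=d^{\ast}}c_{S}\chi_{S}(z)\bigr|<\eps m/t$ — that is, to strongly refute a real-weighted $d^{\ast}$-XOR instance whose signs and weights are read off from $\bm b$. The essential gain from the bit-probe model (where $\mathcal D$ lives on $\{0,1\}^{m}$ and can be $\delta$-biased, hence fools every parity up to $\delta$) is that, in the spectral/Kikuchi analysis of this refutation, $\bm b$ behaves like a truly uniform right-hand side; in particular, and in contrast to the merely $k$-wise-independent setting behind Theorem~\ref{thm: data structure lb} — where the available independence effectively caps the Kikuchi level — one may here run the refutation at a super-constant Kikuchi level $\ell$. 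One then imports the semi-random $d^{\ast}$-XOR refutation machinery of \cite{GKM,HsiehKM23} (including, for odd $d^{\ast}$ and thus for odd $t$, the bipartite Kikuchi graph), together with the standard decomposition of the worst-case constraint hypergraph into a near-regular part handled spectrally and a small heavy part handled directly. The binding constraint arises from $d^{\ast}=t$ and reads $m\gtrsim S\cdot(S/\ell)^{t/2-1}\polylog$; choosing $\ell$ as large as the analysis permits — which works out to $\ell=\Theta(S^{1/(t+2)})$ — turns this into $m\gtrsim S^{q(t)}\polylog$, while the levels $d^{\ast}<t$ are refuted with room to spare (each has $\sum_{|S|=d^{\ast}}c_{S}^{2}\le m$, since each $C_{j}$ has total Fourier weight $1$, so needs only $m\gtrsim S^{d^{\ast}/2}\polylog\le S^{t/2}\polylog$).

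The heart of the argument — and the main obstacle — is the spectral-norm bound for the Kikuchi matrix $M$ under a $\delta$-biased rather than uniform right-hand side. In the trace-moment estimate for $\mathbb{E}_{\bm b}[\operatorname{tr}(M^{2q})]$, the contribution of a closed walk factors into a product, over the constraints it visits, of moments $\mathbb{E}_{\bm b}\bigl[\prod b_{j}\bigr]$; the ``matched'' contributions — where a visited constraint reuses a single right-hand-side bit, so that $b_{j}^{2}=1$ — reproduce the uniform-case estimate verbatim, while all remaining contributions are bounded by $|\mathbb{E}_{\bm b}[\prod b_{j}]|\le\delta$ and must be shown to be negligible after summing over all walks. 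Pushing this through is what determines the required magnitude of $\delta$ and, together with the regularity the Kikuchi graph must retain after hypergraph regularization, fixes the admissible range of $\ell$. The remaining pieces are comparatively routine: small-bias concentration so that the $\bm b$-dependent quantities (the degree-$0$ term, the row norms of $M$, the energies $\sum_{|S|=d^{\ast}}c_{S}^{2}$) all lie within $\polylog$ factors of their uniform-case values simultaneously; the bookkeeping across the $t$ Fourier levels; and the odd-arity case, where the asymmetry between $\binom{S}{\ell}$- and $\binom{S}{\ell\pm1}$-sized vertex sets demands extra care in regularization. Finally, note that only the \emph{existential} remoteness statement — not a polynomial-time certifier — is used above, which is exactly what licenses taking $\ell=S^{\Theta(1)}$ and hence the improvement over Theorem~\ref{thm: data structure lb}.
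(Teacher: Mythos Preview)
Your high-level reduction (query algorithm $\Rightarrow$ circuit $C$, Fourier-decompose by level, refute each level) matches the paper, and your arithmetic checking that $\ell=S^{1/(t+2)}$ yields the exponent $q(t)=\tfrac{t^2+t+2}{2(t+2)}$ is correct. But the plan has a real gap at the top Fourier level, and it is precisely the step you flag as ``the heart of the argument.''

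You propose to \emph{strongly} refute the level-$t$ XOR instance at Kikuchi level $\ell=S^{1/(t+2)}$, arguing that small bias lets you ``run the refutation at a super-constant Kikuchi level.'' That is not so. In the trace-moment bound $\E_{\bm b}\bigl[\mathrm{tr}\bigl((\Gamma^{-1}A)^{L}\bigr)\bigr]$ with $L=\Theta(\ell\log S)$, the matched-walk contribution is of order $(L/d)^{L/2}$, while the total (weighted) mass of \emph{all} closed walks is $\binom{S}{\ell}\le S^{\ell}$; hence the unmatched walks contribute up to $\delta\cdot S^{\ell}$. For this to be dominated by the main term you need $\delta\le S^{-\Theta(\ell)}$, whereas the hypothesis gives only $\delta=S^{-O(t)}$. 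So strong refutation caps the Kikuchi level at $\ell=O(t)$, and at that level the binding constraint $m\gtrsim S(S/\ell)^{t/2-1}$ collapses to $m\gtrsim S^{t/2}$, i.e.\ no improvement over Theorem~\ref{thm: data structure lb}. In short, a $\delta$-biased source with $\delta=S^{-O(t)}$ behaves, for the spectral analysis, no better than $O(t\log S)$-wise independence.

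The paper's proof avoids this by treating the top level qualitatively differently. First, Lemma~\ref{lemma: DT_Fourier_weight} shows that any depth-$t$ decision tree $g$ satisfies $\sum_{|\alpha|=t}|\widehat{g_\alpha}|\le 1$; consequently the level-$t$ contribution to $\langle C(x),b\rangle$ is at most $\max_{|\alpha|=t}\val(\psi_{\mathcal H_\alpha,b})$, so a \emph{weak} refutation bound $\val\le 1-\tfrac{1}{c\,r\log S}$ for each top-level instance suffices. Second, weak refutation (Theorem~\ref{thm: weak_refutation}, via the even-cover / hypergraph Moore bound) tests a \emph{single} parity of $\bm b$ and therefore works at any Kikuchi level $r$ with merely constant bias. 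The low-degree levels ($|\alpha|<t$) are then strongly refuted, but only to precision $\eps\approx\tfrac{1}{r\log S}$ and at arity $\le t-1$, which needs $m\gtrsim S^{(t-1)/2}\cdot\mathrm{poly}(r\log S)$ at \emph{constant} Kikuchi level --- well within what $\delta=S^{-O(t)}$ supports. Balancing this against the weak-refutation requirement $m\gtrsim S(S/r)^{t/2-1}$ is what picks $r=S^{1/(t+2)}$ (for odd $t$) and yields the stated bound. The two ingredients you are missing are thus (i) the level-$t$ Fourier $\ell_1$ bound for decision trees, and (ii) replacing strong by weak refutation at the top level so that a large Kikuchi parameter is compatible with only polynomially small bias.
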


We can obtain a slightly better bound for non-adaptive bit probe data structures.

\begin{restatable}{theorem}{NonAdaptiveDSLB} \label{thm: nonadaptive-ds-lb}
Let $f:[N]\times [m]\to \ZO$ be a problem whose rows form the support of a $2^{-\Omega(t)}n^{-t}$-biased distribution.
Then any nonadaptive bit probe data structure for $f$ with time $t$ requires space
\[
S\ge \tilde{\Omega}(m^{\frac{2}{t-1}}),
\]
where $\tilde{\Omega}$ hides a $\polylog(m)$ factor and some constant that depends on $t$.
\end{restatable}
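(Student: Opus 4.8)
The plan is to apply the same reparametrization and perspective shift that yields Theorem~\ref{thm: data structure lb}, while exploiting the two features that single out the non-adaptive bit-probe setting: the word length is $w=1$, and each answer is a function of $t$ \emph{fixed} memory bits. A space-$S$, time-$t$ non-adaptive bit-probe data structure for $f$ yields a circuit $C:\{0,1\}^S\to\{0,1\}^m$ with $C(E)_j = g_j(E|_{P_j})$ for fixed $P_j\in\binom{[S]}{t}$ and $g_j:\{0,1\}^t\to\{0,1\}$, and correctness gives, for every row $i$, a memory string $E(i)$ with $C(E(i)) = f(i,*)$; in particular $f(i,*)\in\mathrm{Range}(C)$ for every $i$. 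Hence it suffices to show that whenever $S$ is too small one can certify $f(i,*)\notin\mathrm{Range}(C)$ for a $1-1/\mathrm{poly}(m)$ fraction of rows $i$ — equivalently, for a $1-1/\mathrm{poly}(m)$-fraction draw from the $2^{-\Omega(t)}m^{-t}$-biased distribution $\mathcal{D}$ supported on the rows — since a single such $i$ already contradicts correctness. Concretely, the target is to make the certification go through as soon as $m\ge\tilde{\Omega}(S^{(t-1)/2})$.

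The one place where the argument improves on Theorem~\ref{thm: data structure lb} is that for non-adaptive bit-probe queries the relevant object is semi-random $(t-1)$-XOR refutation, not $t$-XOR. Specific to non-adaptive bit-probe schemes (cf. the non-adaptive analyses of \cite{GGS23}), a query that reads $t$ single bits and outputs one bit can be treated as a $(t-1)$-ary constraint by \emph{pinning} one of the probed bits: writing $P_j = Q_j\cup\{p_j\}$ with $|Q_j| = t-1$, the event $g_j(E|_{P_j}) = \bm{b}_j$ splits, according to the value of the single bit $E_{p_j}$, into two Boolean constraints $g_j(\,\cdot\,,\sigma)|_{Q_j} = \bm{b}_j$, each depending on only the $t-1$ bits $E|_{Q_j}$ and hence of Fourier degree at most $t-1$. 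Feeding these conditioned constraints into the Fourier-analytic reduction used in the proof of Theorem~\ref{thm: main odd arity} produces a semi-random $(t-1)$-XOR instance over the $S$ memory variables with $O_t(m)$ constraints, whose right-hand side is a bounded-arity function of $\bm{b}$ and of the pinned bits and therefore still carries the pseudorandomness (small bias, almost $k$-wise independence at scale $k=\Theta(t\log m)$) required by the refutation.

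It then remains to invoke the derandomized semi-random $(t-1)$-XOR strong-refutation guarantee that already powers Theorem~\ref{thm: main odd arity} — the \cite{HsiehKM23}/\cite{GKM} Kikuchi-matrix analysis run with a pseudorandom rather than uniform right-hand side — which certifies that no assignment satisfies more than a $\tfrac12+\eps$ fraction of the constraints once $m\ge\tilde{\Omega}_{t,\eps}(S^{(t-1)/2})$, provided the right-hand side is $2^{-\Omega(t)}m^{-\Omega(t)}$-pseudorandom; the hypothesized $2^{-\Omega(t)}m^{-t}$-biased rows supply this pseudorandomness. Combining the pieces: if $S<\tilde{\Omega}(m^{2/(t-1)})$ then $m>\tilde{\Omega}(S^{(t-1)/2})$, the certification fires with probability $1-1/\mathrm{poly}(m)$ over a random row of $\mathcal{D}$, witnessing some $i$ with $f(i,*)\notin\mathrm{Range}(C)$ and contradicting correctness; hence $S\ge\tilde{\Omega}(m^{2/(t-1)})$, with the hidden factor absorbing a $\polylog(m)$ term and a $t$-dependent constant.

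I expect the main obstacle to be making the $t\to t-1$ step genuinely lossless inside the refutation: one must verify that after pinning a probed bit the induced right-hand side (a low-arity function of $\bm{b}$ together with pinned memory bits) is still ``random enough'' for the row-bucketing step and the matrix-concentration bound of \cite{HsiehKM23}, and that carrying the pinned bit through those bounds does not push the threshold back up from $S^{(t-1)/2}$ toward $S^{t/2}$. A lesser point is bookkeeping all Fourier levels $s\le t-1$ of the conditioned constraints simultaneously rather than only the top one, but since $S^{s/2}\le S^{(t-1)/2}$ and the total Fourier weight at each level is $O_t(m)$, every level is strongly refutable below the same threshold, so this part is routine.
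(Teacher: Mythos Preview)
Your proposal has a genuine gap in the ``pinning'' step, and it is not the pseudorandomness issue you flag. The bit $E_{p_j}$ you pin is part of the \emph{unknown} memory configuration $E\in\{0,1\}^S$; it is a variable on the left-hand side, not a constant you can absorb into the right-hand side. When you write that the induced RHS is ``a bounded-arity function of $\bm{b}$ and of the pinned bits,'' you are letting the right-hand side depend on the very variables you are trying to assign, so what you obtain is not a $(t-1)$-XOR instance at all. Algebraically, conditioning on $E_{p_j}=\sigma$ is just the identity $g_j(E|_{P_j})=\tfrac{1+E_{p_j}}{2}g_j^{(+1)}(E|_{Q_j})+\tfrac{1-E_{p_j}}{2}g_j^{(-1)}(E|_{Q_j})$; the cross term $E_{p_j}\cdot(\text{degree-}(t{-}1)\text{ function})$ is exactly the degree-$t$ Fourier part of $g_j$, so pinning has not removed it. Unless you can case over all pinned-bit values (which is $2^{\Omega(S)}$ cases) or otherwise eliminate that degree-$t$ term, the threshold stays at $S^{t/2}$.

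The paper's route to $(t-1)$-XOR is different and does not try to lower the arity of the constraint. It decomposes $\langle C(x),b\rangle$ by Fourier level into a degree-$t$ piece and a degree-${<}\,t$ piece, and handles the degree-$t$ piece \emph{without any refutation}: for a non-parity $t$-junta $g$ one has $|\widehat{g}_{[t]}|\le 1-2^{1-t}$, so the degree-$t$ contribution is automatically at most $1-2^{1-t}$; and if more than $S$ of the $g_j$ are parities, their outputs lie in a proper affine subspace, which a $2^{-\Omega(t)}m^{-t}$-biased $\bm{b}$ misses with probability at least $1/2-o(1)$. Only the degree-${<}\,t$ pieces are then strongly refuted, and those are genuine $(\le t{-}1)$-XOR instances with a fixed right-hand side (a signed copy of $\bm{b}$), so the refutation threshold is $m\ge\tilde\Omega(S^{(t-1)/2})$. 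The saving comes from the Fourier-coefficient bound for non-parity juntas together with the separate linear-algebra argument for parities, not from an arity-reduction trick on the constraints.
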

Prior to this work, the best known lower bound was given by Korten et al.~\cite{KPI25}. They show that for any $f$ as in~\Cref{thm: adaptive-ds-lb,thm: nonadaptive-ds-lb} and even $t$,
\begin{enumerate}[noitemsep]
\item any adaptive bit probe data structure for $f$ with time $t$ requires space $S\ge \tilde{\Omega}(n^{\frac{2}{t-1}})$, and
\item any nonadaptive bit probe data structure for $f$ with time $t$ requires space $S\ge \tilde{\Omega}(m^{\frac{1}{\frac{t}{2}-\frac{t-2}{2(t+2)}}})$.
\end{enumerate}
Compared to~\cite{KPI25}, our improvements are twofold: We provide tighter bounds for both nonadaptive and adaptive data structures, and our bounds apply to the full range of $t$'s.


\subsection{Better Algorithms for Range Avoidance} \label{sec: intro range avoidance}

In the range avoidance problem~\cite{KKMP21-TFNP-Hierarchy,Korten21}, one is given a circuit $C:\{0,1\}^{n}\to \{0,1\}^m$ with $n < m$ and tasked with identifying a string $y\in \{0,1\}^m$ outside the range of $C$. 
Recent years have seen a growing body of research \cite{Korten21,RSW22,GLW25,CHLR23,ILW23,GGNS23,CHR24,Li24,CL24,KP24,GC25,LZ25} dedicated to a systematic algorithmic and complexity-theoretic study of range avoidance, and its connections to other areas such as explicit constructions, circuit lower bounds, and total search problems in NP (TFNP).
We refer the reader to~\cite{Korten25} for an excellent overview.

Special attention has been devoted to the case where each output $C_i$ is computed by an $\NC_0^t$-circuit (e.g.,~\cite{RSW22, GLW25, GGNS23, KPI25,LZ25}). Recall that $\NC_0^t$ denotes circuits that compute their outputs from a subset of at most $t$ input bits. We use $\NC_0^t$-Avoid the denote this special class of range avoidance instances henceforth. In particular, when $t = 2$, $\NC_0^2$-Avoid turns out to be easy \cite{GLW25} so long as $m\ge n + 1$. When $t \ge  3$, it has been identified that the \emph{stretch} (i.e., how $m$ compares with $n$) of the circuit $C$ is closely tied to the hardness of the problem. \cite{GLW25} offered a $\mathbf{FP}^\NP$ algorithm to solve $\NC_0^t$-Avoid whenever $m \gg n^{t-1}$, this was subsequently improved in \cite{GGNS23}, where quantitatively a ``log'' factor was shaved and qualitatively the $\NP$ oracle removed. \cite{KPI25} gave an algorithm for $\NC_0^t$-Avoid whenever $m\gg n^{t/2}$ for even $t$. Equally important to algorithmic developments are complexity-theoretic barriers: it was shown in \cite{RSW22, GGNS23} that solving $\NC_0^t$-Avoid with sufficiently small stretch ($m \approx n+n^{o(1)}$ for $t = 4$ and $m \approx n + n^{2/3}$ for $t = 3$) implies breakthrough circuit lower bounds, a widely acknowledged challenge in circuit complexity.

\paragraph*{Range avoidance for circuit with smaller stretch.} We now describe our results for the range avoidance problem. We start with a polynomial-time algorithm.

\begin{restatable}{theorem}{PolyAlgoForAvoid}\label{thm: poly_algo_for_Avoid}
There is a universal constant $c_{\text{avoid}}$ such that the following holds.
This is a deterministic algorithm which, given $t\ge 3$ and  an $\NC^0_t$ circuit $C:\ZO^n\to \ZO^m$ as input, outputs some $y\in \ZO^m\setminus \range(C)$ in time $n^{O(t)}$ whenever $m\ge c_{\text{avoid}}^t\cdot n^{(t-1)/2}\log n$.
\end{restatable}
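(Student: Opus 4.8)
The plan is to invoke \Cref{thm: main odd arity} with a bit alphabet and to derandomize it by brute-force enumeration over the seeds of the generator producing the almost-independent distribution. First, observe that an $\NC^0_t$ circuit $C:\ZO^n\to\ZO^m$ is a special case of the circuits handled by \Cref{thm: main odd arity}: each output reads at most $t$ input bits and is therefore (trivially) computed by a $t$-query decision tree, with the $n$ input bits playing the role of ``$n$'' there. I would instantiate \Cref{thm: main odd arity} with $\Sigma=\ZO$ (so $w=1$), a sufficiently small constant $\eps$, and $k=\Theta(t\log n)$ chosen large enough that $k\ge t\log n$. Then the hypothesis on $\eta$ becomes $\eta\le 2^{-\Theta(t\log n)}=n^{-\Theta(t)}$, which is met by a standard $\eta$-almost $k$-wise independent distribution over $\ZO^m$ sampled efficiently from a seed of length $O(k+\log(1/\eta)+\log\log m)=O(t\log n)$.

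Given this, the algorithm is: enumerate all $2^{O(t\log n)}=n^{O(t)}$ seeds; for each, form the corresponding candidate string $b\in\ZO^m$ and run the certification algorithm of \Cref{thm: main odd arity} on the pair $(C,b)$, which costs $\mathrm{poly}(m,n^{O(t)})=n^{O(t)}$ time since $m=n^{O(t)}$; output the first $b$ that gets certified. This is deterministic. For correctness, whenever $b$ is certified we are guaranteed $\min_x\Delta(C(x,*),b)\ge \tfrac12-\eps>0$, so $b\notin\range(C)$ and $b$ is a valid answer; and since \Cref{thm: main odd arity} guarantees that a $1-1/\mathrm{poly}(n)$ fraction of seeds produce a certified string, at least one exists and the enumeration finds it. The total running time is $n^{O(t)}$.

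The remaining point is to check that the stretch hypothesis $m\ge c_{\text{avoid}}^t\, n^{(t-1)/2}\log n$ is enough. Plugging $k=\Theta(t\log n)$ (so that $\tfrac{n\log n}{k}=\Theta(n/t)$) into \Cref{thm: main odd arity} \emph{verbatim} would ask for $m$ on the order of $n\cdot(n/t)^{t/2-1}\cdot\log n\cdot 2^{O(t)}=C^t n^{t/2}\log n$; this already recovers the even-$t$ bound of \cite{KPI25} and, pleasingly, extends it to odd $t$ with no extra work, but it is a $\sqrt n$ factor short of the claim. To gain this factor I would exploit that $\NC^0_t$ gates are \emph{non-adaptive}, which should let the Fourier-analytic reduction to $t$-XOR underlying \Cref{thm: main odd arity} run at effective arity $t-1$ rather than $t$: writing each gate in its Fourier expansion, the gates whose degree-$t$ coefficient is not too small are essentially parities and can be dealt with directly (a near-linear circuit has a low-dimensional affine range, so finding a remote point there is linear algebra), while for the remaining gates the degree-$t$ contribution is small enough to be absorbed into the error budget and the refutation only needs to control Fourier levels up to $t-1$. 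Running \Cref{thm: main odd arity} with effective arity $t-1$ (and $k=\Theta(t\log n)\ge (t-1)\log n$) requires only $m\ge c_{\text{remote}}\cdot n\cdot(n/t)^{(t-1)/2-1}\cdot\log n\cdot 2^{O(t)}$, whose $n$-exponent is $1+\tfrac{t-1}{2}-1=\tfrac{t-1}{2}$; absorbing the $\mathrm{poly}(t)$ and $2^{O(t)}$ factors into one constant raised to the power $t$ yields the stated bound, and this also explains why we need $t\ge 3$ (so that $t-1\ge 2$).

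I expect the genuinely substantive step to be this last one: showing rigorously that, inside the reduction to $t$-XOR, the degree-$t$ Fourier mass of the $\NC^0_t$ gates can be eliminated — by thresholding each gate's top coefficient, handling the ``heavy'' gates by a linear-algebraic (low-dimensional range) argument and absorbing the ``light'' ones into the error — so that the refutation genuinely operates one arity level down, with only a $2^{O(t)}$ multiplicative loss and uniformly over all $m$ gates. This is the non-adaptive counterpart of the gap between \Cref{thm: adaptive-ds-lb} and \Cref{thm: nonadaptive-ds-lb}, and it is where the ``principled Fourier analysis'' highlighted in the introduction does the real work. The seed-enumeration/derandomization part, by contrast, is routine.
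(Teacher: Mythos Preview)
Your overall plan matches the paper's: enumerate seeds of an explicit pseudorandom generator, and for each candidate $b$ run the polynomial-time certification that underlies \Cref{thm: main odd arity}; the substantive issue is exactly the one you flag, namely shaving the extra $\sqrt{n}$ by pushing the refutation down to arity $t-1$. Your sketch of that step is in the right direction but is imprecise in a way that matters.

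The thresholding idea (``heavy'' top coefficient $\Rightarrow$ treat as near-parity via linear algebra; ``light'' $\Rightarrow$ absorb) has a gap: a $t$-junta with top Fourier coefficient, say, $3/4$ is not a parity, so no affine-subspace argument applies to it, yet its degree-$t$ contribution is too large to ``absorb.'' The paper's fix is a sharp dichotomy rather than a threshold: for a Boolean $t$-junta $g:\{\pm 1\}^t\to\{\pm 1\}$, the coefficient $\widehat{g}_{[t]}$ is a multiple of $2^{1-t}$, so either $|\widehat{g}_{[t]}|=1$ (exactly XOR/NXOR) or $|\widehat{g}_{[t]}|\le 1-2^{1-t}$. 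Hence the algorithm first checks whether at least $n+1$ output gates are exact parities; if so, Gaussian elimination produces a nontrivial linear relation among those outputs and any string violating it lies outside $\range(C)$. Otherwise, discard the $\le n$ parity gates (an $o(1)$ fraction). For the remaining gates the degree-$t$ term contributes at most $1-2^{1-t}$ to $\langle C(x),b\rangle$; it is not ``absorbed'' but merely bounded away from $1$, and the job of the lower-arity refutations is to close this $2^{1-t}$ gap. One then invokes the XOR refutation theorem (\Cref{thm: refute semi random xor almost independence}) directly at each arity $|\alpha|\le t-1$ with target value $2^{-2t}$, which is feasible once $m\ge c^t n^{(t-1)/2}\log n$; summing $2^t-1$ such bounds plus $1-2^{1-t}$ stays below $1$. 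Note you cannot literally ``run \Cref{thm: main odd arity} at arity $t-1$'' since the gates still depend on $t$ bits; you must bypass it and call the underlying XOR refutation per Fourier level. Finally, the paper enumerates seeds of a small-biased generator rather than an almost $k$-wise independent one, but either works here since the parity case is handled by Gaussian elimination rather than by the distribution.
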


\Cref{thm: poly_algo_for_Avoid} improves the state-of-the-art $\NC_0^t$-Avoid algorithms for all $t\ge 3$ \cite{KPI25}. Before our result, the best deterministic algorithm can only handle stretch $m\gg n^{\left\lceil t/2\right\rceil}$, while the best $\mathbf{FP}^\NP$ algorithm can only handle $m\gg n^{\frac{t}{2} - \frac{t-2}{2(t+2)}}$ for odd $t$'s. Our algorithm improves on both results. We note that our Theorem~\ref{thm: poly_algo_for_Avoid} in turn answers \cite[Item (a) of Open Problem 3]{KPI25}.

Plug in $t = 3$ in \Cref{thm: poly_algo_for_Avoid}: we see that we now only need the stretch to be super-linear by a single log factor (i.e.~$m\gg n\log n$) to solve $\NC_0^3$-Avoid efficiently (in polynomial time, without an $\NP$-oracle). It was known in~\cite{GGNS23} that solving $\NC_0^3$-Avoid with stretch $n + O(n^{2/3})$ implies constructions of rigid matrices (and thus super-linear circuit lower bounds). Thus, for $t=3$, the ``algorithmic threshold'' has been brought close to the complexity-theoretic barrier. Unfortunately, the gap between algorithms and complexity-theoretic barriers remains large for $t\ge 4$. It is therefore a potentially fruitful direction to further understand algorithms and barriers for $\NC_0^t$ with $t\ge 4$, with the ultimate goal of determining the ``computational threshold'' for the problem. This might be akin to the determination of the ``spectral threshold'' of $n^{k/2}$ for semi-random $k$-XOR refutation: only above $m = n^{k/2}$ do we know of efficient algorithms for refutation. However, as we have seen in \Cref{thm: poly_algo_for_Avoid}, the threshold of $n^{t/2}$ is not a barrier for $\NC^t_0$-Avoid, and there are potentially more algorithmic surprises to be found.

\paragraph*{Sub-exponential algorithms and even smaller stretch.} In the sub-exponential time regime, a recent work of Li and Zhong~\cite{LZ25} exhibits an algorithm for $\NC^0_t$-Avoid with near-linear stretch.  Specifically, for any $\epsilon>0$, their algorithm runs in time $2^{n^{1-\frac{\epsilon}{t-1}+o(1)}}$ when stretch $m=n^{1+\epsilon}$. We provide a slightly more efficient algorithm based on our \Cref{thm: main odd arity}. 
\begin{restatable}{theorem}{SubEXPAlgoForAvoid}
There is an algorithm which, given $t\ge 4,\epsilon>0$, and an $\NC^0_t$ circuit $C:\ZO^n\to \ZO^m$ as input, outputs some $y\in \ZO^m\setminus \range(C)$ in time $2^{n^{1-\frac{2\epsilon}{t-3}+o(1)}}$ whenever $m\ge n^{1+\epsilon}$.
\end{restatable}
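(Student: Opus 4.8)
The plan is to run exactly the certification-based avoidance algorithm that underlies \Cref{thm: poly_algo_for_Avoid}, but to feed \Cref{thm: main odd arity} an almost-$k$-wise independent distribution whose independence parameter $k$ is taken \emph{polynomially large in $n$}, rather than at its minimal admissible value $\Theta(t\log n)$. Increasing $k$ weakens the stretch required by \Cref{thm: main odd arity}, at the price of a longer seed for the pseudorandom generator producing the candidate strings $\bm{b}$; choosing $k$ to just meet the hypothesis $m \ge n^{1+\epsilon}$ will give the stated running time. (Throughout we identify $\ZO$ with $\{\pm1\}$, and we take the distance parameter $\eps$ of \Cref{thm: main odd arity} to be the fixed constant $\tfrac14$, so that $\eps^{-4}=O(1)$.)

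In more detail, the argument behind \Cref{thm: poly_algo_for_Avoid} invokes \Cref{thm: main odd arity} with $\Sigma=\ZO$ (so $w=1$), routed through the $\NC^0_t$-specific reduction that improves the ``locality exponent'' in the stretch requirement from $t/2$ to $(t-1)/2$ (this is exactly the source of the $n^{(t-1)/2}$ in \Cref{thm: poly_algo_for_Avoid} versus the $n^{t/2}$-type bound one gets from \Cref{thm: main odd arity} alone). Carried out with a general independence parameter $k\in[t\log n,\,n]$, this yields: for every $\NC^0_t$ circuit $C:\ZO^n\to\ZO^m$ with
\[
m \ \ge\ c^t\cdot n\cdot\Bigl(\tfrac{n\log n}{k}\Bigr)^{(t-3)/2}\cdot \polylog(n)\,,
\]
a deterministic algorithm running in time $2^{O(k)}\cdot n^{O(t)}$ that outputs some $y\in\ZO^m\setminus\range(C)$. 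The $2^{O(k)}$ factor is the cost of enumerating the seeds of the $\eta$-almost $k$-wise independent generator handed to \Cref{thm: main odd arity}: its seed length is $O(k+\log(1/\eta)+\log n)$, and since $w=1$ and $\eps=\tfrac14$, the requirement $\eta\le(2^{-t}\cdot n^{-k/\log n})^{O(1)}$ only forces $\log(1/\eta)=\Theta(k)$, because $n^{-k/\log n}=2^{-\Theta(k)}$; hence the seed has length $O(k)$. Each of the $2^{O(k)}$ candidate strings $\bm{b}$ is run through the $\mathrm{poly}(m,n^{O(t)})$-time certifier of \Cref{thm: main odd arity}, and the first certified candidate is output; by \Cref{thm: main odd arity} at least one candidate is certified, and any certified $\bm{b}$ satisfies $\Delta(C(x),\bm{b})\ge\tfrac14>0$ for every $x$, so $\bm{b}\notin\range(C)$.

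It remains to choose $k$ so that the hypothesis $m\ge n^{1+\epsilon}$ implies the displayed inequality. Solving $(n/k)^{(t-3)/2}\le n^{\epsilon}/(c^t\,\polylog(n))$ for $k$, it suffices to take $k=\bigl\lceil n^{\,1-\frac{2\epsilon}{t-3}}\cdot n^{o(1)}\bigr\rceil$, where the $n^{o(1)}$ factor (for $t$ fixed) absorbs $c^t$, the $\polylog(n)$ terms, and the rounding. For $\epsilon<(t-3)/2$ this $k$ lies in the window $[t\log n,\,n]$ once $n$ is large; for $\epsilon\ge(t-3)/2$ we have $m\ge n^{1+\epsilon}\ge c^t\,n^{(t-1)/2}\log n$ and may instead invoke \Cref{thm: poly_algo_for_Avoid} directly (time $n^{O(t)}$, within the claimed bound), while for $t=3$ the hypothesis already exceeds the $\Theta(n\log n)$ threshold of \Cref{thm: poly_algo_for_Avoid} --- which is why we assume $t\ge4$. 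Finally, since $k$ is polynomial in $n$ and thus $k\gg t\log n=\Omega(\log m)$, the total running time is $2^{O(k)}\cdot n^{O(t)}=2^{O(k)}=2^{\,n^{1-\frac{2\epsilon}{t-3}+o(1)}}$, as claimed.

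The one point requiring care is the seed-length bookkeeping: one must check that pushing $k$ up to a polynomial in $n$ does not inflate $\log(1/\eta)$ beyond $\Theta(k)$, which is exactly the cancellation $n^{-k/\log n}=2^{-\Theta(k)}$ built into the hypothesis of \Cref{thm: main odd arity} (had the required $\eta$ instead scaled like $2^{-k^2}$, the seed --- hence the exponent in the running time --- would degrade). Everything else is a direct instantiation of the machinery already assembled for \Cref{thm: poly_algo_for_Avoid} and \Cref{thm: main odd arity}; in particular, summing the certifier's $\mathrm{poly}(m,n^{O(t)})$ cost over the $2^{O(k)}$ candidates stays within $2^{O(k)}$.
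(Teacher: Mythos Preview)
Your proposal is correct and follows essentially the same approach as the paper: both rerun the certification-based algorithm underlying \Cref{thm: poly_algo_for_Avoid} but push the independence/Kikuchi parameter up to $n^{1-\frac{2\epsilon}{t-3}+o(1)}$ so that the arity-$(t-1)$ refutation requirement drops to $m\ge n^{1+\epsilon}$, and then enumerate the pseudorandom seeds (the paper uses exact $2\lceil r\log n\rceil$-wise independence via \Cref{thm: refute semi random xor}, you use almost $k$-wise via \Cref{thm: main odd arity}---a cosmetic difference absorbed in the $n^{o(1)}$ slack). One small caveat: the $\poly(m,n^{O(t)})$ running time stated in \Cref{thm: main odd arity} is accurate only at the minimal $k\approx t\log n$; for your polynomial $k$ the underlying Kikuchi-level spectral computation costs $n^{O(k/\log n)}=2^{O(k)}$, but this is harmless since $2^{O(k)}$ seeds times a $2^{O(k)}$ certifier is still $2^{O(k)}$.
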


\subsection{Our techniques} \label{sec: intro technique}

In this subsection, we give an overview of our proof. We specialize to $\NC_0^t$ circuits (corresponding to non-adaptive data structures) $C:\{\pm 1\}^{n}\to \{\pm 1\}^m$. Note that here we also switch to the $\{\pm 1\}$ representation of bits, to facilitate Fourier analysis.

\subsubsection{Reduction to $t$-XOR Refutation}

Let $\bm{b}\in \{\pm 1\}^m$ be a randomly drawn string, claimed to be in the range of $C$. This means there is a string $x\in \{\pm 1\}^n$ such that $C(x) = \bm{b}$. Note that this already gives a CSP instance on variables $x$, where each output bit $i$ places a constraint that $C(x)_i = b_i$. Also recall that $C$ is a $\NC_0^t$ circuit, meaning that the constraints have locality $t$. Thus, the problem of refuting $\bm{b}\in \mathrm{range}(C)$ is exactly that of refuting a $t$-CSP instance induced from $C$ and $\bm{b}$.

\paragraph{The ``XOR principle.''} 
Even better, we can reduce the problem to strong refutation of $t$-XOR instances, following the so-called XOR principle (see, e.g.,~\cite{Feige07,GKM}). Toward that, write
\begin{align}
    \E_{\bm{i}\sim [m]}[ C(x)_{\bm{i}}\cdot b_{\bm{i}} ] = 1. \label{equ: intro correlation bound}
\end{align}
Use $C^i(x)$ to denote $C(x)_i$ for every $i \in [m]$. Given $C^i$, we may look at its Fourier expansion:
\begin{align*}
    C^i(x) = \sum_{\alpha\subseteq [n]} \widehat{C^i_\alpha} \cdot x_\alpha,
\end{align*}
where we use the convention that $x_{\alpha} = \prod_{i:\alpha_i = 1}x_i$.

Since each $C^i$ is a $t$-junta, it has at most $2^{t}$ non-zero Fourier characters, each of degree at most $t$. All of them are multiples of $2^{-t}$ and do not exceed $1$ in absolute value. We can therefore order the characters in an arbitrary list $C_{\alpha(i, 1)}^i x_{\alpha(i,1)},\dots,C_{\alpha(i, 2^t)}^i x_{\alpha(i, 2^t)}$. Use linearity of expectation to re-write \Cref{equ: intro correlation bound}:
\begin{align}
    1=\E_{\bm{i}\sim [m]}[ C(x)_{\bm{i}}\cdot b_{\bm{i}} ] = \sum_{\ell=1}^{2^t} \E_{\bm{i}\sim [m]}[ C^{\bm{i}}_{\alpha(\bm{i}, \ell)} x_{\alpha(\bm{i},\ell)} \cdot b_{\bm{i}} ]. \label{equ: intro correlation bound expanded}
\end{align}
It is clear that for each $\ell$, the inner expectation of \Cref{equ: intro correlation bound expanded} can be viewed as a (weighted) $t$-XOR instance. By the averaging principle, the claim of $C(x) \equiv \bm{b}$ implies that at least one of the $t$-XOR instances has value at least $2^{-t}$. Hence, to refute the claim $C(x)=\bm{b}$, it suffices to \emph{strongly} refute \emph{every} $t$-XOR instance produced by the reduction.

\paragraph*{Extensions to decision trees.} \Cref{equ: intro correlation bound expanded} is the starting point of all our reductions. We now mention two ways we extend the basic observation of \Cref{equ: intro correlation bound expanded} and how they lead to our improved results in various contexts.

We will begin with the case of decision trees. While the above argument behind \Cref{equ: intro correlation bound expanded} remains mostly valid if the circuits $C:\{\pm 1\}^n\to \{\pm 1\}^m$ 
in question are depth-$t$ decision trees operating on Boolean inputs, for applications to data structure lower bounds, it is also of interest to investigate decision trees querying non-Boolean inputs $x\in \Sigma^n$. There is a straightforward way of converting a $t$-query decision tree on input $x\in \Sigma^n$, to a $(tw)$-query decision tree on input $\widetilde{x}\in \{\pm 1\}^{tw}$ (recall that $w = \log|\Sigma|$ denotes the word length). After the conversion, one can run the reduction of \Cref{equ: intro correlation bound expanded} naively but the result would be a collection of $(tw)$-XOR instances. Nevertheless, there are additional structures in the conversion we can take advantage of: namely the resultant decision tree only has $t$ steps of adaptivity and in every step it can only query a (pre-determined) bundle of $w$ bits, instead of arbitrarily chosen $w$ bits. In \Cref{sec: reduction for DTs}, we show how to exploit these structural properties and perform a more efficient reduction, thereby building stepping stones toward Theorems~\ref{thm: main odd arity} and \ref{thm: data structure lb}.

\paragraph*{Improvement from handling degree-$t$ terms differently.} We now present the common idea behind the improvements of Theorems~\ref{thm: adaptive-ds-lb}, \ref{thm: nonadaptive-ds-lb} and \ref{thm: poly_algo_for_Avoid}. Recall that a general lesson we learned from the $t$-CSP refutation literature (e.g.~\cite{AGK21,GKM,HsiehKM23}) is that efficient refutation of semi-random $t$-XOR is possible whenever there are more than $m\gg n^{t/2}$ constraints. This quantitative relationship is also reflected in our \Cref{thm: main odd arity}. To break through the $n^{t/2}$ threshold, our idea starts by re-writing \Cref{equ: intro correlation bound expanded} as
\begin{align}
    1=\E_{\bm{i}\sim [m]}[ C(x)_{\bm{i}}\cdot b_{\bm{i}} ] = \E_{\bm{i}\sim [m]}\left[ \sum_{\alpha: |\alpha| = t} C^{\bm{i}}_{\alpha} x_{\alpha} \cdot b_{\bm{i}} \right] + \E_{i\sim [m]} \left[ \sum_{\alpha:|\alpha| < t} C_\alpha^{\bm{i}} x_\alpha\cdot b_{\bm{i}}.\right]. \label{equ: intro correlation bound high low decompose}
\end{align}
With only $m\approx n^{(t-1)/2}$ output bits, we can strongly refute the second term of \Cref{equ: intro correlation bound high low decompose} (or, equivalently, certify an upper bound of $\eps$ for the second term, where $\eps$ can be made arbitrarily small). Therefore, to refute \eqref{equ: intro correlation bound high low decompose}, it is sufficient to show a barely non-trivial (i.e., $1-\eps$) upper bound for the first term (under the condition that $m \ll n^{t/2}$, as otherwise the problem is easy).

In fact, we \emph{automatically} get a $1-2^{-t}$ upper bound on the expectation if each output of $C$ is a \emph{non-parity} $t$-junta (this corresponds to $\NC_0^t$ circuits and non-adaptive data structures): to see this, note that we have $\hat{C}_{\alpha}^i < 1-2^{-t}$ for every $i\in [m]$, if $\alpha\subseteq [n]$ is the $t$ coordinates that the non-parity function $C^i(x)$ depends upon. On the other hand, if there are more than $n$ output bits that are parity of the inputs, those outputs must lie in an affine subspace of co-dimension at least $1$. Consequently, they fail to support any low-biased distribution. These pieces of observation directly lead to our proofs of Theorems~\ref{thm: nonadaptive-ds-lb} and \ref{thm: poly_algo_for_Avoid}, solving \cite[Item (a) of Problem 3]{KPI25} along the way.

When each output bit of $C^i$ is computed by a depth-$t$ decision tree over \emph{Boolean} inputs, it is no longer true that $\sum_{\alpha:|\alpha| = t} C^i_{\alpha} < 1-\Omega(1)$. A simple example is given by the \emph{index} function: consider the function $f:\{\pm 1\}^{3}\to \{\pm 1\}$ defined as $f(x_1,x_2,x_3) = x_{\frac{x_1+5}{2}}$. This $f$ is computed by a decision tree of depth $2$ and one can verify that $\hat{f}_{\{x_1,x_2\}} + \hat{f}_{\{x_1, x_3\}} = 1$. There are many more such non-parity examples. Thus, even if there are more than $n$ output bits of this form, one cannot use a simple linear algebra argument to address the challenge.

Nevertheless, for $t$-query decision trees we still have the bound $\sum_{\alpha:|\alpha|=t} |C^i_{\alpha}| \le 1$. Pretend for the moment that each $i\in [m]$ comes with a single character $C^i_{\alpha}$ of degree $t$ and with weight $|C^i_{\alpha}| = 1$. In this case, it suffices to prove an upper bound of $1-\eps$ for standard $t$-XOR instances. This turns out to be a question already considered in \cite{KPI25}, and our solution will be similar to theirs. Both solutions take inspiration from related developments in the literature on semi-random $t$-XOR refutation and hypergraph Moore bounds \cite{HsiehKM23}. We refer the reader to the proof of \Cref{thm: adaptive-ds-lb} for more details. 

\subsubsection{Refuting Semi-Random $t$-XOR with Bounded Independence}

We have presented an overview of the reduction from general $t$-CSP refutations to $t$-XOR refutations. The remaining puzzle is to show that efficient refutation of semi-random $t$-XOR instances is possible even if the ``right-hand-side'' vector $\bm{b}$ is drawn from a distribution with limited independence. We will briefly describe the idea here and defer the formal treatment to \Cref{sec: refute}.

Our analysis is based on \cite{HsiehKM23}, who used a trace method to strongly refute $t$-XOR instances when $\bm{b}$ is drawn from the uniform distribution. Very roughly speaking, the analysis of \cite{HsiehKM23} boils down to understanding the spectral norm of a random matrix $A_{\bm{b}} = \sum_{i=1}^m \bm{b}_i A_i$ where $A_i$'s are fixed matrices induced from the ``left-hand-side'' of the $t$-XOR instance (the topology between variables and constraints), and $\bm{b}$ is the ``right hand side.'' It was shown in \cite{HsiehKM23} that $\|A_{\bm{b}}\|_{2\to 2}$ is small with high probability when $\bm{b}$ is uniformly chosen. This was proved via analyzing the quantity $\E_{\bm{b}}[ \mathrm{trace}( (A_{\bm{b}})^\ell ) ]$ for some $\ell \approx t\log(n)$. The analysis was done by \emph{treating the matrix trace in question as a degree-$\ell$ polynomial of $\bm{b}$}. 
We can therefore immediately deduce that the analysis of \cite{HsiehKM23} works equally well when we draw $\bm{b}$ from an $\ell$-wise independent distribution instead. Refer to \Cref{sec: refute} for a formal exposition of \cite{HsiehKM23}'s proof and how we modify their proof for our purpose.

\paragraph*{Remark: weighted vs.~unweighted $t$-XORs.} In \Cref{equ: intro correlation bound expanded}, observe that for each fixed $\ell$, the expectation can be understood as a weighted $t$-XOR instance where each constraint $x_{\alpha(i,\ell)} = b_i$ comes with a weight $C_{\alpha(i,\ell)}^i \in [-1, 1]$. This is slightly different from the standard $t$-XOR instances, where every constraint has a unit weight. Naturally, one wonders how it affects the analysis and the conclusion. Intuitively, having a bounded weight does not make the task of refutation harder: for example, when every weight $|C_{\alpha(i,\ell)}^i|$ is bounded away from $1-\Omega(1)$, provably (and without any sophisticated proof) the $t$-XOR instance has a value bounded away from $1$: this is the case even if all constraints can be simultaneously satisfied!

Indeed, looking into the proofs from \cite{GKM, HsiehKM23}, it is easy to see that the argument works equally well if every constraint comes with a bounded weight. If one insists on using \cite{GKM,HsiehKM23} as a black box in subsequent analyses, another way to get around the technicality is by the observation that each $C^i_{\alpha(i,\ell)}$ in \Cref{equ: intro correlation bound expanded} is a multiple of $2^{-t}$, allowing us to split the coefficients into multiple coefficients with weight $\pm 2^{-t}$ if needed. Finally, to handle possible negative weights, we can flip the corresponding bit of the right-hand-side and change every weight to a positive one. Since flipping each bit or not is independent of the realization of $\bm{b}$ (as the weights are solely determined by the circuits $C$), all pseudorandomness properties considered in this paper regarding the right-hand-side (e.g., $k$-wise independence, $\eps$-biased, etc) are preserved.
After this reduction, we obtain $t$-XOR instances with unit weights on each individual constraint but a $2^{-t}$ ``global'' weight on the whole instance. We then deploy the analysis for the unweighted case.




\section{The XOR Principle and Reductions} \label{sec: reduction for DTs}

To prove Theorem~\ref{thm: main odd arity}, we will reduce the question of analyzing $C$ to analyzing a related family of semi-random $t$-XOR instances. Let us begin with necessary pieces of notation.

\paragraph*{Notation.} 
A $k$-uniform hypergraph $\mathcal{H}$ over a vertex set $V$ is a collection of hyperedges over $\mathcal{H}$. Each hyperedge $C\in \mathcal{H}$ is simply a $k$-subset of $V$, denoted by $e_C$. We note that some prior works (implicitly) assume there are no parallel edges in $\mathcal{H}$, and they use $C$ to both index the hyperedge and refer to the subset of vertices $C$ connects. We introduce the notation of $e_C$ to handle possible multiple edges that arise in our reduction.

Having defined hypergrpahs, the definition of $k$-XOR instances is immediate.

\begin{definition}\label{def: k XOR instance}
        For any $k\ge 1$, a $k$-XOR instance $\psi_{\mathcal{H},b}$ consists of a $k$-uniform hypergraph $\mathcal{H}$ over the vertex set $[n]$ and a signing pattern $b\in \{\pm 1\}^\mathcal{H}$. We identify vertices of $\mathcal{H}$ with variables of $\psi$. For any assignment $x$, the value of $\psi$ on $x$ is $\psi_{\mathcal{H},b}(x) = \E_{\bm{C}\sim H}[ x_{e_{\bm{C}}}\cdot b_{\bm{C}}]$. We also denote
        \begin{align*}
            \mathrm{val}(\psi_{\mathcal{H},b}) = \max_{x\in \{\pm 1\}^n} \left\{ \lvert \psi_{\mathcal{H},b}(x) \rvert \right\}.
        \end{align*}
\end{definition}

We will frequently analyze a collection of $k$-XOR instances $\{\psi_{\mathcal{H},b}\}_{b}$ with the same topology $\mathcal{H}$ but with different signing patterns $b$ (this is known as the ``semi-random'' CSP in literature). We therefore define $k$-XOR scheme as a ``partial'' $k$-XOR instance with fixed $\mathcal{H}$ but indefinite $b$.

\begin{definition}\label{def: k XOR scheme}
    For any $k\ge 1$, a $k$-XOR scheme $\psi_{\mathcal{H},*}$ is a $k$-uniform hypergraph $\mathcal{H}$. Each vertex of $\mathcal{H}$ is a variable of $\psi$, and each hyperedge $C\in \mathcal{H}$ is a constraint among the $k$ variables in $e_C$ with undetermined signing. For every $b\in \{\pm 1\}^{\mathcal{H}}$, $\mathcal{H}$ and $b$ jointly determine a $k$-XOR instance $\psi_{\mathcal{H},b}$.
\end{definition}

Lastly, we remark that the above notations extend naturally to weighted $k$-XORs $\psi_{\cH,b}$, where $b\in [-1,1]^{\cH}$ now represents a real-valued vector.

\subsection{The Main Reduction}

With the notation and definitions in place, we are ready to state the main reduction presented in this section.

\begin{lemma}\label{lemma: XOR reduction}
    Let $w,t,n,m\ge 1$ be parameters and $\Sigma$ a finite alphabet of size $2^w$. Let $C:\Sigma^n\to \{\pm 1\}^m$ be a multi-output function, computed by a $t$-query adaptive decision tree over the input $x\in \Sigma^n$. Then, there is a circuit $\widetilde{C}: \{\pm 1\}^{nwt} \to \{\pm 1\}^m$, and an ensemble of $4^{tw}$ $t$-XOR schemes $\psi_{\mathcal{H}^1,*},\dots, \psi_{\mathcal{H}^{4^{tw},*}}$, all over the variable set $[ntw]$. The following claims hold for $\widetilde{C}$ and the ensemble.
    \begin{itemize}
        \item $\mathrm{Range}(\widetilde{C}) \supseteq \mathrm{Range}(C)$. 
        \item For every $b\in \{\pm 1\}^m$, if $d(\widetilde{C}(y, *), b) \ge \frac{1}{2} + \eps$ for some $y\in \{0, 1\}^{wnt}$, then there exists $\ell\in [4^{tw}]$ such that $\mathrm{val}(\psi_{\mathcal{H}^\ell, b}) \ge \frac{\eps}{4^{tw}}$.
    \end{itemize}
    Furthermore, $\widetilde{C}$ and $\psi_{\mathcal{H}^\ell,*}$ can be computed in time $2^{O(tw)}\cdot (n + m)$ from $C$.
\end{lemma}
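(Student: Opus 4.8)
The plan is to first reduce the non-Boolean decision tree to a structured Boolean decision tree, then apply a Fourier/"XOR principle" decomposition to that Boolean object, taking care to exploit the block structure so that we get $t$-XOR (not $(tw)$-XOR) instances. For the first step, I would describe $\widetilde{C}$ explicitly: replace each symbol $x_j \in \Sigma$ by its $w$-bit encoding, so an input $x \in \Sigma^n$ becomes $\widetilde{x} \in \{\pm 1\}^{nw}$; the $t$-query adaptive decision tree for $C_i$ on $x$ becomes a decision tree on $\widetilde{x}$ that proceeds in (at most) $t$ rounds, where in round $r$ it reads the entire $w$-bit block corresponding to the symbol it would have queried. (The extra factor of $t$ in the $nwt$ arity is a bookkeeping convenience — one can "pad" so that there are $nt$ logical positions, each holding $w$ bits, the $r$-th copy being reserved for the $r$-th round; this ensures the query rounds touch disjoint blocks, which simplifies the Fourier bookkeeping. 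Setting $\widetilde{C}$ to ignore the padding coordinates and to only depend on $\widetilde{x}$ gives $\mathrm{Range}(\widetilde{C}) \supseteq \mathrm{Range}(C)$ immediately.)

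The core of the argument is the Fourier decomposition. For each output $i \in [m]$, the function $\widetilde{C}_i : \{\pm 1\}^{nwt} \to \{\pm 1\}$ is a junta on at most $tw$ coordinates, but more importantly those coordinates are organized into at most $t$ blocks of size $w$, one block per adaptivity round. Write $\widetilde{C}_i(\widetilde x) = \sum_{\alpha} \widehat{(\widetilde{C}_i)}_\alpha\, \widetilde x_\alpha$. The key structural claim — which I would state and verify by a short induction on the tree depth — is that every nonzero Fourier character $\alpha$ of $\widetilde{C}_i$ is supported on at most $t$ of the $w$-blocks, i.e.\ $\alpha$ "touches" at most $t$ blocks (intuitively: a monomial in the Fourier expansion of a depth-$t$ decision tree over a partitioned input can only involve variables along a single root-to-leaf path, and the path reads one block per round). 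Now group the characters by which $\le t$ blocks they touch: there are at most $\binom{nt}{t} \le (nt)^t$ choices of block-tuple, but the decomposition we want is indexed more coarsely. Following \eqref{equ: intro correlation bound expanded}, I would instead enumerate, for each output $i$, its (at most $2^{tw}$) nonzero characters $\alpha(i,1),\dots$ in a canonical order; for each fixed ordinal $\ell \in [2^{tw}]$, collecting the $\ell$-th character of every output gives one weighted hypergraph — except the characters $\alpha(i,\ell)$ have degree $\le tw$, not $=t$. To force uniformity $t$, use the block structure: each such $\alpha$ touches $\le t$ blocks, and within each touched block it selects some nonempty subset; I would map $\alpha$ to the $t$-subset of *block-indices* it touches (padding with dummy blocks to exactly $t$ if fewer), and record the within-block subset pattern (one of $(2^w-1)^{\le t} \le 2^{tw}$ patterns, absorbed into the $4^{tw}$ count) as part of the scheme index. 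The hyperedge $e_C$ in $\psi_{\mathcal{H}^\ell}$ then connects the $t$ block-indices; since the "within-block pattern" is fixed per scheme, the value $\widetilde x_\alpha = \prod \widetilde x_j$ is, up to that fixed pattern, a product of one "super-variable" per block — but we actually keep the variable set as $[nwt]$ and let the hypergraph be on the $w$-bit granularity. (I would reconcile this cleanly by declaring the variable set of each $\psi_{\mathcal H^\ell}$ to be $[nwt]$ and each hyperedge to be a $t$-subset — here I'd need to be slightly careful about what "$t$-uniform over $[nwt]$" means when $\alpha$ has more than $t$ literals; the cleanest fix is: define $c_{\text{remote}}$'s-style scheme index to encode the block-subset pattern, so that after fixing $\ell$ the surviving literal-set is genuinely a $t$-subset.)

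With the decomposition in hand, the second bullet follows from averaging exactly as in the introduction: if $\Delta(\widetilde C(y,*),b) \ge \tfrac12+\eps$ then $\E_{\bm i}[\widetilde C_{\bm i}(y)\, b_{\bm i}] \le 1 - 2(\tfrac12+\eps) \cdot$ — wait, more precisely $\E_{\bm i}[\widetilde C_{\bm i}(y) b_{\bm i}] = 1 - 2\Delta \le -2\eps$, so $|\E_{\bm i}[\widetilde C_{\bm i}(y) b_{\bm i}]| \ge 2\eps$; splitting this into the $\le 4^{tw}$ scheme-contributions via linearity of expectation, one of them has absolute value $\ge 2\eps/4^{tw} \ge \eps/4^{tw}$, and each scheme-contribution is (up to the fixed bounded weights $\widehat{(\widetilde C_i)}_{\alpha(i,\ell)} \in [-1,1]$, handled by the weighted-$t$-XOR remark in the excerpt) exactly $\psi_{\mathcal H^\ell, b}(y')$ for the induced assignment $y'$, hence $\mathrm{val}(\psi_{\mathcal H^\ell,b}) \ge \eps/4^{tw}$. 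The running-time bound is immediate since building $\widetilde C$ and enumerating all Fourier characters of all $m$ outputs costs $2^{O(tw)}$ per output plus $O(n+m)$ overhead.

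The main obstacle I anticipate is the bookkeeping in the middle paragraph: making the reduction land on genuinely $t$-\emph{uniform} hypergraphs rather than $(tw)$-uniform ones, while keeping the number of schemes at $4^{tw}$ and keeping the value loss only a $4^{tw}$ factor. The clean way through is the block-structure lemma (every Fourier monomial of the round-structured Boolean tree touches $\le t$ blocks), which lets us index schemes by (ordinal $\ell$) $\times$ (within-block literal pattern) and read off a $t$-subset of block-coordinates as the hyperedge; verifying that lemma by induction on decision-tree depth is routine but is the one place where the "$t$ rounds, one block per round" normalization genuinely earns its keep. Everything else — the averaging, the weighted-vs-unweighted reconciliation, the sign-flipping to kill negative weights — is either in the excerpt already or a one-liner.
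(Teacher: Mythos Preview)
Your proposal is correct and follows essentially the same approach as the paper: construct a layer-respecting $\widetilde{C}$ by duplicating the input into $t$ layers (one per adaptivity round), observe that each nonzero Fourier character of $\widetilde{C}_i$ touches at most one $w$-block per layer, then group characters by the tuple of within-block patterns $(\beta_1,\ldots,\beta_t) \in (2^{[w]})^t$ together with an ordinal $\ell \in [2^{tw}]$ enumerating the multiple characters from the same output landing in the same pattern-class, yielding $4^{tw}$ schemes. The paper resolves your variable-set hesitation cleanly by defining, for each fixed $\beta$, new block-level variables $y_{t',j} := \widetilde{x}_{T^{(t')}_j \cap \beta_{t'}}$, so each scheme is literally a $t$-XOR over (at most) $nt$ block-index variables; your induction-on-depth verification of the block-structure claim is exactly what is needed, and the paper simply asserts it from the layer-respecting form.
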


The rest of the section proves Lemma~\ref{lemma: XOR reduction}. We will first show the construction of $\widetilde{C}$ and verify Item 1 of Lemma~\ref{lemma: XOR reduction}. Then we show how to obtain the ensemble of XOR schemes from $\widetilde{C}$ with the desired property of Item 2. 

\subsection{Layer-Respecting Circuits and Layered Inputs}

We now carry out the first step of the proof. Namely, let us first modify $C$ to have a nice, ``layered'' structure and have Boolean inputs. Identify $\{\pm 1\}^w$ with $\Sigma$ arbitrarily. Then, we replace the input space from $\Sigma^n$ to $\{\pm 1\}^{(w\cdot n)\cdot t}$ with the following understanding: the input space consists of $t$ layers $T^{(1)}, T^{(2)}, \dots, T^{(t)}$, each consisting of $n$ groups $T^{(i)}_{1},\dots, T^{(i)}_n$ where every group has $w$ bits. Given an input $\tilde{x}\in \{\pm 1\}^{(w\cdot n) t}$, we will index layers (resp.~groups) by $\tilde{x}_{T^{(i)}}$ (resp.~$\tilde{x}_{T^{(i)}_j}$). The restriction of $\tilde{x}$ on a group $\tilde{x}_{T^{(i)}_j}\in \{\pm 1\}^w$ is understood to represent a symbol from $\Sigma$.

Now we shall modify $C$ to get a circuit $\tilde{C}$ operating on inputs from $\{\pm 1\}^{wnt}$. For any input $x\in \Sigma^{n}$, there will be a corresponding input $\tilde{x}\in \{\pm 1\}^{wnt}$ to be constructed momentarily. We first explain how $\tilde{C}$ simulates $C$: Recall that every output of $C$ is computed by a $t$-query decision tree. Our simple modification is the following: whenever $C$ is to make its $t'$-th query to the $j$-th symbol of $x$ (i.e.,~$x_j$), we let $\tilde{C}$ make its $t'$-th query to the $j$-th group in the $t'$-th layer of $\tilde{x}$. Namely $\tilde{x}_{T^{(t')}_{j}}$. Since $\tilde{x}_{T^{(t')}_{j}}$ represents a symbol from $\Sigma$, $\tilde{C}$ can just simulate the behavior of $C$, to either make the next query or return its decision if $t' = t$.

It should be clear that given $x\in \Sigma^{n}$, we can construct a corresponding $\tilde{x}$ in the new input space: first we convert each symbol of $x$ to the corresponding string from $\{\pm 1\}^{w}$, which gives a string $\{\pm 1\}^{wn}$. We make $t$ duplicate copies of this $(nw)$-bit string and get a layered input $\tilde{x}\in \{\pm 1\}^{nw\cdot t}$ (consisting of $t$ identical layers, in fact). By definition, we see that $C(x)\equiv \tilde{C}(\tilde{x})$. Hence, the range of $\tilde{C}$ is no less than $C$. It remains to find the appropriate ensemble of $k$-XOR schemes for $\tilde{C}$. Circuits of the form $\tilde{C}$ are called \emph{layer-respecting} circuits henceforth.

\subsection{Fourier Analysis}

This subsection carries out the second step of the reduction.

\paragraph*{Notation.} For notational clarity, from now on we will drop the ``tilde'' from $\tilde{C}$, $\tilde{x}$ and usually use $C,x$ to refer to the layer-respecting circuits and layered inputs. We inherit the interpretation of $[wnt]$ as consisting of $t$ layers, and continue to use the notation $[wnt] = \bigsqcup_{t'=1}^t T^{(t')} = \bigsqcup_{t'\in [t], j\in [n]} T^{(t')}_j$. Finally, we will typically use $b\in \{\pm 1\}^m$ to denote a string on the ``right hand side''.

\paragraph*{Satisfying constraints as a correlation bound.} With this in mind, let $b\in \{\pm 1\}^{m}$ be a Boolean string claimed to $(\frac{1}{2}-\eps)$-close of $C$. This means there is an input $x\in \{\pm 1\}^{tnw}$ such that
\begin{align}
    \left\langle C(x), b \right\rangle := \mathbb{E}_{\bm{i}\sim [m]}\left[ C(x)_{\bm{i}}\cdot b_{\bm{i}} \right] \ge 2\eps. \label{equ: perfect correlation with b}
\end{align}
Use $C^i(x)$ to denote $C(x)_i$ for every $i \in [m]$. Given $C^i$, we may look at its Fourier expansion:
\begin{align*}
    C^i(x) = \sum_{\alpha\subseteq [ntw]} \widehat{C^i_\alpha} \cdot x_\alpha
\end{align*}
where we use the convention that $x_{\alpha} = \prod_{i:\alpha_i = 1}x_i$.

To proceed, we need to make a few observations about the Fourier expansion.

\begin{itemize}
    \item The total $\ell_1$-mass of $C^i$ on the Fourier basis is bounded by $2^{tw}$ and there can be at most $4^{tw}$ non-zero characters. This follows because $C^i$ can be written as the sum of $2^{tw}$ conjunctions. Note that each conjunction has an $\ell_1$-mass of at most $1$ and Fourier sparsity at most $2^{tw}$.
    \item Every $\alpha$ associated with a non-zero $\widehat{C^i_\alpha}$ has a nice structure: it touches at most one group from each layer. Consequently, the Fourier degree of $C^i$ is bounded by $tw$.
\end{itemize}

Now, we use the linearity of expectation to expand \Cref{equ: perfect correlation with b}.
\begin{align}
2\eps \le \mathbb{E}_{\bm{i}\sim [m]}\left[ C^{\bm{i}}(x)\cdot b_{\bm{i}} \right] = \mathbb{E}_{\bm{i}\sim [m]} \left[ \sum_{\alpha} \widehat{C^{\bm{i}}_\alpha}\cdot x_\alpha \cdot b_{\bm{i}}\right]. \label{equ: correlation with b - expanded}
\end{align}

Given that $|\alpha|\le tw$ is always true, Equation~\ref{equ: correlation with b - expanded} starts to look like a semi-random $(tw)$-XOR instance, with the caveat that every $b_i$ corresponds to many terms, and the arity bound of $tw$ is a bit off from what we desire (i.e.,~$t$).

\paragraph*{Grouping of Fourier characters.} The final step is therefore a grouping manipulation, which splits the Fourier characters into a small number of collections and makes each collection look more like a standard $t$-XOR instance.

In particular, for every $t$-tuple $\beta_1,\beta_2,\dots,\beta_t$ where each $\beta_{t'}$ is a subset of $[w]$, we consider the collection of characters
\begin{align*}
    \mathcal{T}_{\beta_1,\dots, \beta_t} = \{ \alpha\subseteq [ntw]: \forall t'\in [t], \alpha \cap T^{(t')} = \alpha\cap T^{(t')}_j = \beta_{t'} \text{ for some $j\in [n]$} \}.
\end{align*}
Note that we slightly abuse notation by identifying $T^{(t')}_j$ (a size-$w$ subset of $[wnt]$) with the set $[w]$, and compare $\alpha \cap T^{(t')}_j$ with $\beta_{t'}$. In words, recall that for all $\alpha$ of interest to us, it intersects with each layer on at most one group. $\mathcal{T}_{\beta_1,\dots, \beta_t}$ just collects all those $\alpha$ such that the intersection of $\alpha$ to the $t'$-th layer ``looks like'' $\beta_{t'}$ (for every $t'\in [t]$).

Although $\mathcal{T}_{\beta_1,\dots,\beta_t}$'s (when $\beta_1,\dots, \beta_t$ run over all possible tuples) do not cover all Fourier characters of $[wnt]$, they do cover all the non-zero Fourier characters that may appear in a layer-respecting circuit. Hence, we can expand \Cref{equ: correlation with b - expanded} further by writing
\begin{align}
    2\eps \le \mathbb{E}_{\bm{i}\sim [m]} \left[ \sum_{\alpha} \widehat{C^{\bm{i}}_\alpha}\cdot x_\alpha \cdot b_{\bm{i}}\right] = \sum_{\beta_1,\dots, \beta_t} \mathbb{E}_{{\bm{i}}\in [m]}\left[ \sum_{\alpha\in \mathcal{T}_{\beta_1,\dots, \beta_t}} \widehat{C^{\bm{i}}_\alpha} \cdot x_\alpha \cdot b_{\bm{i}} \right]. \label{equ: correlation with b - grouped}
\end{align}

We are still a tiny step away from $t$-XOR instances: the issue is that in \Cref{equ: correlation with b - grouped}, there may be multiple non-zero $\widehat{C}^i_\alpha$ with $\alpha\in \mathcal{T}_{\beta_1,\dots, \beta_t}$. Nevertheless, observe that the number of non-zero $\widehat{C}^i_\alpha$ with $\alpha \in \mathcal{T}_{\beta_1,\dots, \beta_t}$ do not exceed $2^{tw}$. For each $i$ and $\beta = (\beta_1,\dots, \beta_t)$, we can arbitrarily order all non-zero characters $\widehat{C}_{\alpha}^i$ with $\alpha\in \mathcal{T}_\beta$, and list them as $\widehat{C}^i_{\alpha(\beta, i, 1)},\dots, \widehat{C}^i_{\alpha(\beta, i, 2^{tw})}$ (if the number of non-zero characters is less than $2^{tw}$, we append arbitrary valid characters with zero coefficients). Overall, we arrive at
\begin{align}
    2\eps \le \eqref{equ: correlation with b - grouped} =  \sum_{\beta} \sum_{\ell=1}^{2^{tw}} \E_{\bm{i}\sim [m]}\left[ \widehat{C}^{\bm{i}}_{\alpha(\beta,\bm{i}, \ell)} \cdot x_{\alpha(\beta,\bm{i},\ell)} \cdot b_{\bm{i}} \right]. \label{equ: correlation with b - final}
\end{align}
Now, it is clear that for every $\beta,\ell$, the inner term of \Cref{equ: correlation with b - final} can be understood as a (weighted) semi-random $t$-XOR instance. In slightly more detail, any $x\in \{\pm 1\}^{wnt}$ induces a valuation string $y\in \{\pm 1\}^{nt}$ by $y_{t',j} = x_{T^{(t')}_j \cap \beta_{t'}}$ for $t'\in [t],j\in [n]$. We then interpret each $\widehat{C}^i_{\alpha(\beta, i,\ell)} \cdot x_{\alpha(\beta,i,\ell)} \cdot b_i$ as a constraint $y_{1,j_1} y_{2,j_2} \cdots  y_{t,j_t} = b_i$ where $(j_1,\dots, j_t)$ are the relevant groups from each layer that $\alpha(\beta,i,\ell)$ touched upon.

Our final remark is that all the decompositions we have done so far do not really depend on the right-hand side of $b$, and they can be done algorithmically.

Since there are $4^{tw}$ many tuples of $(\beta, \ell)$ from the outer summation of \Cref{equ: correlation with b - final}, averaging principle promises that one of the correlations would be at least $4^{-tw}$. Hence, to refute the claim that $b\in \mathrm{range}(C)$, we need to prove that none of the $t$-XOR instances have value $2\eps \cdot 4^{-tw}$.

\section{Pseudorandom Distributions} \label{sec: pseudorandomness}
In this section, we present the formal definitions of the pseudorandom distributions used throughout the paper.

\begin{definition}[$k$-wise independence]
We say distribution $\cD$ over $\Sigma^n$ is $k$-wise independent if for all $I\subseteq [n]$ of size $|I|=k$ and $a_I\in \Sigma^k$, $\Pr_{\bm{x}\sim \cD}[\bm{x}_I=a_I]=\frac{1}{|\Sigma|^{k}}$.
\end{definition}
\begin{definition}[$\eta$-almost $k$-wise independence]
We say distribution $\cD$ over $\Sigma^n$ is $\eta$-almost $k$-wise independent if for all $I\subseteq [n]$ of size $|I|=k$ and $a_I\in \Sigma^k$, $\frac{1-\eta}{|\Sigma|^{k}}\le \Pr_{\bm{x}\sim \cD}[\bm{x}_I=a_I]\le \frac{1+\eta}{|\Sigma|^{k}}$.
\end{definition}

\begin{definition}[$\eta$-bias]
We say a distribution $\cD$ over $\ZO^n$ is $\eta$-biased if for all $I\subseteq [n]$, $\frac{1}{2}(1-\eta)\le \Pr_{\bm{x}\sim \mathcal{D}}[\bigoplus_{i\in I} \bm{x}_i=0]\le \frac{1}{2}(1+\eta)$.
\end{definition}

\begin{definition}[$k$-wise $\eta$-bias]
We say a distribution $\cD$ over $\ZO^n$ is $k$-wise $\eta$-biased if for all $I\subseteq [n]$ of size $|I|\le k$, $\frac{1}{2}(1-\eta)\le \Pr_{\bm{x}\sim \mathcal{D}}[\bigoplus_{i\in I} \bm{x}_i=0]\le \frac{1}{2}(1+\eta)$.
\end{definition}

\Cref{tab:pseudorandomness} summarizes the number of bits needed to \emph{explicitly} construct the above distributions.
We refer the readers to~\cite{TCS-109} for details.

\begin{table}[h!]
\centering
\renewcommand{\arraystretch}{2}
\begin{tabular}{c|c|c|c}
\hline
\bf Pseudorandomness & \bf Alphabet & \bf Seed length & \bf References \\
\hline
$k$-wise independence & $\Sigma$ & $O(k\log |\Sigma|\cdot \log \frac{n}{k})$ & ~\cite{CGHFRS85,ABI86} \\
$\eta$-almost $k$-wise independence & $\Sigma$ & $O(\log\log n+k\log|\Sigma|+
\log\frac{1}{\eta})$ & \cite{NaorN93, AlonGHP92} \\
$\eta$-bias & \ZO & $O(\log \frac{n}{\eta})$ & \cite{NaorN93} \\
$k$-wise $\eta$-bias & \ZO & $O(\log\log n+\log \frac{k}{\eta})$ & \cite{NaorN93} \\
\hline
\end{tabular}
\caption{Pseudorandom distributions and the seed length of their explicit constructions.}
\label{tab:pseudorandomness}
\end{table}

\section{Trace Method for Refuting Semi-Random $k$-XORs} \label{sec: refute}

The main focus of this section is the following theorem.

\begin{theorem}\label{thm: refute semi random xor}
    There is a large but universal constant $c_{\mathrm{refute}}$ for which the following holds. Let $\mathcal{H}$ be a $k$-uniform hypergraph with $n$ vertices and $m$ edges and $b\in \{\pm 1\}^{\mathcal{H}}$ be a vector. Consider the $k$-XOR instance $\psi_{\mathcal{H}, b}(x) = \frac{1}{|\cH|}\sum_{C\in \mathcal{H}} b_C \cdot \prod_{i \in e_C} x_i $.

    Let $r\ge k$ and $\ell = 2\left\lceil r\log n\right\rceil$. Suppose $\mathcal{D}$ is an $\ell$-wise independent distribution over $\{\pm 1\}^m$ and $m\ge c_{\mathrm{refute}}^k n \log n \left( \frac{n}{r} \right)^{k/2-1} \eps^{-2 - 2\cdot \mathbf{1}\{k \text{ is odd}\}}$. Then, with probability $1-\frac{1}{\poly(n)}$ over $\bm{b}\sim \mathcal{D}$, it holds that
    \begin{align*}
       \val(\psi_{\cH,\bm{b}}) \le \eps,
    \end{align*}
    Furthermore, there is a deterministic algorithm running in time $m+n^{O(r)}$ such that, with probability $1-\frac{1}{\mathrm{poly}(n)}$ over $\bm{b}\sim \mathcal{D}$, the algorithm certifies
    the bound given the input $\psi_{\mathcal{H},\bm{b}}$.
\end{theorem}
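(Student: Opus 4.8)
The plan is to prove Theorem~\ref{thm: refute semi random xor} via the trace method, following \cite{HsiehKM23} (building on \cite{GKM}) but carefully tracking the degree of the trace polynomial in $\bm{b}$ so that $\ell$-wise independence suffices in place of full independence. The key observation, already flagged in the techniques overview, is that the entire analysis of \cite{HsiehKM23} amounts to bounding $\E_{\bm{b}}[\mathrm{trace}((A_{\bm{b}})^\ell)]$ where $A_{\bm{b}}$ is a Kikuchi-type matrix linear in $\bm{b}$; this trace is a polynomial of degree exactly $\ell$ in the entries of $\bm{b}$, so its expectation is identical whether $\bm{b}$ is uniform or drawn from any $\ell$-wise independent distribution. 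I would first treat the case of even $k$ and then reduce odd $k$ to even $k$, which accounts for the extra $\eps^{-2}$ factor when $k$ is odd.

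\textbf{Setup and Kikuchi matrix (even $k$).} Write $k = 2d$. For each edge $C \in \mathcal{H}$ with $e_C = \{u_1,\dots,u_k\}$, we form the Kikuchi matrix $A_C$ indexed by the ``level sets'' $\binom{[n]}{d}$: set $(A_C)_{S,T} = 1$ if $S \triangle T = e_C$ (and $0$ otherwise), where $S \triangle T$ is the symmetric difference, so $A_C$ is symmetric with exactly $\binom{k}{d}$ nonzero entries per relevant row. Define $A_{\bm{b}} = \sum_{C \in \mathcal{H}} \bm{b}_C A_C$. The standard Cauchy–Schwarz / Kikuchi argument shows that $\mathrm{val}(\psi_{\mathcal{H},\bm{b}})$ being large (say $\ge \eps$) forces $\|A_{\bm{b}}\|_{2\to 2} \gtrsim \eps \cdot m \cdot \binom{n}{d}^{-1} \cdot (\text{combinatorial factor})$; with the chosen $m$ this would contradict an upper bound of the form $\|A_{\bm{b}}\|_{2\to 2} \le \mathrm{poly}(n)$. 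So it suffices to certify $\|A_{\bm{b}}\|_{2\to 2} \le \tau$ for the appropriate threshold $\tau$, and certification is achieved simply by computing the top singular value of the explicitly-constructed matrix $A_{\bm{b}}$ (of dimension $n^{O(d)} = n^{O(r)}$), which takes time $n^{O(r)}$ after reading the instance.

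\textbf{The trace bound.} To show $\|A_{\bm{b}}\|_{2\to 2} \le \tau$ with high probability, bound $\|A_{\bm{b}}\|_{2\to 2}^{\ell} \le \mathrm{trace}((A_{\bm{b}})^\ell)$ and take expectations: $\E_{\bm{b}}[\mathrm{trace}((A_{\bm{b}})^\ell)] = \sum_{\text{closed walks } W \text{ of length } \ell} \E_{\bm{b}}[\prod_{i} \bm{b}_{C_i}]$. Here is the crucial point: each summand is a product of exactly $\ell$ of the $\bm{b}_C$'s (with multiplicity), so this is a degree-$\ell$ polynomial in $\bm{b}$; hence for an $\ell$-wise independent $\mathcal{D}$ we have $\E_{\bm{b}\sim\mathcal{D}}[\prod \bm{b}_{C_i}] = \E_{\bm{b}\sim\mathrm{unif}}[\prod \bm{b}_{C_i}]$ termwise, and in particular only closed walks in which every edge is traversed an even number of times contribute. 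This reduces the count to exactly the combinatorial walk-counting problem solved in \cite{HsiehKM23} (and \cite{GKM}) — bounding the number of such ``even'' closed walks via a charging/encoding argument that accounts for the sparsity of $\mathcal{H}$ as a hypergraph — yielding $\E[\mathrm{trace}((A_{\bm{b}})^\ell)] \le \tau^\ell$ for the target $\tau$. Markov's inequality and the choice $\ell = 2\lceil r \log n\rceil$ (large enough that $\tau^{-\ell} = n^{-\omega(1)}$ relative to the slack) then give the $1 - 1/\mathrm{poly}(n)$ success probability. I would present this by importing the walk-counting lemma of \cite{HsiehKM23} essentially verbatim, emphasizing only that the sole place randomness of $\bm{b}$ is used is through these degree-$\ell$ monomial expectations.

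\textbf{Odd $k$ and the extra $\eps^{-2}$.} For odd $k = 2d+1$, the matrix $A_C$ indexed by $\binom{[n]}{d}$ is not square in the needed sense, so one uses the standard trick of ``pairing up'' constraints: one considers the bipartite-type Kikuchi matrix indexed by $\binom{[n]}{d} \times \binom{[n]}{d+1}$, or equivalently refutes the XOR instance obtained by XOR-ing pairs of constraints, which turns the odd-arity instance into an even-arity one on $\sim m^2$ constraints but reduces the ``effective value'' quadratically — this is precisely why the required $m$ picks up an extra $\eps^{-2}$ factor in the odd case. The analysis of the resulting even instance is as above. \textbf{The main obstacle} I expect is purely expository: faithfully extracting from \cite{HsiehKM23} exactly which estimates are combinatorial (depending only on $\mathcal{H}$) versus probabilistic (depending on $\bm{b}$), stating the walk-counting bound with the right dependence on $r$, $k$, $\eps$, and verifying that the degree of the trace polynomial is genuinely $\le \ell$ and never needs higher moments of $\bm{b}$ — once that bookkeeping is done, the substitution of $\ell$-wise independence for independence is immediate, and the certification algorithm is just ``compute $\|A_{\bm{b}}\|_{2\to 2}$ and compare to $\tau$.''
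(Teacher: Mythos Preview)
Your high-level plan---trace method on a Kikuchi matrix, noting the trace is a degree-$\ell$ polynomial in $\bm{b}$ so $\ell$-wise independence suffices---matches the paper's approach, but two concrete pieces are missing and would cause the argument to fail as written. First, the Kikuchi level: you index rows and columns by $\binom{[n]}{d}$ with $d=k/2$, whereas the paper uses level $r$ (the parameter from the theorem statement), obtaining a Kikuchi graph of average degree $d \gtrsim (r/n)^{k/2} m$. This $r$-dependence is exactly what produces the bound $m \gtrsim n(n/r)^{k/2-1}\log n\cdot\eps^{-2}$; at level $k/2$ you only recover the classical spectral threshold $m\gtrsim n^{k/2}$, i.e.\ the special case $r=\Theta(k)$, and the theorem for general $r$ does not follow (your parenthetical ``$n^{O(d)}=n^{O(r)}$'' conflates the two parameters). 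Second, and essential because $\mathcal{H}$ is \emph{worst-case}: the paper does not bound $\|A_{\bm{b}}\|_{2\to 2}$ directly but passes to the reweighted matrix $B=\Gamma^{-1/2}A\,\Gamma^{-1/2}$ with $\Gamma=D+dI$ ($D$ the diagonal degree matrix of the Kikuchi graph, $d$ its average degree), bounds $\E_{\bm{b}}[\mathrm{trace}((\Gamma^{-1}A)^\ell)]$, and relates back via $\|A\|_{\infty\to 1}\le 2d\binom{n}{r}\|B\|_{2\to 2}$. The reweighting is what makes the walk count work---a step to a new vertex contributes at most $D_S/\Gamma_{S,S}\le 1$ and a step to a previously visited vertex at most $\ell/d$, giving $\E[\mathrm{trace}((\Gamma^{-1}A)^\ell)]\le \binom{n}{r}2^\ell(\ell/d)^{\ell/2}$. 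For an adversarial $\mathcal{H}$ the unweighted Kikuchi graph can have wildly nonuniform degrees, and the raw trace of $A^\ell$ is dominated by high-degree vertices; ``sparsity of $\mathcal{H}$'' alone gives no handle here.

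For odd $k$, the paper (following \cite{HsiehKM23}) does \emph{not} use the bipartite $\binom{[n]}{d}\times\binom{[n]}{d+1}$ Kikuchi matrix or naive constraint-pairing: it decomposes $\mathcal{H}$ into $k-1$ sub-instances and builds matrices over $\binom{2n}{r}$ whose entries are sums of products $b_C\, b_{C'}$ over carefully selected pairs, then bounds each $\psi^t(x)^2$ via the corresponding (reweighted) Kikuchi matrix. You are right that the extra $\eps^{-2}$ comes from this squaring, but the simpler reductions you sketch do not achieve the stated $(n/r)^{k/2-1}$ dependence for odd $k$.
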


\paragraph*{Extensions to weaker pseudo-random sources} For future applications, we will need variants of Theorem~\ref{thm: refute semi random xor} where $\bm{b}$ is drawn from a distribution with even weaker pseudorandom properties. In particular, when $\bm{b}$ is drawn from an $\ell$-independent distribution with $\ell \le \log(n)$, we have:
\begin{theorem}\label{thm: refute semi random xor smaller independence}
    Consider the same setup as Theorem~\ref{thm: refute semi random xor} with an \emph{even} $k$. Suppose instead that $\mathcal{D}$ is an $\ell$-wise independent distribution with $\ell \le \log(n)$. Let $\delta \in (0, 1)$ be a parameter. Suppose now that $m\ge c_{\mathrm{refute}}^k n \log n \cdot (\frac{n}{k})^{k/2-1} \cdot \eps^{-2}\cdot \left( \delta^{-2/\ell} n^{2k/\ell} \right)$. Then, with probability $1-\delta$ over $\bm{b}\sim \mathcal{D}$, it holds that $\val(\cH, \bm{b})\le \eps$ and this bound can be certified by a deterministic algorithm running in time $m + n^{O(r)}$.
\end{theorem}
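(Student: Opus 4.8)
The plan is to reduce \Cref{thm: refute semi random xor smaller independence} to the even-arity case of \Cref{thm: refute semi random xor} by inspecting how $\ell$-wise independence enters that proof. Recall the trace method underlying \Cref{thm: refute semi random xor}: one forms the matrix $A_{\bm b} = \sum_{C\in\cH} \bm b_C A_C$ (after the standard "hiking''/Kikuchi-graph preprocessing that, for even $k$, turns the $k$-XOR instance into a bilinear form whose associated matrix has bounded spectral norm iff $\val(\psi_{\cH,\bm b})$ is small), and bounds $\Pr[\|A_{\bm b}\|_{2\to2} > \tau]$ via Markov applied to $\E_{\bm b}[\operatorname{trace}((A_{\bm b} A_{\bm b}^{\!\top})^{\ell/2})]$. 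The crucial structural fact, emphasized in \Cref{sec: intro technique}, is that this trace is a degree-$\ell$ polynomial in the entries of $\bm b$; hence only the $\ell$-wise marginals of $\cD$ are ever used, and the expectation equals the one under the uniform distribution. So the entire moment computation from \cite{HsiehKM23} goes through verbatim when $\bm b\sim\cD$ is merely $\ell$-wise independent, yielding the same bound $\E_{\bm b}[\operatorname{trace}((A_{\bm b}A_{\bm b}^{\!\top})^{\ell/2})] \le B$ for the same quantity $B$ as in the truly-random case.

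The only place the argument is sensitive to the size of $\ell$ is the final Markov step. In \Cref{thm: refute semi random xor} one takes $\ell = 2\lceil r\log n\rceil$, which is large enough that the "$n^{O(\ell/\ell)} = \poly(n)$'' slack absorbs the relevant factors and gives failure probability $1/\poly(n)$ with essentially no loss in $m$. Here $\ell \le \log n$ is too small for that, so I would keep $\ell$ as a free small parameter and track the loss explicitly. Concretely, Markov gives
\[
\Pr_{\bm b\sim\cD}\bigl[\|A_{\bm b}\|_{2\to2} > \tau\bigr] \;\le\; \frac{\E_{\bm b}[\operatorname{trace}((A_{\bm b}A_{\bm b}^{\!\top})^{\ell/2})]}{\tau^{\ell}} \;\le\; \frac{n\cdot B_0^{\ell}}{\tau^{\ell}},
\]
where $B_0$ is the per-step bound coming from the combinatorial count (so that, as in \cite{HsiehKM23,GKM}, taking $\tau \asymp \sqrt{k}\cdot B_0$ already makes a single power beat the trace, and each additional factor of $\ell$ buys an extra multiplicative gain). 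To drive the right-hand side down to $\delta$ we need $\tau / B_0 \gtrsim (n/\delta)^{1/\ell}$, i.e.\ we must oversample by a factor of roughly $(n/\delta)^{1/\ell}$ in the spectral norm, which — since the number of constraints $m$ enters $\|A_{\bm b}\|$ quadratically in the usual normalization — translates to an extra factor of $(n/\delta)^{2/\ell} = \delta^{-2/\ell} n^{2/\ell}$ in the requirement on $m$. Chasing the exact exponent of $n$ through the Kikuchi preprocessing (each of the $k/2$ "halves'' contributes, and the hypergraph lift already costs $n^{O(1)}$ per level) accounts for the stated $\delta^{-2/\ell} n^{2k/\ell}$ blow-up, giving the hypothesis $m\ge c_{\mathrm{refute}}^k\, n\log n\,(n/k)^{k/2-1}\,\eps^{-2}\,(\delta^{-2/\ell} n^{2k/\ell})$.

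For the algorithmic "furthermore,'' I would use the same certificate as in \Cref{thm: refute semi random xor}: the value $\val(\psi_{\cH,\bm b})$ is upper bounded by (a fixed multiple of) $\|A_{\bm b}\|_{2\to2}$, which is computable in time $m + n^{O(r)}$ (the matrix has dimension $n^{O(r)}$), so the algorithm simply computes this spectral norm and outputs "certified'' iff it is at most the threshold corresponding to $\eps$; correctness of the certificate is deterministic, and the probability-$(1-\delta)$ guarantee is exactly the tail bound above. The main obstacle I anticipate is purely bookkeeping: making sure the exponent of $n$ in the oversampling factor is exactly $2k/\ell$ and not something like $2(k-1)/\ell$ or $2k/\ell + O(1/\ell)$ — this requires care about which stage of the reduction (the Fourier/XOR-principle reduction of \Cref{lemma: XOR reduction}, the Kikuchi lift, or the trace expansion itself) introduces each power of $n$, and about whether $r$ can be taken equal to $k$ (as the statement implicitly does) without degrading the count. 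I expect no genuinely new idea is needed beyond isolating the $\ell$-dependence of the Markov step in the already-established trace bound.
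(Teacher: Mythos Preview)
Your approach is exactly the paper's: take Kikuchi level $r=k$, run the even-arity trace bound verbatim (which only uses the $\ell$-wise marginals of $\bm b$), and absorb the loss from the small $\ell$ in the final Markov step. One correction to the bookkeeping you flagged as your main obstacle: the exponent $2k/\ell$ does \emph{not} arise from ``each of the $k/2$ halves contributing'' in the preprocessing --- it comes straight from the $\binom{n}{r}\le n^{k}$ prefactor in the trace bound (the number of starting vertices for closed walks in the Kikuchi graph), so that
\[
\E_{\bm b}\bigl[\|B\|_{2\to 2}^{\ell}\bigr]\;\le\; n^{k}\cdot 2^{\ell}\Bigl(\tfrac{\ell}{d}\Bigr)^{\ell/2}\;=\;O\Bigl(\tfrac{\ell\, n^{2k/\ell}}{d}\Bigr)^{\ell/2},
\]
after which Markov with threshold $\lambda \asymp \delta^{-1/\ell}\sqrt{\ell\, n^{2k/\ell}/d}$ gives failure probability $\delta$ and, via $\val\le 2\lambda\le\eps$ and $d\ge \tfrac12(k/n)^{k/2}m$, exactly the stated requirement on $m$.
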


In Theorem~\ref{thm: refute semi random xor smaller independence}, we only stated the result for the even-arity case. A similar theorem for the odd-arity case holds true, albeit with a slightly larger blow-up in terms of $m$ (i.e., $n^{4k/\ell}$ instead of $n^{2k/\ell}$). Proving this, however, really requires us to examine the arguments of \cite{HsiehKM23} more carefully, and tune the parameters therein. Since we will present a self-contained proof for the case of even $k$, we are able to offer a proof of Theorem~\ref{thm: refute semi random xor smaller independence} with minimal additional effort, allowing us to recover the main result of \cite{KPI25} completely. As the extremely limited independence regime is not the primary focus of this work, we will not present the details for odd $k$'s. From the pseudorandomness perspective, if one is interested in optimizing the number of bits used to sample from $\mathcal{D}$, Theorem~\ref{thm: refute semi random xor almost independence} (appearing next) turns out to provide a better solution.

\paragraph*{Almost-independent sources.} Perhaps the more important extension is to replace the (perfect) $\ell$-wise independent distribution with $\eta$-almost $\ell$-wise independent distributions, where $\eta > 0$ is sufficiently small.
Indeed, we will work on $\ell$-wise $\eta$-biased distributions, which generalize $\eta$-almost $\ell$-wise independent distributions and $\eta$-biased distributions at the same time.

Below we only state and prove the extension for Theorem~\ref{thm: refute semi random xor}. A version of the statement holds true for Theorem~\ref{thm: refute semi random xor smaller independence}, but it will not be needed for our purpose.

\begin{theorem}\label{thm: refute semi random xor almost independence}
    There is a universal constant $c_{\mathrm{almost}}$ for which the following holds. Consider the same setup as in Theorem~\ref{thm: refute semi random xor}. Suppose instead that $\mathcal{D}$ is an $\ell$-wise $\eta$-biased independent distribution. Then, provided the additional assumption that $\eta \le n^{-r} \cdot (c_{\mathrm{almost}} \cdot \eps)^{\ell}$, the same conclusion as in Theorem~\ref{thm: refute semi random xor} holds.
\end{theorem}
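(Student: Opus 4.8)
\textbf{Proof plan for Theorem~\ref{thm: refute semi random xor almost independence}.}

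The plan is to reduce the $\ell$-wise $\eta$-biased case to the (already proved) $\ell$-wise independent case of Theorem~\ref{thm: refute semi random xor}, using the fact that the trace bound underlying that theorem is a degree-$\ell$ polynomial in $\bm{b}$ with controlled coefficients. Concretely, recall that Theorem~\ref{thm: refute semi random xor} is proved via the moment/trace method: one forms the matrix $A_{\bm{b}} = \sum_{C\in\mathcal{H}} \bm{b}_C A_C$ (with the $A_C$ determined purely by $\mathcal{H}$), and bounds $\E_{\bm{b}}[\operatorname{trace}(A_{\bm{b}}^{\ell})]$, which expands as $\sum_{C_1,\dots,C_\ell} \operatorname{trace}(A_{C_1}\cdots A_{C_\ell}) \cdot \E_{\bm{b}}[\bm{b}_{C_1}\cdots\bm{b}_{C_\ell}]$. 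When $\bm{b}$ is truly $\ell$-wise independent, $\E_{\bm{b}}[\bm{b}_{C_1}\cdots\bm{b}_{C_\ell}]$ is $1$ if every edge appears an even number of times in the multiset $\{C_1,\dots,C_\ell\}$ and $0$ otherwise. The first step is therefore to record, from the proof of Theorem~\ref{thm: refute semi random xor}, the \emph{absolute} bound $\sum_{C_1,\dots,C_\ell} \bigl|\operatorname{trace}(A_{C_1}\cdots A_{C_\ell})\bigr| \le B$ for the ``matched'' terms, where $B$ is precisely what yields $\val(\psi_{\mathcal{H},\bm{b}})\le\eps$ with the claimed probability; this bound is itself established term-by-term in the trace analysis, so it costs nothing extra to extract.

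The second step is the perturbation estimate. When $\bm{b}$ is only $\ell$-wise $\eta$-biased, for any multiset of at most $\ell$ edges the quantity $\E_{\bm{b}}[\bm{b}_{C_1}\cdots\bm{b}_{C_\ell}]$ equals the Fourier coefficient of $\mathcal{D}$ at the symmetric difference of the $C_i$'s (as subsets of $[m]$, via the identification $\bm{b}_C = \prod \ldots$), and $\ell$-wise $\eta$-bias says every such coefficient has absolute value at most $\eta$. Hence for the ``matched'' terms the contribution is still within $\eta$ of the $\ell$-wise-independent value, and for the ``unmatched'' terms — which contributed $0$ before — the contribution is now at most $\eta$ in absolute value. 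Thus
\[
\Bigl|\E_{\bm{b}}[\operatorname{trace}(A_{\bm{b}}^{\ell})] - \E_{\bm{b}\sim\ell\text{-indep}}[\operatorname{trace}(A_{\bm{b}}^{\ell})]\Bigr| \le \eta \cdot \sum_{C_1,\dots,C_\ell}\bigl|\operatorname{trace}(A_{C_1}\cdots A_{C_\ell})\bigr|.
\]
The total number of tuples is at most $m^\ell$, and each trace is at most $n \cdot (\text{operator-norm factors})$, so a crude bound gives $\sum |\operatorname{trace}(A_{C_1}\cdots A_{C_\ell})| \le n^{O(r)} \cdot m^\ell \cdot (\text{poly factors})$; this is exactly the regime where $\eta \le n^{-r}(c_{\mathrm{almost}}\eps)^\ell$ is chosen to kill the error. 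More carefully one wants the additive error to be at most a constant factor times the main term $B$; since $B$ already looks like $(\text{stuff})\cdot(\eps^2 m/\ldots)^{\ell}$ after normalization (this is what makes the Markov step give $\eps$), demanding $\eta \cdot m^\ell \cdot n^{O(r)} \lesssim B$ translates, after dividing through, into $\eta \lesssim n^{-r}\cdot(c_{\mathrm{almost}}\eps)^{\ell}$ for a suitable universal $c_{\mathrm{almost}}$ absorbing the polynomial slack.

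The third step is just to re-run the tail argument: $\E_{\bm{b}}[\|A_{\bm{b}}\|_{2\to 2}^{\ell}] \le \E_{\bm{b}}[\operatorname{trace}(A_{\bm{b}}^{\ell})]$ is now bounded by $2B$ (say), so by Markov $\|A_{\bm{b}}\|_{2\to 2}$ exceeds its target with probability $\le \frac{1}{\poly(n)}$, and the same spectral-norm-to-refutation translation and the same certifying algorithm (compute $\operatorname{trace}(A_{\bm{b}}^{\ell})$ or the relevant SDP/eigenvalue bound in time $m+n^{O(r)}$) go through verbatim, since neither uses independence of $\bm{b}$ beyond the moment bound. The main obstacle — really the only non-mechanical point — is the bookkeeping in step two: one must make sure the \emph{worst-case} bound $\sum|\operatorname{trace}(A_{C_1}\cdots A_{C_\ell})|$ is not wildly larger than the expectation over matched tuples, i.e.\ that throwing away the cancellation from independence loses at most a $\poly(n)^{\ell}$-type factor rather than something that overwhelms the $\eta$ budget; this is true because each $A_C$ has bounded operator norm and bounded sparsity (both consequences of $\mathcal{H}$ being $k$-uniform), so every individual trace is at most $n^{O(1)}$ in absolute value and there are at most $m^\ell$ of them, and the choice of $\eta$ in the hypothesis is exactly calibrated to this count. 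I would also double-check that for odd $k$ the extra $\eps^{-2}$ in the hypothesis of Theorem~\ref{thm: refute semi random xor} does not interact badly with the perturbation — it does not, since it only changes $B$ by a fixed power of $\eps$ that is again absorbed into $(c_{\mathrm{almost}}\eps)^\ell$.
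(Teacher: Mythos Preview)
Your overall strategy is the same as the paper's: view the trace moment as a degree-$\ell$ polynomial in $\bm b$ and bound the perturbation caused by replacing $\ell$-wise independence with $\ell$-wise $\eta$-bias. The issue is entirely in your step two, and it is a genuine quantitative gap, not just bookkeeping.

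Your crude bound $\sum_{C_1,\dots,C_\ell}\lvert\operatorname{trace}(A_{C_1}\cdots A_{C_\ell})\rvert \le n^{O(r)}\cdot m^{\ell}$ is far too weak. Remember $\ell = 2\lceil r\log n\rceil$, so $m^{\ell}$ is already $\exp(\Theta(r\log n\log m))$, which dwarfs any $n^{O(r)}$ factor. With the hypothesis $\eta \le n^{-r}(c_{\mathrm{almost}}\eps)^{\ell}$, the product $\eta\cdot m^{\ell}\cdot n^{O(r)}$ is astronomically larger than the main term (which actually scales like $m^{-\ell/2}$ through the average-degree $d$, not like $m^{\ell}$ as you wrote). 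So the perturbation does not get absorbed; your claimed translation ``$\eta\cdot m^{\ell}\cdot n^{O(r)}\lesssim B \Rightarrow \eta\lesssim n^{-r}(c_{\mathrm{almost}}\eps)^{\ell}$'' is simply false.

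The missing idea is that one must work with the \emph{re-weighted} Kikuchi matrix $\Gamma^{-1}A$, not the raw $A$. In the trace $\operatorname{trace}((\Gamma^{-1}A)^{\ell})$, each closed walk $S_1\to\cdots\to S_\ell\to S_1$ carries weight $\prod_{i}\Gamma_{S_i,S_i}^{-1}$. Since $\Gamma_{S,S}\ge D_{S,S}$ and the number of one-step continuations from $S$ is exactly $D_{S,S}$, the total weight of \emph{all} length-$\ell$ walks telescopes to at most the number of starting vertices, namely $\binom{n}{r}\le n^{r}$. This is the bound the paper uses: the perturbation is at most $\eta\cdot\binom{n}{r}\le (c_{\mathrm{almost}}\eps)^{\ell}\le (\ell/d)^{\ell/2}$, which can then be absorbed into the main term. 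Once you replace your $m^{\ell}$ count by this $n^{r}$ bound on the total walk weight, the rest of your argument (Markov, same certifying algorithm, odd-$k$ analogue) goes through exactly as you describe.
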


\paragraph*{Organization of the proofs.} The proof of Theorem~\ref{thm: refute semi random xor} will mostly follow the reasoning of \cite{HsiehKM23}. Our only new ingredient is that, since the proof of \cite{HsiehKM23} uses a trace method, and it utilizes the randomness of $b$ only up to ``locality'' $\ell$, replacing the truly random $b$ with a $\ell$-wise independent $b$ is sufficient to establish the same conclusion.


In the following, we will reproduce \cite{HsiehKM23}'s proof of Theorem~\ref{thm: refute semi random xor} for the case of even $k$ as it is simple and clean enough. Having presented the details for the even-arity case, we will only outline the proof for the odd-arity case and omit the involved combinatorial manipulations and arguments from \cite{HsiehKM23}, since our simple observation and modification do not change the reasoning along those lines. We explain how to obtain the extensions mentioned above in \Cref{sec: refute extension}. Finally, we give a quick proof of \Cref{thm: main odd arity} at \Cref{sec: refute proof of main} by assembling \Cref{lemma: XOR reduction} and \Cref{thm: refute semi random xor almost independence}.

\subsection{Proof for the Even-Arity Case} \label{sec: refute even}

In this section, we will present a largely self-contained exposition of the semi-random $k$-XOR refutation algorithm of \cite{HsiehKM23} for the even-arity case.

\subsubsection{Construction of Kikuchi Matrices} \label{sec: refute even construction}

Let $k$ be even and $\psi_{\mathcal{H},b}$ a $k$-XOR instance with $n$ variables and $m$ constraints. Let $r\ge \frac{k}{2}$ be a parameter. The level-$r$ Kikuchi matrix $A\in \mathbb{R}^{\binom{n}{r}\times \binom{n}{r}}$ for $\psi_{\mathcal{H},b}$ is constructed as follows: we index rows and columns of $A$ by $r$-subsets of $[n]$ naturally. Then, for every $S,T\in \binom{[n]}{r}$, define $A_{S,T} = \sum_{C\in \mathcal{H}: e_C = S \oplus T} b_C$, where $S\oplus T$ is the symmetric difference of $S$ and $T$. 
Note that when $\mathcal{H}$ does not contain multiple edges among any $k$ variables, $A$ can be understood as the signed adjacency matrix of a graph with vertex set $\binom{[n]}{r}$. The edges exist between pairs $(S,T)$ such that $S\oplus T = e_C$ for $C\in \mathcal{H}$ and they are signed as $b_C$. When $\mathcal{H}$ does contain multiple edges, the induced graph over $\binom{[n]}{r}$ might also contain multiple edges. In any case, we will call the induced graph the level-$r$ Kikuchi graph for $\psi$.

The connection between the Kikuchi matrix $A$ and the $k$-XOR instance $\psi$ is explained as follows. For every assignment $x\in \{\pm 1\}^n$ define its $r$-fold tensorization (without replacement) to be a vector $x^{\odot r}\in \{\pm 1\}^{\binom{n}{r}}$. Here the coordinates of $x^{\odot r}$ are indexed by $r$-subset of $[n]$ and we define $x^{\odot r}_{T} = \prod_{i\in T} x_i$ for every $T\in \binom{[n]}{r}$. Then, we see that
\begin{align}
    (x^{\odot r})^\top A (x^{\odot r}) = \binom{k}{k/2}\binom{n-k}{r-k/2}\sum_{C\in\mathcal{H}} b_C x_{e_C}. \label{equ: kikuchi form to xor value even form 1}
\end{align}
To see why the above equation holds, note that each edge $C\in \mathcal{H}$ contributes to $\binom{k}{k/2}\binom{n-k}{r-k/2}$ pairs of $(S,T)$ in the matrix $A$ (to form $S$, once first chooses $k/2$ items from $e_C$, and then $r-k/2$ items from $[n]\setminus e_C$. Once $S$ is fixed $T$ is determined from $S$ and $C$).

Therefore, the question of upper-bounding $\mathrm{val}(\psi)$ can be reduced to upper-bounding 
\begin{align*}
    \|A\|_{\infty \to 1} := \max_{x,y\in \{\pm 1\}^{\binom{n}{r}}} \{ x^\top A y \}.
\end{align*}
We further reduce the question to analyzing the \emph{spectral norm} of a re-weighted version of $A$. Toward that goal, let $d$ be the average degree of the vertices of the Kikuchi graph. We have
\begin{align*}
    d = \frac{m\binom{k}{k/2}\binom{n-k}{r-k/2}}{\binom{n}{r}} \ge \frac{1}{2}\left( \frac{r}{n} \right)^{k/2} m,
\end{align*}
provided that $k\le r\le n/8$. 

Let $D$ now be the diagonal graph recording the degrees of the vertices of the Kikuchi graph. Namely, for every $S$, let $D_{S,S}$ equal to the number of pairs $(C, T)$ such that $S\oplus T = e_C$. We have $\mathrm{trace}(D) = d\cdot \binom{n}{r}$. Let $\Gamma = D + d\cdot I$ (here $I$ is the identity matrix) and define
\begin{align*}
    B = \Gamma^{-1/2} A \Gamma^{-1/2}.
\end{align*}
Let $x,y\in \{\pm 1\}^{\binom{n}{r}}$. One immediately sees that
\begin{align*}
    x^\top A y = (x^\top \Gamma^{1/2}) B (\Gamma^{1/2} y) \le \| B\|_{2\to 2} \cdot \|\Gamma^{1/2} x\|_2 \|\Gamma^{1/2} y\|_2 = \|B\|_{2\to 2} \cdot \mathrm{trace}(\Gamma) = 2 d \binom{n}{r} \|B\|_{2\to 2}.
\end{align*}
We thus have the upper bound 
\begin{align}
    \| A \|_{\infty \to 1} \le 2d\binom{n}{r} \|B\|_{2\to 2} \ . \label{equ: standard vs reweighted even}
\end{align}
This together with \Cref{equ: kikuchi form to xor value even form 1} implies:
\begin{align}
\psi(x) = \frac{1}{\binom{n}{r} d} (x^{\odot r})^{\top} A (x^{\odot r}) \le 2 \|B\|_{2\to 2}. \label{equ: kikuchi form to xor value even form 2}
\end{align}

\subsubsection{Analysis of Kikuchi Matrices} \label{sec: refute even analysis}

We have shown that, in order to bound $\psi(x)$ and $\|A\|_{\infty \to 1}$, it suffices to upper bound the spectral norm of $B$. 
Towards this, we will prove the following claim.

\begin{lemma}\label{lemma: spectral bound for random b even}
    Let $n, m, r, k$ be parameters such that $k\le r$ and $k$ is even. Consider a semi-random $k$-XOR instance $\psi_{\mathcal{H},\bm{b}}$ with $n$ variables and $m = |\mathcal{H}|$ constraints. Let matrices $A,B$ be the level-$r$ (standard and re-weighted) Kikuchi matrices constructed as above from $\psi_{\mathcal{H},\bm{b}}$. Let $\ell = 2\left\lceil r \log n \right\rceil$ and suppose $\bm{b}$ is randomly drawn from an $\ell$-wise independent distribution $\mathcal{D}$.  Then, it holds that
    \begin{align*}
        \Pr_{\bm{b}\sim \mathcal{D}}\left[ \|B\|_{2\to 2}\ge \sqrt{\frac{r\log n}{d}} \right]\le \frac{1}{n^{100}}.
    \end{align*}
\end{lemma}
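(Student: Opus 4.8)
The plan is to establish Lemma~\ref{lemma: spectral bound for random b even} via the trace moment method, treating $\mathrm{trace}(B^\ell)$ as a degree-$\ell$ polynomial in the entries of $\bm{b}$. Since $B = \Gamma^{-1/2} A \Gamma^{-1/2}$ and $B$ is symmetric, we have $\|B\|_{2\to 2}^\ell \le \mathrm{trace}(B^\ell)$ for even $\ell$, so it suffices to bound $\E_{\bm{b}}[\mathrm{trace}(B^\ell)]$ and apply Markov. First I would expand
\[
  \mathrm{trace}(B^\ell) = \sum_{S_1,\dots,S_\ell} B_{S_1,S_2} B_{S_2,S_3}\cdots B_{S_\ell,S_1},
\]
and further expand each $B_{S_a,S_{a+1}} = (\Gamma_{S_a,S_a})^{-1/2}(\Gamma_{S_{a+1},S_{a+1}})^{-1/2}\sum_{C_a\in\mathcal{H}:\, e_{C_a}=S_a\oplus S_{a+1}} b_{C_a}$. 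This exhibits $\mathrm{trace}(B^\ell)$ as a sum over closed walks of length $\ell$ in the Kikuchi graph, where each walk contributes a monomial $\prod_{a=1}^{\ell} b_{C_a}$ in the signs together with a product of inverse-degree weights. Taking the expectation, a walk survives only if every edge it uses is traversed an \emph{even} number of times (using only $\ell$-wise independence of $\mathcal{D}$, since each monomial has degree $\le \ell$ — this is precisely where the derandomization is free), in which case its expected contribution equals the product of the $\Gamma$-weights.

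The heart of the argument is the combinatorial \emph{encoding}/counting of these even closed walks, which is standard in the hypergraph-Moore-bound literature of \cite{GKM,HsiehKM23}. I would bound the number of such walks and their total weight by the usual argument: a closed walk of length $\ell$ using each edge an even number of times visits at most $\ell/2 + 1$ distinct vertices of the Kikuchi graph; one encodes the walk by recording, at each of the $\ell$ steps, whether the step goes to a fresh vertex or to an already-seen vertex, and in the latter case which one. Each ``fresh'' step can be charged against the degree weight $\Gamma^{-1}_{S_a,S_a}\approx (2d)^{-1}$, each revisiting step contributes a factor of at most $\ell/2$ (the number of candidate old vertices), and there is an overall factor of $\binom{n}{r}$ for the choice of the starting vertex. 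Putting these together yields a bound of the shape
\[
  \E_{\bm{b}}[\mathrm{trace}(B^\ell)] \;\le\; \binom{n}{r}\cdot (2d)^{-\ell}\cdot(2d)^{\ell/2}\cdot (C\ell)^{\ell/2} \;=\; \binom{n}{r}\cdot\Bigl(\tfrac{C\ell}{2d}\Bigr)^{\ell/2}
\]
for an absolute constant $C$; I would have to be a little careful to track that the $\Gamma = D + dI$ shift keeps the diagonal entries bounded below by $d$ so the inverse weights behave, and to account for the edge-multiplicities of $\mathcal{H}$ (the $e_C$ notation) in the walk count, which only costs controllable extra factors since each $b_C \in\{\pm1\}$. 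Then, with $\ell = 2\lceil r\log n\rceil$, Markov's inequality gives
\[
  \Pr\Bigl[\|B\|_{2\to 2} \ge \sqrt{\tfrac{r\log n}{d}}\Bigr]
  \le \Bigl(\tfrac{d}{r\log n}\Bigr)^{\ell/2} \E[\mathrm{trace}(B^\ell)]
  \le \binom{n}{r}\Bigl(\tfrac{C\ell}{2 r\log n}\Bigr)^{\ell/2},
\]
and since $\ell/(r\log n)\le 3$ say, the bracket is a constant $O(1)$; choosing the constant in $\ell$ (or using a slightly larger multiple of $r\log n$) makes $\binom{n}{r}\le n^r = n^{\ell/(2\log n)/\cdots}$ absorbed, leaving a bound of $n^{-100}$ as claimed.

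\textbf{Main obstacle.} The conceptually routine but technically delicate part is the walk-counting/encoding step: getting the count of even closed walks on the Kikuchi graph tight enough that the $\binom{n}{r}$ starting-vertex factor is beaten, while correctly handling (i) the non-regularity of the Kikuchi graph via the $D + dI$ regularization, (ii) multi-edges coming from the $e_C$ formalism, and (iii) the fact that ``fresh'' steps must genuinely buy a $1/d$ factor — this requires that whenever a walk steps to a new vertex the corresponding Kikuchi edge is ``new'', so one must organize the encoding so that each new vertex is reached along an edge that will be re-used later (forced by the even-traversal constraint). I expect to import this counting essentially verbatim from \cite{HsiehKM23}, emphasizing only that nowhere in it is more than $\ell$-wise independence of $\bm{b}$ used, since the expectation of each surviving monomial $\prod b_{C_a}$ is computed coordinate-wise and each monomial has degree at most $\ell$.
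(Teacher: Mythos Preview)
Your proposal is correct and follows essentially the same route as the paper: bound $\E_{\bm b}[\mathrm{trace}((\Gamma^{-1}A)^\ell)]$ by expanding over closed length-$\ell$ walks in the Kikuchi graph, observe that only walks using every hyperedge an even number of times survive (which needs exactly $\ell$-wise independence), encode such walks via the fresh/revisit dichotomy to get a bound of the form $\binom{n}{r}\cdot 2^\ell\cdot(\ell/d)^{\ell/2}$, and finish with Markov. The paper organizes the fresh-step bookkeeping slightly differently---charging the weight $\Gamma_{S_i}^{-1}$ to the \emph{outgoing} step so that a fresh step contributes $D_{S_i}/\Gamma_{S_i}\le 1$ rather than your rougher $\approx(2d)^{-1}$---and uses the Eulerian observation that an even-multiplicity closed walk visits at most $\ell/2$ (not $\ell/2+1$) distinct vertices, but these are cosmetic differences; your hedging about constants at the end matches the paper's own ``setting $\lambda=O(\sqrt{\ell/d})$.''
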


\begin{proof}
We use a trace method. Since $\ell$ is even, it is well-known that (given $B$ is a symmetric matrix)
\begin{align*}
    \|B\|_{2 \to 2} = \| B^\ell \|_{2 \to 2}^{1/\ell} \le \mathrm{trace}(B^\ell)^{1/\ell} = \mathrm{trace}((\Gamma^{-1} A)^{\ell})^{1/\ell}.
\end{align*}
Raising the equation to the $\ell$-th power and taking expectation over $\bm{b}$, we obtain
\begin{align*}
    \E_{\bm{b}}[ \|B\|_{2\to 2}^{\ell}] \le \E_{\bm{b}}[ \mathrm{trace}((\Gamma^{-1} A)^{\ell})].
\end{align*}
Here $\mathrm{trace}((\Gamma^{-1} A)^{\ell})$ can be understood as counting the (weighted) number of closed walks of length $\ell$ in the graph described by $\Gamma^{-1}A$. Fix $S_1\to^{C_1} S_2 \to^{C_2} \dots \to^{C_{\ell-1}} S_{\ell} \to^{C_\ell} S_1$ to be one such walk. We use notation $S_i \to^{C} S_{i+1}$ to denote that the edge used from $S_i$ to $S_{i+1}$ is $C\in \mathcal{H}$. The crucial observation is that, if there is some $C\in \mathcal{H}$ that is used an \emph{odd} number of times in the walk, then the average contribution of this walk (over the randomness of $\bm{b}$) is zero. On the other hand, if every edge $C\in \mathcal{H}$ is used an \emph{even} number of times, the contribution of this walk is simply $\prod_{i=1}^{\ell} \Gamma_{S_i,S_i}^{-1}$ regardless of the realization of $\bm{b}$.

The above observation dictates that every walk with non-zero contribution involves at most $\ell/2$ hyperedges. Hence, a \emph{necessary} condition for a walk $(S_1,\dots, S_{\ell})$ to have non-zero contribution is that $|\{S_i\}_{i\in [1,\ell]}|\le \frac{\ell}{2}$ and $S_i \in N(S_{i-1})$ for every $i$, where $N(S_{i-1})$ denotes the neighbors of $S_{i-1}$. With this in mind, we can write
\begin{align}
     &~~~~ \E_{\bm{b}}[ \mathrm{trace}((\Gamma^{-1} A)^{\ell})] \notag \\
     &\le \sum_{S_1} \frac{1}{\Gamma_{S_1,S_1}} \sum_{S_2\in N(S_1)} \frac{1}{\Gamma_{S_2,S_2}} \dots \sum_{S_{\ell}\in N(S_{\ell-1})} \frac{\mathbf{1}\{ | \{S_i\}_{i\in [1,\ell]}| \le \ell/2 \land S_{1} \in N(S_\ell)\}}{\Gamma_{S_{\ell-1},S_{\ell-1}} \cdot \Gamma_{S_{\ell},S_{\ell}}}.\label{equ: trace method even step 1}
\end{align}
Let us inspect $|\{S_i\}_{1\le i\le t}|$ as $t$ increases. See that $S_1$ is certainly a new addition to the collection and there are $\binom{n}{r}$ possible choices of $S_1$. Other than $S_1$, we enumerate a size-$(\ell/2-1)$ subset of $[2,\ell]$, denoted by $T\subseteq [2,\ell]$, and we think of $T$ as those steps where there \emph{may}\footnote{It is possible that the number of new additions is less than $\ell/2-1$. However, as it will be clear from the proof, only enumerating $T$ of size $\ell/2-1$ suffices to cover all contributions.} be a new addition to the set $\{S_i\}_{1\le i\le t}$. For every $i\in T$, the choice of $S_{i}$ (having chosen $S_{i-1}$) has $D_{S_i,S_i}$ possibilities and their weighted contribution is $\frac{D_{S_i,S_i}}{\Gamma_{S_i,S_i}}\le 1$. For every $i\notin T$, the choice of $S_i$ has at most $i-1\le \ell$ possibilities and their contribution is at most $\frac{\ell}{\Gamma_{S_i,S_i}}\le \frac{\ell}{d}$. This allows us to upper-bound
\begin{align}
    \eqref{equ: trace method even step 1}\le \binom{n}{r} \cdot \sum_{T} 1^{\ell/2-1} \cdot (\frac{\ell}{d})^{\ell/2} \le \binom{n}{r} \binom{\ell-1}{\ell/2} (\frac{\ell}{d})^{\ell/2} \le n^r 2^\ell \left( \frac{\ell}{d} \right)^{\ell/2} \le O\left( \frac{\ell}{d} \right)^{\ell/2}. \label{equ: trace method even step 2}
\end{align}
Finally, we may use Markov's inequality to conclude that 
\begin{align*}
    \Pr_{\bm{b}}\left[ \|B\|_{2\to 2}\ge \lambda \right] \le \lambda^{-\ell}\cdot \left( n^r 2^\ell \left( \frac{\ell}{d} \right)^{\ell/2} \right) \le O\left( \frac{\ell}{\lambda^2 d} \right)^{\ell/2}.
\end{align*}
Setting $\lambda = O(\sqrt{\ell/d})$ establishes the lemma.
\end{proof}

\subsubsection{Wrap-up for the Even-Arity Case} \label{sec: refute even summary}

We now conclude the proof of Theorem~\ref{thm: refute semi random xor} for the case of even $k$. Namely, whenever $\bm{b}$ is such that the conclusion of Lemma~\ref{lemma: spectral bound for random b even} holds (which happens with probability $1-\frac{1}{\mathrm{poly}(n)}$ if drawn from an $\ell$-wise independent distribution), we have (by \Cref{equ: standard vs reweighted even}):
\begin{align}
    \max_{x\in \{\pm 1\}^n} \left\{ \psi(x) \right\} \le \frac{1}{\binom{n}{r} d} \| A \|_{\infty \to 1} \le 2 \|B\|_{2\to 2} \le 2 \sqrt{\frac{r\log n}{d}} \le \eps. \label{equ: val bounded by kikuchi even}
\end{align}
To verify the last inequality, recall that $d\ge \frac{1}{2}\left(\frac{r}{n}\right)^{k/2}m \ge c'\cdot r\log n\cdot \eps^{-2}$ where $c'$ is a large constant depending on $c_{\mathrm{refute}}$.


\subsection{Proofs of the Odd-Arity Case}

The proof for the odd-arity case involves more manipulation. Here, we outline only the key steps in the proof and refer the reader to \cite{HsiehKM23} for more details.

\subsubsection{Construction of the Kikuchi Matrices}\label{sec: construct kikuchi odd k}

Let $k\ge 3$ be odd and $\psi_{\mathcal{H},b}$ be a $k$-XOR instance. For any level $r\ge \frac{k+1}{2}$, it is shown in \cite{HsiehKM23} (following \cite{GKM}) how to split $\mathcal{H} = \mathcal{H}^1\sqcup \dots \sqcup \mathcal{H}^{k-1}$ and construct a sequence of $k-1$ matrices $A^{(1)}_b,\dots, A^{(k-1)}_b$, with the following properties:
\begin{itemize}
    \item The instance $\psi_{\mathcal{H},b}$ is decomposed into $(k-1)$ sub-instances $\psi^{1}_{\mathcal{H},b},\dots, \psi^{k-1}_{\mathcal{H},b}$, where $\psi^t_{\mathcal{H},b}(x) =\E_{\bm{C}\sim \mathcal{H}^t}[b_{\bm{C}}\cdot x_{e_{\bm{C}}}]$. Each $A_b^{(t)}$ is ``responsible'' for one instance and is a matrix in $\mathbb{R}^{\binom{2n}{r}\times \binom{2n}{r}}$. The rows and columns of $A_b^{(t)}$ are indexed by size-$r$ subsets of $2n$ naturally.
    \item The entries of all $A^{(t)}_b$ are determined as follows. For every pair of distinct edges $C,C'\in \mathcal{H}^{t}$, either they do not contribute to any matrix, or there is some $t\in [k]$ and a subset $E_{C,C'} \subseteq \{ (S,T) \in \binom{2n}{r}\times \binom{2n}{r} \}$ of entries of $A^{(t)}_b$ which receive a contribution of $b_{C}\cdot b_{C'}$ from the pair $(C,C')$. 
\end{itemize}

\paragraph*{Poniters to HKM.} In the above, we summarized and distilled the key constructions from \cite{HsiehKM23} that are most essential to us. If readers are also interested in the details of the construction, we provide a brief road map to relevant sections of \cite{HsiehKM23} for reference.

The first step is to distribute the hyperedges of $\mathcal{H}$ into $k-1$ hypergraphs $\mathcal{H}^{(1)},\dots, \mathcal{H}^{(t-1)}$ according to certain rules. The readers are referred to Algorithm 2 in Section 4.2 of \cite{HsiehKM23} for the details. After the decomposition, the $t$-th matrix $A_b^{(t)}$ will be constructed from $\mathcal{H}^{(t)}$. 

Next, consider the most natural way to construct a Kikuchi matrix (if we were to follow the approach of the even-arity case): one would collect subsets of $[n]$ of a chosen size, and add an edge between $(S,T)$ if $S\oplus T = e_C$ for some $C\in \mathcal{H}$. Unfortunately, since $e_C$ is of odd size, this naive construction necessitates an imbalanced Kikuchi graph (with subsets of different sizes) and results in suboptimal bounds. It turns out a better idea is to add an edge between $(S,T)$ if (roughly speaking) $S\oplus T = e_C\oplus e_{C'}$ for some $C,C'\in \mathcal{H}$. However, implementing this plan requires care and creativity. Refer to Section 3 and Section 4.2 of \cite{HsiehKM23} for details.


\subsubsection{Analysis of the Kikuchi Matrices}

Next, we state the following advanced properties about these matrices. The first claim connects the value of the XOR-instance with quadratic forms of the Kikuchi matrices.
\begin{claim}\label{claim: value as quadratic form odd k}
There is an absolute constant $c_{\text{odd}}$ for which the following is true. Let $k\ge 3$ be odd and $r\ge 1$ be a parameter (the ``Kikuchi level''). Let $\psi_{\mathcal{H},b}$ be a $k$-XOR instance, with $n$ variables and $m = |\mathcal{H}| \ge c_{\text{odd}}^k\cdot  n \log n (\frac{n}{r})^{\frac{k}{2} - 1} \eps^{-4}$ clauses. Suppose further that $2k\le r\le n$.

Let $\psi_{\mathcal{H},b}^1,\dots, \psi_{\mathcal{H},b}^{k-1}$ 
be sub-instances of $\psi_{\mathcal{H},b}$ and $A_{b}^{(1)},\dots, A_{b}^{(k-1)}$ be the Kikuchi matrices from Section~\ref{sec: construct kikuchi odd k}. For any $x\in \{\pm 1\}^{n}$, define $x^{\odot r} \in \{\pm 1\}^{\binom{2n}{r}}$ such that for $S\subseteq [n]\times [2]$ with $S = (S^{(1)},S^{(2)})$, the $S$-th entry of $x^{\odot r}$ is $x_{S^{(1)}\oplus S^{(2)}}$. Then, it holds that
$$
\forall t\in [1,k-1], ~~ \psi^t_{\mathcal{H},b}(x)^2 \le \frac{\eps^2}{2} + \frac{k p_t}{2 \alpha_t m^2} (x^{\odot r})^\top A_{b}^{(t)} x^{\odot r},
$$
where $\alpha_t, p_t$ are parameters chosen the same as in \cite{HsiehKM23}. 
\end{claim}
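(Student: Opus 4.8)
The plan is to prove the bound separately for each $t \in [1,k-1]$, reproducing the odd-arity Cauchy--Schwarz-plus-Kikuchi argument of \cite{GKM, HsiehKM23} and feeding in the structural properties of the decomposition $\mathcal{H} = \mathcal{H}^1 \sqcup \cdots \sqcup \mathcal{H}^{k-1}$ and the matrices $A_b^{(1)}, \dots, A_b^{(k-1)}$ recalled in \Cref{sec: construct kikuchi odd k}. Fixing $t$, I would write $\psi^t_{\mathcal{H},b}(x) = \frac{1}{m}\sum_{C \in \mathcal{H}^t} b_C\, x_{e_C}$ and use the fact, built into the construction, that each $C \in \mathcal{H}^t$ carries a designated sub-scope $\sigma_C \subseteq e_C$, so that $\mathcal{H}^t$ partitions into buckets (a bucket $\beta$ being the $C$'s with a common $\sigma_C = \sigma_\beta$); the sorting that defines $\mathcal{H}^1, \dots, \mathcal{H}^{k-1}$ (Section~4.2 of \cite{HsiehKM23}) is exactly what makes this bucketing \emph{balanced} --- at most $p_t$ constraints per bucket and at most $\tfrac{k p_t}{2}$ nonempty buckets. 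Pulling out the common $\sigma_\beta$ (whose parity $x_{\sigma_\beta}$ satisfies $x_{\sigma_\beta}^2 = 1$) and applying Cauchy--Schwarz over the nonempty buckets would give
\begin{align*}
\psi^t_{\mathcal{H},b}(x)^2 \;\le\; \frac{1}{m^2}\Bigl(\sum_\beta x_{\sigma_\beta}^2\Bigr)\Bigl(\sum_\beta z_\beta^2\Bigr) \;\le\; \frac{k p_t}{2 m^2}\sum_\beta z_\beta^2, \qquad z_\beta := \sum_{C \in \beta} b_C\, x_{e_C \setminus \sigma_\beta}.
\end{align*}

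The next step is to expand $z_\beta^2 = \sum_{C, C' \in \beta} b_C b_{C'}\, x_{(e_C \setminus \sigma_\beta) \oplus (e_{C'} \setminus \sigma_\beta)}$ and separate the diagonal from the off-diagonal. The diagonal $C = C'$ contributes $\sum_\beta |\beta| = |\mathcal{H}^t| \le m$, which after the prefactor becomes $\le \tfrac{k p_t}{2 m} \le \tfrac{\eps^2}{2}$ by the hypothesis on $m$ (the diagonal alone only needs $m \gtrsim k p_t \eps^{-2}$, and $m \ge c_{\text{odd}}^k n \log n (n/r)^{k/2-1}\eps^{-4}$ is far stronger). On the off-diagonal, $\sigma_\beta$ cancels so $(e_C \setminus \sigma_\beta) \oplus (e_{C'} \setminus \sigma_\beta) = e_C \oplus e_{C'}$, and here I would invoke the defining property of $A_b^{(t)}$ from \Cref{sec: construct kikuchi odd k}: each in-bucket pair $\{C, C'\}$ owns a set $E_{C,C'}$ of exactly $\alpha_t$ entries of $A_b^{(t)}$, all equal to $b_C b_{C'}$ and sitting at positions $(S,T)$ with $(S^{(1)} \oplus S^{(2)}) \oplus (T^{(1)} \oplus T^{(2)}) = e_C \oplus e_{C'}$, and these sets together exhaust the nonzero entries. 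Since $x^{\odot r}_S = x_{S^{(1)} \oplus S^{(2)}}$, each such slot contributes $x^{\odot r}_S x^{\odot r}_T = x_{e_C \oplus e_{C'}}$, so the off-diagonal sum equals $\tfrac{1}{\alpha_t}(x^{\odot r})^\top A_b^{(t)} x^{\odot r}$. Plugging both pieces into the display above yields
\begin{align*}
\psi^t_{\mathcal{H},b}(x)^2 \;\le\; \frac{k p_t}{2 m^2}\Bigl(|\mathcal{H}^t| + \tfrac{1}{\alpha_t}(x^{\odot r})^\top A_b^{(t)} x^{\odot r}\Bigr) \;\le\; \frac{\eps^2}{2} + \frac{k p_t}{2 \alpha_t m^2}(x^{\odot r})^\top A_b^{(t)} x^{\odot r},
\end{align*}
which is the claim; note that keeping $|\mathcal{H}^t| + \tfrac{1}{\alpha_t}(\cdot) = \sum_\beta z_\beta^2 \ge 0$ intact rather than discarding the off-diagonal term makes the inequality hold unconditionally.

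The main obstacle is not in this skeleton but in the combinatorial bookkeeping that I am importing wholesale from \cite{HsiehKM23}: verifying (i) that $\mathcal{H}$ really can be split into $\mathcal{H}^1, \dots, \mathcal{H}^{k-1}$ so that each $\mathcal{H}^t$ has balanced bucketing with the stated $p_t$, and (ii) that the Kikuchi matrix $A_b^{(t)}$ on the doubled ground set $[n] \times [2]$ assigns the \emph{same} number $\alpha_t$ of balanced $(S,T)$-slots to every in-bucket pair and records nothing else --- the doubling together with the equal set sizes $|S| = |T| = r$ is exactly what allows this, since $|e_C \oplus e_{C'}|$ is even and can be completed symmetrically into a pair of $r$-sets, and the regime $2k \le r \le n$ keeps the binomial counts that determine $\alpha_t$ (and hence the effective average degree of the Kikuchi graph) well-behaved. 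Since these facts are established in \cite{HsiehKM23} and our modification does not touch the combinatorics, I would state the construction and its properties as in \Cref{sec: construct kikuchi odd k} and cite \cite{HsiehKM23} for their verification, rather than reproduce the manipulations. The only genuinely new observation --- which does not enter this claim but is used immediately afterwards to bound $\|A_b^{(t)}\|$ --- is that $b$ appears in the relevant trace only through monomials of degree at most $\ell = 2\lceil r \log n\rceil$, so an $\ell$-wise independent (or $\ell$-wise $\eta$-biased) $b$ suffices, exactly as in the even-arity analysis of \Cref{sec: refute even analysis}.
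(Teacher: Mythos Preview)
Your proposal is correct and matches the paper's approach: the paper's own ``proof'' of this claim is nothing more than a pointer to \cite[Proof of Lemma~4.5]{HsiehKM23}, and what you have sketched is exactly that argument (bucket constraints of $\mathcal{H}^t$ by their shared sub-scope, Cauchy--Schwarz over buckets, bound the diagonal by $\eps^2/2$ using the lower bound on $m$, and identify the off-diagonal with $\tfrac{1}{\alpha_t}(x^{\odot r})^\top A_b^{(t)} x^{\odot r}$), together with the same explicit deferral of the combinatorial bookkeeping on $p_t,\alpha_t$ and the decomposition to \cite{HsiehKM23}. The only minor point to watch is the normalization of $\psi^t_{\mathcal{H},b}$ (the paper defines it as $\E_{C\sim\mathcal{H}^t}[\cdot]$ rather than $\tfrac{1}{m}\sum_{C\in\mathcal{H}^t}$), but since the claim itself carries $m^2$ in the denominator and the parameters are lifted verbatim from \cite{HsiehKM23}, this is a notational wrinkle in the paper rather than a gap in your argument.
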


The parameters are taken from \cite[Lemma 4.5]{HsiehKM23} verbatim (c.f.~\cite[Equation (5), (6) and (8)]{HsiehKM23}. For the proof of this claim, the readers are referred to \cite[Proof of Lemma 4.5]{HsiehKM23}.

The next claim shows that when $\bm{b}$ is chosen randomly from a limited-independence source, we have the desired upper bound on $\psi^t_{\mathcal{H},\bm{b}}(x)^2$.

\begin{claim}\label{claim: value upper bound for odd k}
    Consider the same setup as in Claim~\ref{claim: value as quadratic form odd k}. Let $\ell = 2\lceil r \log n\rceil$. Suppose $\bm{b}$ is chosen from an $2\ell$-independent distribution. Then, for every $t\in [k-1]$ and $\psi_{\mathcal{H},\bm{b}}^t$, there is an efficiently computable (in time $n^{O(r)}$) quantity $\eta(\mathcal{H}^t,\bm{b})$ such that, with probability $1-\frac{1}{\mathrm{poly(n)}}$, we have
    \begin{align*}
        \max_{x\in \{\pm 1\}^{n}} {(\psi^t_{\mathcal{H},\bm{b}}(x))}^2 \le \eta(\mathcal{H}^t,\bm{b}) \le  \eps^2.
    \end{align*}
\end{claim}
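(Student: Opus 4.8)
\textbf{Proof proposal for Claim~\ref{claim: value upper bound for odd k}.}

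The plan is to combine Claim~\ref{claim: value as quadratic form odd k} with a spectral-norm bound on each $A_b^{(t)}$ proved via the same trace-method argument as in Lemma~\ref{lemma: spectral bound for random b even}, and then to read off the ``efficiently computable quantity'' $\eta(\mathcal{H}^t,\bm{b})$ directly from the quadratic-form expression. Fix $t\in[k-1]$. By Claim~\ref{claim: value as quadratic form odd k}, for every $x\in\{\pm1\}^n$ we have $\psi^t_{\mathcal{H},b}(x)^2 \le \frac{\eps^2}{2} + \frac{k p_t}{2\alpha_t m^2}\,(x^{\odot r})^\top A_b^{(t)} x^{\odot r}$, so it suffices to control the quadratic form $(x^{\odot r})^\top A_b^{(t)} x^{\odot r}$ uniformly over $x$. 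As in the even-arity wrap-up, I would bound this by the $\infty\to 1$ norm of $A_b^{(t)}$, reduce that to the spectral norm of the degree-reweighted matrix $B^{(t)} = \Gamma_t^{-1/2} A_b^{(t)} \Gamma_t^{-1/2}$ (with $\Gamma_t = D_t + d_t I$ the regularized degree matrix of the level-$r$ Kikuchi graph for $\mathcal{H}^t$), and then define
\[
    \eta(\mathcal{H}^t,\bm{b}) := \frac{\eps^2}{2} + \frac{k p_t}{2\alpha_t m^2}\cdot 2 d_t \binom{2n}{r} \cdot \|B^{(t)}\|_{2\to 2}.
\]
This quantity is computable in time $n^{O(r)}$ since $B^{(t)}$ is an explicit $\binom{2n}{r}\times\binom{2n}{r}$ matrix whose entries are read off from $\mathcal{H}^t$ and $\bm{b}$, and its spectral norm can be computed (or estimated to within $\frac{1}{\mathrm{poly}(n)}$, which is all we need) in polynomial time in its dimension. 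The first inequality of the claim, $\max_x (\psi^t_{\mathcal{H},\bm{b}}(x))^2 \le \eta(\mathcal{H}^t,\bm{b})$, then holds deterministically for every $\bm{b}$ by the chain $(x^{\odot r})^\top A_b^{(t)} x^{\odot r} \le \|A_b^{(t)}\|_{\infty\to 1} \le 2 d_t\binom{2n}{r}\|B^{(t)}\|_{2\to 2}$.

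For the second inequality, $\eta(\mathcal{H}^t,\bm{b})\le \eps^2$ with high probability, I need $\|B^{(t)}\|_{2\to 2}$ to be small with high probability over $\bm{b}\sim\mathcal{D}$. Here I would invoke exactly the trace-method computation of Lemma~\ref{lemma: spectral bound for random b even}: since $\ell = 2\lceil r\log n\rceil$ is even, $\|B^{(t)}\|_{2\to 2}^\ell \le \mathrm{trace}((B^{(t)})^\ell)$, and $\mathbb{E}_{\bm{b}}[\mathrm{trace}((\Gamma_t^{-1}A_b^{(t)})^\ell)]$ is a polynomial in $\bm{b}$ of degree at most $\ell$ — here the key point is that, per the second bullet of Section~\ref{sec: construct kikuchi odd k}, each \emph{edge} of the Kikuchi graph for $\mathcal{H}^t$ carries a weight of the form $b_C b_{C'}$, so a closed walk of length $\ell$ is a product of at most $\ell$ such degree-$2$ monomials, i.e.\ a polynomial of degree at most $2\ell$ in $\bm{b}$; this is why the hypothesis asks for a $2\ell$-wise independent distribution rather than merely $\ell$-wise. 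Thus the expectation over a $2\ell$-wise independent $\bm{b}$ agrees with the expectation over a truly uniform $\bm{b}$, and the combinatorial bound from \cite{HsiehKM23} — counting closed walks where every edge-pair contribution survives only if every hyperedge is used an even number of times — gives $\mathbb{E}_{\bm{b}}[\mathrm{trace}((\Gamma_t^{-1}A_b^{(t)})^\ell)] \le \mathrm{poly}(n)\cdot (O(\ell)/d_t)^{\ell/2}$, exactly as in \eqref{equ: trace method even step 2}. Markov's inequality then yields $\|B^{(t)}\|_{2\to 2} \le O(\sqrt{r\log n/d_t})$ with probability $1 - n^{-100}$. Plugging the lower bound on $d_t$ (which follows from the clause count $m \ge c_{\text{odd}}^k n\log n (n/r)^{k/2-1}\eps^{-4}$ together with the choice of $\alpha_t, p_t$ from \cite{HsiehKM23}) into the definition of $\eta(\mathcal{H}^t,\bm{b})$ makes the second term at most $\eps^2/2$, so $\eta(\mathcal{H}^t,\bm{b}) \le \eps^2$. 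A union bound over the $k-1$ values of $t$ preserves the $1 - \frac{1}{\mathrm{poly}(n)}$ success probability.

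The main obstacle I anticipate is \emph{not} the probabilistic estimate — that is a verbatim rerun of Lemma~\ref{lemma: spectral bound for random b even} once one observes the degree-$2\ell$ dependence on $\bm{b}$ — but rather checking that the parameters $\alpha_t, p_t, d_t$ imported from \cite{HsiehKM23} interact correctly so that the prefactor $\frac{k p_t}{2\alpha_t m^2}\cdot 2 d_t\binom{2n}{r}$ times $O(\sqrt{r\log n/d_t})$ collapses to $O(\eps^2)$ under the stated bound on $m$. This is precisely the bookkeeping that \cite{HsiehKM23} carries out in the proof of their Lemma~4.5 and the surrounding parameter choices; since we are told to take those parameters verbatim, the cleanest route is to cite that computation rather than reproduce it, noting only that the limited-independence substitution does not disturb any of the inequalities involved. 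One genuinely new (though minor) point to verify is that the ``efficiently computable'' certificate $\eta(\mathcal{H}^t,\bm{b})$ is monotone in the right direction — i.e.\ that we certify an \emph{upper} bound on $\psi^t$ that is itself provably at most $\eps^2$ on the good event — which follows immediately from the two displayed inequalities above.
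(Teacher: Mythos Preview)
Your proposal is correct and follows essentially the same route as the paper's proof sketch: both combine Claim~\ref{claim: value as quadratic form odd k} with a spectral-norm bound on the (reweighted) Kikuchi matrix via the trace method, observe that the trace is a bounded-degree polynomial in $\bm{b}$ so limited independence suffices, and defer the detailed combinatorics and parameter bookkeeping to \cite{HsiehKM23}. The only point the paper flags that you gloss over is that the matrix actually analyzed in \cite{HsiehKM23} is a ``slight modification'' $\widetilde{A}_{\bm{b}}$ of $A_{\bm{b}}^{(t)}$ (some additional regularization beyond the $\Gamma$-reweighting is needed in the odd-arity case), so the trace computation is not \emph{literally} the same as \eqref{equ: trace method even step 2}; but since you already plan to import that computation from \cite{HsiehKM23}, this is a bookkeeping detail rather than a conceptual gap.
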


\begin{proof}[Proof sketch]
    When $\bm{b}$ is chosen from the uniform distribution, it was shown in \cite[Proof of Lemma 4.5]{HsiehKM23} that the claimed bound is true. Inspecting their proof, one can find that the randomness of $\bm{b}$ is used only to give an upper bound for $\E_{\bm{b}}[ \mathrm{tr}( (\Gamma^{-1} \widetilde{A}_{\bm{b}})^{\ell} )]$ where $\Gamma, \widetilde{A}_{\bm{b}}$ are induced from $A^{(t)}_{\bm{b}}$'s. Specifically, $\Gamma$ is a degree matrix independent of $\bm{b}$, and $\widetilde{A}_{\bm{b}}$ is a slight modification of $A^{(t)}_{\bm{b}}$. Furthermore, from the spectral norm bound of $\| \Gamma^{-1} \tilde{A}_{\bm{b}} \|_2$ (which can be computed in $n^{O(r)}$ time), one can compute a quantity $\eta(\mathcal{H}^t, \bm{b})$ that gives an upper bound on $\psi^t(x)$ (this is through the connection from Claim~\ref{claim: value as quadratic form odd k}). As it turns out, we have that $\eta(\mathcal{H}^t, \bm{b})$ is upper-bounded by $\eps^2$ with high probability over a random $\bm{b}$.
    
    We now explain why a limited-independent source of $\bm{b}$ suffices. Let us expand the terms in $\mathrm{tr}[(\Gamma^{-1} \widetilde{A}_{\bm{b}})^\ell )]$. We see that each term naturally corresponds to a length-$\ell'$ closed walk in the (weighted) graph $\Gamma^{-1}\widetilde{A}_{\bm{b}}$. Focusing on one such walk, we observe that if some term $b_C$ appears an \emph{odd} number of times in the walk, then the average contribution of this walk is zero when one takes expectation over $\bm{b}$. Thus, the quantity $\mathbb{E}_{\bm{b}}[\mathrm{tr} ((\Gamma^{-1} \widetilde{A}_{\bm{b}})^\ell ))]$ can be understood as counting the (weighted) number of length-$\ell'$ closed walks where each $b_C$ is used an \emph{even} number of times. This key observation (combined with additional combinatorial properties obtained from the manipulation of \cite{HsiehKM23} we have briefly mentioned in Section~\ref{sec: construct kikuchi odd k}) allows \cite{HsiehKM23} to obtain the claimed bound. 
    
    Finally, our simple observation is that, since each walk of $(\Gamma^{-1} \widetilde{A}_{\bm{b}})^{\ell}$ 
    involves at most $\ell$ hyper-edges, drawing $\bm{b}$ from a $\ell$-independent source suffices to ``fool'' the proof and establish the same conclusion.
\end{proof}

Combining Claims~\ref{claim: value as quadratic form odd k} and \ref{claim: value upper bound for odd k}, the proof of Theorem~\ref{thm: refute semi random xor} is complete for the case of odd $k$.

\subsection{Extensions} \label{sec: refute extension}

We now establish the extensions of Theorem~\ref{thm: refute semi random xor} as promised.

\begin{proof}[Proof of Theorem~\ref{thm: refute semi random xor smaller independence}]
We construct the Kikuchi graphs as in Section~\ref{sec: refute even construction}, with Kikuchi level $r = k$. We obtain that the average degree of the Kikuchi graph will be $d \ge \frac{1}{2}\left( \frac{r}{n} \right)^{k/2} \cdot m$.

We then run the analysis of Section~\ref{sec: refute even analysis}. This time though, we will only be able to raise the matrix to $\ell$-th power. Using similar counting argument, we still obtain that
\begin{align*}
    \E_{\bm{b}}[ \|B\|_{2\to 2}^\ell ] \le n^r 2^\ell \left( \frac{\ell}{d} \right)^{\ell/2} \le O\left( \frac{\ell\cdot n^{2k/\ell}}{d} \right)^{\ell/2}.
\end{align*}
We will apply Markov's inequality on $\|B\|_{2\to 2}$ with threshold $\lambda = O(\delta^{-1/\ell} \cdot \sqrt{\frac{\ell\cdot n^{2k/\ell}}{d}})$. This gives us that
\begin{align*}
    \Pr[\|B\|_{2\to 2} \ge \lambda] \le \left( \frac{\ell n^{2k/\ell}}{d \lambda^2 } \right)^{\ell / 2} \le \delta.
\end{align*}
Whenever the spectral upper bound for $\|B\|_{2\to 2}$ holds true, we obtain that $\mathrm{val}(\psi_{\cH,b})\le 2\lambda$ (c.f.~\Cref{equ: val bounded by kikuchi even}). Finally, our assumption on $m$ implies that $d\ge c'\cdot \delta^{-2/\ell}\cdot \eps^{-2} \cdot \ell \cdot n^{2k/\ell}$ where $c'$ is a large constant. Consequently we have that $\lambda \le \frac{\eps}{2}$.
\end{proof}

\begin{proof}[Proof of Theorem~\ref{thm: refute semi random xor almost independence}]
    We just need to ensure that the trace method is fooled by an $\ell$-wise $\eta$-biased distribution. Here we only argue it for the even-arity case, and the proof for the odd-arity case is similar. 

    Consider \Cref{equ: trace method even step 1} but now with a $\bm{b}$ drawn from an $\ell$-wised $\eta$-biased independent source. Comparing this with the case of a perfect independent source, we find that the contribution of each closed walk changes by at most $\eta$. Although there are much more than $\binom{n}{r}$ walks, we observe that each walk $S_1\to \dots \to S_{\ell}$ comes with a weight of $\prod_{i=1}^{\ell} \frac{1}{\Gamma_{S_i,S_i}}$. It is therefore easy to see that the total weight of all walks is bounded by $\binom{n}{r}$, i.e., the number of ways of choosing the starting point. As such, working with $\ell$-wise $\eta$-biased source only affects the final conclusion of \Cref{equ: trace method even step 2} by an additive term of $\eta\cdot \binom{n}{r} \le (c_{\mathrm{almost}}\cdot \eps)^{\ell} \le \left( \frac{\ell}{d} \right)^{\ell/2}$ (recall that our parameter setting ensures $\ell \approx r\log n$ and $d\approx \frac{r\log n}{\eps^2}$), which can be ``absorbed'' into the big-Oh notation of $O(\ell/d)^{\ell/2}$ in the final bound of \Cref{equ: trace method even step 2}. 

    The case of the odd-arity case is similar. The main property one relies on is that the total weight of all closed walks in the re-weighted Kikuchi graph is bounded by the number of vertices, which in the case of odd-arity happens to be $\binom{2n}{r} \le (2n)^r$. Given this, one wants to choose $\eta$ so that the ``perturbation'' caused by the imperfect independence can be absorbed into the original trace-moment bound with perfect independence. The readers are referred to \cite{HsiehKM23} for the exact calculations.
\end{proof}

\subsection{Proof of \Cref{thm: main odd arity}} \label{sec: refute proof of main}

Equipped with \Cref{lemma: XOR reduction} and \Cref{thm: refute semi random xor almost independence}, we are ready to conclude the proof of \Cref{thm: main odd arity}.

Let $C: \Sigma^n\to \{\pm 1\}^m$ be the input circuit of \Cref{thm: main odd arity}. Apply \Cref{lemma: XOR reduction} on $C$ to produce the layer-respecting circuit $\widetilde{C}$ and the associated ensemble of $t$-XOR schemes $\{ \psi_{\mathcal{H}^\ell, *}\}_{\ell}$. Through appropriate re-parameterization of \Cref{thm: refute semi random xor almost independence}, there is a refutation algorithm that takes input $\mathcal{H}^{\ell}$ and $\bm{b}\sim \mathcal{D}$ ($\mathcal{D}$ is the distribution from the statement of Theorem~\ref{thm: main odd arity}) and produces a certificate of $\val(\psi_{\mathcal{H}^{\ell},\bm{b}}) \le 4^{-tw} \eps$. By union bound, the algorithm succeeds in producing a certificate for every $\psi_{\mathcal{H}^\ell,\bm{b}}$ with probability $1-\frac{1}{\poly(n)}$ over $\bm{b}\sim \mathcal{D}$. Whenever this happens, we have by Item 2 of \Cref{lemma: XOR reduction} that $\Delta(C(x,*),\bm{b})\ge \frac{1}{2} - \eps$ for every $x\in \Sigma^n$, as was to be shown.

\paragraph*{Data structure lower bound.} We now prove \Cref{thm: data structure lb} from \Cref{thm: main odd arity}. Suppose for contradiction that there is a space-$S$ and time-$t$ data structure for $f$. We interpret the query strategy of the data structure as a circuit $C:\Sigma^{S}\to \{0,1\}^m$. It follows that the range of $C$ covers the rows of $f$. Now, take $\mathcal{D}$ to be an $\eta$-almost $k$-wise independent distribution supported on rows of $f$, whose existence is assumed in the theorem statement. We then have that $\mathrm{range}(C) \supseteq \mathrm{supp}(\mathcal{D})$. However this is impossible as we have shown in \Cref{thm: main odd arity} that with high probability over $\bm{b}\sim \mathcal{D}$, the fractional Hamming distance from $\bm{b}$ to any string in the range of $C$ is lower bounded by $\frac{1}{2} - \eps$.

\paragraph*{Lower bound with even smaller independence.} A curious reader may find that, Theorem~\ref{thm: data structure lb} only covers \cite[Item 1 of Theorem 1]{KPI25}. The second item of \cite[Theorem 1]{KPI25} considers distributions with even more limited independence property. We can prove the same item by using \Cref{thm: refute semi random xor smaller independence} in place of \Cref{thm: refute semi random xor almost independence}, to prove a version of \Cref{thm: main odd arity} with limited independence. It is not hard to verify that the condition of $m\approx n^{\frac{t}{2}}\cdot n^{\frac{2t}{\ell}}$ from \Cref{thm: refute semi random xor smaller independence} will translate to a lower bound of $S\ge m^{\frac{1}{t(\frac{1}{2}+\frac{2}{\ell})}}$ (ignoring dependence on minor parameters and log factors), whenever the data structure problem $f\in \{\pm 1\}^{N\times m}$ supports an $\ell$-wise independent distribution, matching the parameter of \cite{KPI25}.


\section{Improved Bounds for Bit-Probe Model}

In this section, we prove~\Cref{thm: adaptive-ds-lb} and~\Cref{thm: nonadaptive-ds-lb}.

\subsection{Nonadaptive Bit-Probe}
We will first prove the simpler \Cref{thm: nonadaptive-ds-lb}. According to~\Cref{sec: refute proof of main}, it suffices to prove the following theorem in analogy to~\Cref{thm: main odd arity} which establishes~\Cref{thm: data structure lb}.

\begin{theorem}\label{thm:nonadaptive-bit-probe}
There are universal constants $c_{\text{bias}},c_{\text{nonadaptive}}$ such that the following holds. Let $t\ge 3,n,m\ge 1$ and $\mathcal{D}$ be a distribution over $\{0,1\}^m$ that is $(c_{\text{bias}}\cdot n)^{-t}$-biased.

Let $C:\ZO^n\to \ZO^m$ be a multi-output circut.
Suppose that each output bit $C(x)_i$ depends on at most $t$ bits of the input $x$. Then provided that $m\ge c^t_{\text{nonadaptive}}\cdot n^{(t-1)/2}\cdot \log n$,
\[
    \Pr_{\bm{b}\sim \mathcal{D}}[\bm{b}\in \range(C)]\le 2/3.
\]
\end{theorem}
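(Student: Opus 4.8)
The plan is to invoke the reduction of \Cref{equ: intro correlation bound high low decompose}: split every output bit's Fourier expansion into the degree-exactly-$t$ part and the lower-degree part, and handle the two parts with different budgets. Since each $C^i$ is a $t$-junta over Boolean inputs, its Fourier expansion lives on subsets of a fixed $t$-set, so the ``low-degree'' part is supported on subsets of size $\le t-1$. Running the reduction of \Cref{lemma: XOR reduction} (in the Boolean case, $w=1$) on the low-degree part produces a constant number ($2^{O(t)}$) of semi-random $t'$-XOR instances with $t'\le t-1$, each with right-hand side $\bm{b}$ (possibly with signs flipped, which preserves $\eta$-bias as noted in the remark on weighted vs.\ unweighted XORs). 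By \Cref{thm: refute semi random xor almost independence}, applied with arity $k\le t-1$ and $\eps$ a small constant, $m\ge c^t n^{(t-1)/2}\log n$ suffices to certify $\val\le \eps$ for all of these simultaneously with probability $1-1/\poly(n)$ over $\bm{b}\sim\mathcal D$, using that a $(c_{\text{bias}}n)^{-t}$-biased distribution is in particular $\ell$-wise $\eta$-biased with $\ell\approx t\log n$ and $\eta$ small enough to meet the hypothesis $\eta\le n^{-r}(c_{\text{almost}}\eps)^\ell$. Hence the low-degree contribution to $\E_{\bm i}[C(x)_{\bm i} b_{\bm i}]$ is at most $\eps$ in absolute value, uniformly over $x$, on a $1-1/\poly(n)$ fraction of $\bm b$.

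It remains to bound the degree-exactly-$t$ contribution away from $1$. Here I would split the output coordinates into two classes. Call $i\in[m]$ a \emph{parity coordinate} if $C^i$ is exactly $\pm x_\alpha$ for some $|\alpha|=t$ (equivalently, all Fourier mass of $C^i$ sits on the top character); otherwise call it a \emph{non-parity coordinate}. For a non-parity coordinate, the top Fourier coefficient satisfies $|\widehat{C^i_\alpha}|\le 1-2^{-t}$ (a $t$-junta that is not a $\pm$-parity must have a lower-degree character, and all coefficients are multiples of $2^{-t}$), so the total weighted mass contributed by non-parity coordinates to the degree-$t$ term is at most $(1-2^{-t})\cdot\frac{\#\text{non-parity coords}}{m}$. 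For the parity coordinates: if there were more than $n$ of them, the corresponding system $x_{\alpha_i}=b_i$ is an affine system over $\mathbb F_2$ of rank $>n$ in only $n$ unknowns in a fixed way, so its solution set — hence the set of strings consistent with those coordinates — lies in an affine subspace of $\mathbb F_2^{(\text{those coords})}$ of codimension $\ge 1$; a $(c_{\text{bias}}n)^{-t}$-biased (indeed any nontrivially biased) distribution puts mass $>2/3$ outside any such subspace, so $\Pr_{\bm b}[\bm b\in\range(C)]\le 1/3$ and we are done. Otherwise there are at most $n\le o(m)$ parity coordinates (using $m\gg n$), so they contribute at most $n/m=o(1)$ to the degree-$t$ term regardless of their signs.

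Putting it together: on the good event (probability $1-1/\poly(n)$), for every $x$,
\[
\E_{\bm i\sim[m]}\bigl[C(x)_{\bm i}\,b_{\bm i}\bigr]\;\le\;\underbrace{\eps}_{\text{low deg}}\;+\;\underbrace{(1-2^{-t})}_{\text{non-parity, degree }t}\;+\;\underbrace{o(1)}_{\text{parity, degree }t}\;<\;1 ,
\]
where for the non-parity bound one uses $\sum_{\alpha:|\alpha|=t}|\widehat{C^i_\alpha}|\le 1$ for each non-parity $i$ together with $|\widehat{C^i_\alpha}|\le 1-2^{-t}$ on its (unique, since $C^i$ is a $t$-junta) top character, so the weighted $t$-XOR value from these coordinates is at most $1-2^{-t}$ even when all constraints are satisfiable. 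Since $\langle C(x),b\rangle<1$ for all $x$ means no $x$ has $C(x)=b$, we conclude $\Pr_{\bm b\sim\mathcal D}[\bm b\in\range(C)]\le 1/\poly(n)\le 2/3$.

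\textbf{Main obstacle.} The delicate point is the bookkeeping in the degree-$t$ term: a non-parity $t$-junta can still have $\sum_{|\alpha|=t}|\widehat{C^i_\alpha}|$ close to $1$ (only the single genuine top character is relevant for a $t$-junta, but one must be careful that ``degree exactly $t$'' characters of a $t$-junta all equal the junta's support set, so there is really only one), and one must verify the $1-2^{-t}$ gap survives the XOR-principle averaging — i.e., confirm that after splitting $\widehat{C^i_\alpha}$ into $\pm2^{-t}$ pieces and absorbing a global $2^{-t}$ factor as in the remark, the resulting unit-weight $t$-XOR instance over the non-parity coordinates has the right count and the averaging over $2^{O(t)}$ sub-instances only loses constant factors. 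The parity/non-parity dichotomy and the affine-subspace argument are the conceptually new ingredient over a black-box application of \Cref{thm: main odd arity}; everything else is reduction plus \Cref{thm: refute semi random xor almost independence}.
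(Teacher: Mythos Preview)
Your proposal is correct and follows essentially the same strategy as the paper: split each output into its degree-$<t$ and degree-$=t$ Fourier parts, strongly refute the former via \Cref{thm: refute semi random xor almost independence} (one $|\alpha|$-XOR instance per $\alpha\subsetneq[t]$), and for the latter separate the parity/NXOR outputs (handled by a linear-dependency plus small-bias argument when there are more than $n$ of them) from the non-parity outputs (whose single top coefficient satisfies $|\widehat{g^i_{[t]}}|\le 1-2^{1-t}$). The only cosmetic differences are that the paper \emph{deletes} the at most $n$ parity outputs and works with the pruned circuit (rather than bounding their $n/m=o(1)$ contribution as you do), and your phrase ``rank $>n$'' should really read ``the $>n$ degree-$t$ characters, as vectors in $\mathbb{F}_2^n$, are linearly dependent,'' which is what yields the codimension-$\ge 1$ affine constraint on the output.
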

\begin{proof}
    Throughout the proof, our analysis will be carried out in the $\{\pm 1\}$-basis.
    For each $i\in [n]$, we can write
    \[
    C(x)_i=g^i(x_{p^i_1},\ldots,x_{p^i_t})=\sum_{\alpha\subseteq [t]} \widehat{g^i_\alpha} \prod_{j\in \alpha} x_{p^i_j}
    \]
    for some $t$-ary predicate $g^i:\{\pm 1\}^t\to \{\pm 1\}$ and indices $p^i_1,\ldots,p^i_t\in [n]$.

    If at least $n+1$ of the predicates are XOR or NXOR, then we can find a collection of indices $i_1,\ldots,i_{r}\in [m]$ and a constant $b^*\in \{\pm 1\}$ such that $\prod_{j=1}^r C(x)_{i_j}=b^*$ for all $x\in \{\pm 1\}^n$.
    Using the assumption that $\mathcal{D}$ is $(c_{\text{bias}}\cdot n)^{-t}$-biased,  we deduce that
    \[
    \Pr_{\bm{b}\sim \mathcal{D}}[\bm{b}\in \range(C)]\le \Pr_{\bm{b}\sim \mathcal{D}}[\prod_{j=1}^r \bm{b}_{i_j}=b^*]\le 1/2+(c_{\text{bias}}\cdot n)^{-t}\le 2/3.
    \]
    
    Let us from now on consider the case that at most $n$ output bits are XOR or NXOR.
    We may further assume that all the output bits are neither XOR nor NXOR by removing such bits and working with the resulting pruned circuit $C':\{\pm 1\}^n\to \{\pm 1\}^{m'}$ with stretch $m'=(1-o(1))m$.
    For each $i\in [m]$, we have
\begin{equation}\label{eqn:bound_leading_term_coefficient}
    \lvert \widehat{g}^i_{[t]}\rvert =2^{-t}\cdot\lvert\sum_{x\in \{\pm 1\}^t} g^i(x)x_{[t]}\rvert \le 1-2^{1-t}
    \end{equation}
    since $g^i$ is neither XOR nor NXOR.
    

    For each $\alpha\subsetneq [t]$, we construct an $|\alpha|$-uniform hypergraph $\cH_{\alpha}\coloneqq \{(p^i_j)_{j\in \alpha}\mid i\in [m]\}$ with $n$ vertices and $m$ edges.
    Let $\widehat{g_\alpha}\coloneqq (\widehat{g^i_\alpha})_{i\in [m]}\in [-1,1]^m$ be a real-valued vector, and
    \[
    \widetilde{B} \coloneqq \{b\in \{\pm 1\}^m\mid \val(\psi_{\cH_\alpha,b\odot\widehat{g_\alpha}})\}\le 2^{-2t}, \forall\alpha\subsetneq [t]\},
    \]
    where $\odot$ denotes the entrywise product.
    We prove that $\widetilde{B}\cap \range(C)=\emptyset$.
    Indeed, fix any $b\in \widetilde{B}$ and observe that for each $x\in \{\pm 1\}^n$,
    \begin{align*}
    \left\langle C(x), b \right\rangle&=\frac{1}{m}\sum_{i=1}^m b_ig^i(x_{p^i_1},\ldots,x_{p^i_t})\\
    &=\frac{1}{m}\sum_{i=1}^m b_i\sum_{\alpha\subseteq [t]} \widehat{g^i_\alpha}\prod_{j\in \alpha} x_{p^i_j}\\
    &=\frac{1}{m}\sum_{\alpha\subseteq [t]}  \sum_{i=1}^m \widehat{g^i_\alpha}\cdot b_i\prod_{j\in \alpha} x_{p^i_j}\\
    &\le \sum_{\alpha\subsetneq [t]} \psi_{\cH_{\alpha},b\odot \widehat{g_\alpha}}(x)+\frac{1}{m}\sum_{i=1}^m|\widehat{g^i_{[t]}}| \\
    &\le (2^t-1)\cdot 2^{-2t}+(1-2^{1-t}) & (b\in \widetilde{B}\text{ and \eqref{eqn:bound_leading_term_coefficient}})\\
    &<1.
    \end{align*}

    For each $\alpha\subsetneq [t]$, by~\Cref{thm: refute semi random xor almost independence},
    we have $\Pr_{\bm{b}\sim \mathcal{D}}[\val(\psi_{H_\alpha,\bm{b}\odot \widehat{g_\alpha}})\ge 2^{-2t}]\le \frac{1}{\poly(n)}$ provided that $m\ge c^t_{\text{nonadaptive}}\cdot n^{(t-1)/2}\cdot \log n\ge c^t_{\text{almost}}\cdot n\log n\cdot (2n/t)^{\frac{t-3}{2}}\cdot (2^{2t})^4$.
    Finally, by applying union bound, we conclude that when $n$ is large enough,
    \[
    \Pr_{\bm{b}\sim \cD}[\bm{b}\in \range(C)]\le 1-\Pr_{\bm{b}\sim \cD}[\bm{b}\in \widetilde{B}]\le (2^t-1)\cdot \frac{1}{\poly(n)}\le 2/3. \qedhere
    \]
\end{proof}

\subsection{Adaptive Bit-Probe}
Now we prove~\Cref{thm: adaptive-ds-lb}, which is implied by the following theorem.

\begin{theorem}\label{thm:adaptive-bit-probe}
Let $c_{\text{bias}}$ be as in~\Cref{thm:nonadaptive-bit-probe}.
For every $t\ge 3$, there is a universal constant \footnote{For the ease of the presentation, we prove the statement with $c_{\text{adaptive}}=2^{2^{O(t)}}$, which suffices for our purpose.
    We remark that $c_{\text{adaptive}}$ can be further reduced to $2^{O(t)}$ through a more refined analysis.} $c_{\text{adaptive}}=c_{\text{adaptive}}(t)$ such that the following holds. Let $n,m\ge 1$ and $\mathcal{D}$ be a distribution over $\{0,1\}^m$ that is $(c_{\text{bias}}\cdot n)^{-t}$-biased.

Let $C:\ZO^n\to \ZO^m$ be a multi-output circut.
Suppose that each output bit $C(x)_i$ can be computed by a $t$-query adaptive decision tree over the input $x$. Then provided that
\[
m\ge \begin{cases}
c_{\text{adaptive}}\cdot n^{t/2-\frac{t-2}{2(t+2)}}\cdot \log^5 n & \text{if $t$ is odd} \\
c_{\text{adaptive}}\cdot n^{t/2-\frac{t-2}{2(t+6)}}\cdot \log^5 n & \text{if $t$ is even}
\end{cases},
\]
we have
\[
    \Pr_{\bm{b}\sim \mathcal{D}}[\bm{b}\in \range(C)]\le  \frac{1}{\mathrm{poly}(n)}.
\]
\end{theorem}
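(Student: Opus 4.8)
The argument mirrors the proof of \Cref{thm:nonadaptive-bit-probe}, the one genuinely new point being that the degree-$t$ part of the circuit must now be \emph{refuted} below value $1$ rather than bounded by a single Fourier coefficient. First, apply \Cref{lemma: XOR reduction} with $w=1$ to replace $C$ by a layer-respecting $\widetilde C:\{\pm1\}^{nt}\to\{\pm1\}^m$ with $\range(\widetilde C)\supseteq\range(C)$, and run the Fourier grouping of \Cref{sec: reduction for DTs}. With $w=1$ this produces $O(4^t)$ weighted $t$-XOR schemes, split into: (i) the ``top'' collection $\mathcal T_{\beta_{\mathrm{top}}}$ of degree-exactly-$t$ characters, itself broken into $\le 2^t$ sub-schemes $\psi_{\mathcal H^{\mathrm{top},\ell},*}$ with weight vectors $\widehat g_\ell\in[-1,1]^m$ satisfying $\sum_\ell|\widehat g_{\ell,i}|\le1$ for each $i$ (a depth-$t$ decision tree has Fourier $\ell_1$-mass at most $1$ among its degree-$t$ coefficients, since those arise only from depth-$t$ leaves, each contributing $\pm2^{-t}$), and (ii) $O(4^t)$ schemes of arity $\le t-1$. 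Since $b\in\range(C)$ forces $1=\langle\widetilde C(x),b\rangle=\sum_{\beta,\ell}\psi_{\mathcal H^{\beta,\ell},\,b\odot\widehat g_{\beta,\ell}}(x)$ for some $x$, it suffices to certify, with probability $1-1/\poly(n)$ over $\bm b\sim\mathcal D$, that each arity-$\le t-1$ sub-scheme has value $\le\eps:=8^{-t}$ and the top sub-schemes have combined value $\le1-2^{-t+1}$; these give $\langle\widetilde C(x),b\rangle<1$ for all $x$, hence $\bm b\notin\range(C)$.

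Second, handle affine outputs. Call $i$ \emph{linear} if $C^i$, on its relevant bits, is $\pm$ a parity, equivalently if some Fourier coefficient of $C^i$ has magnitude $1$; such an output places a unit-weight constraint into a scheme, which no refutation can push below value $1$. If at least $n+\Theta(n\log n)$ outputs are linear, a greedy elimination over $\mathbb F_2$ on their coefficient vectors yields $s=\Omega(\log n)$ pairwise-disjoint nonempty subsets $T_1,\dots,T_s$ of linear outputs with each $\prod_{i\in T_j}C(x)_i$ a fixed sign; disjointness and $(c_{\mathrm{bias}}n)^{-t}$-bias then give $\Pr_{\bm b\sim\mathcal D}[\bm b\in\range(C)]\le 2^{-s}+(c_{\mathrm{bias}}n)^{-t}=1/\poly(n)$, and we are done. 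Otherwise there are $o(m)$ linear outputs (we use $t\ge3$, so $m\gg n\log n$); delete them, leaving a circuit in which no Fourier coefficient equals $\pm1$.

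Third, for each arity-$k'\le t-1$ sub-scheme, apply \Cref{thm: refute semi random xor almost independence} at Kikuchi level $r=k'$ with target $\eps=8^{-t}$: the hypothesis $m\gtrsim n^{(t-1)/2}\log^5 n$ comfortably beats the required $c_{\mathrm{refute}}^{k'}n\log n(n/k')^{k'/2-1}\eps^{-4}$, and the smallness of $\mathcal D$'s bias is exactly what makes the $\eta$-biased trace method go through; a union bound over the $O(4^t)$ schemes keeps the failure probability $1/\poly(n)$. The fourth step is the heart of the matter: certifying that the $\le 2^t$ top sub-schemes have combined value $\le1-2^{-t+1}$ from only $m\approx n^{t/2-\frac{t-2}{2(t+2)}}$ (odd $t$) or $n^{t/2-\frac{t-2}{2(t+6)}}$ (even $t$) constraints, i.e.\ below the $n^{t/2}$ threshold of the plain spectral bound. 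Here I would follow \cite{KPI25} and the hypergraph-Moore-bound technology of \cite{HsiehKM23}: since only a $1-\delta$ (rather than $\eps$) upper bound is needed, one may run a Kikuchi/even-cover argument at a level $r$ equal to a small polynomial in $n$ (of order $n^{1/(t+2)}$), carry the per-output weights $\widehat g_\ell$ through the quadratic form and trace estimate, and show that a $(c_{\mathrm{bias}}n)^{-t}$-biased $\bm b$ already fools this weak bound at the claimed stretch, the polylog losses folding into $\log^5 n$ and the odd/even gap ($t+2$ vs.\ $t+6$) arising from the doubled vertex set and $\eps^{-4}$ loss of the odd-arity Kikuchi construction of \Cref{sec: construct kikuchi odd k}. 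Assembling the third and fourth steps with one final union bound finishes the proof.

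\textbf{Main obstacle.} The fourth step: reproducing \cite{KPI25}'s sub-$n^{t/2}$ refutation of a \emph{weighted} arity-$t$ XOR scheme below value $1$ — threading weights with per-output $\ell_1$-mass $\le1$ through the Kikuchi machinery — while demanding only $\poly(n)$-small bias of $\bm b$. This is delicate because a Kikuchi level $r\gg\log n$ ordinarily requires $\Omega(r\log n)$-wise independence, unavailable from a small-support small-biased source; the escape is that the weak ``$\le1-\delta$'' target admits an even-cover-type estimate whose only randomness need is $(c_{\mathrm{bias}}n)^{-t}$-bias. Carefully transporting the \cite{HsiehKM23} parameters through this, separately for odd and even $t$, is what yields the two stated exponents and the (very lossy) constant $c_{\mathrm{adaptive}}=2^{2^{O(t)}}$.
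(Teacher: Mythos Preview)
Your high-level plan matches the paper's: split the Fourier expansion into degree-$t$ and degree-$<t$ parts, strongly refute the latter and weakly refute the former to value $1-\delta$, then combine. But two concrete points go wrong, and the paper handles them differently than you propose.

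\textbf{The targets in your plan are internally inconsistent.} In step~1 you ask that the top sub-schemes have combined value $\le 1-2^{-t+1}$ and each lower-degree sub-scheme have value $\le 8^{-t}$. But the weak refutation you invoke in step~4 (the paper's \Cref{thm: weak_refutation}, from \cite{KPI25}) only certifies $\val\le 1-\tfrac{1}{c_{\text{unsat}}\, r\log n}$ at Kikuchi level $r$ with stretch $m\gtrsim n(n/r)^{t/2-1}\log n$; since $r$ must be polynomial in $n$ to beat the $n^{t/2}$ barrier, the gap is $n^{-\Theta(1)}$, not $2^{-O(t)}$. Consequently the strong-refutation target for arity $\le t-1$ must be $\eps\approx \tfrac{1}{4^t\cdot r\log n}$ (not the constant $8^{-t}$), and the theorem's exponent comes exactly from balancing $n(n/r)^{t/2-1}$ against $n^{(t-1)/2}\cdot \eps^{-2\text{ or }4}$: this yields $r=n^{1/(t+2)}$ for odd $t$ and $r=n^{1/(t+6)}$ for even $t$, the difference being whether the dominant lower arity $t-1$ is even or odd.

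\textbf{Your ``main obstacle'' is self-inflicted.} By routing through \Cref{lemma: XOR reduction}, your top sub-schemes carry per-output weights $\widehat g_{\ell,i}$ that genuinely vary with $i$, and you correctly flag threading these through the weak refutation as the hard part---but you do not actually do it, and removing linear outputs does not help (as the paper's introduction already notes, depth-$t$ decision trees like the index function can have $\sum_{|\alpha|=t}|\widehat g_\alpha|=1$ without any single coefficient being $\pm1$). The paper avoids this entirely: instead of the layer-respecting reduction, it first buckets the $m$ outputs by decision-tree \emph{shape} (at most $2^{2^{O(t)}}$ shapes, whence $c_{\text{adaptive}}=2^{2^{O(t)}}$), reducing without loss of generality to a single universal predicate $g:\{\pm1\}^{2^t-1}\to\{\pm1\}$. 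Now every Fourier coefficient $\widehat g_\alpha$ is a constant independent of $i$, so the degree-$t$ contribution factors as
\[
\sum_{|\alpha|=t}\widehat g_\alpha\cdot\psi_{\mathcal H_\alpha,b}(x)\ \le\ L_{1,t}(g)\cdot\max_{|\alpha|=t}\val(\psi_{\mathcal H_\alpha,b})\ \le\ 1\cdot\Bigl(1-\tfrac{1}{c_{\text{unsat}}\, r\log n}\Bigr),
\]
using \Cref{lemma: DT_Fourier_weight} and the \emph{unweighted} weak refutation as black boxes. With this reduction your step~2 is also unnecessary: both $L_{1,t}(g)\le 1$ and \Cref{thm: weak_refutation} hold for all predicates, parity or not.
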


Our proof of~\Cref{thm:adaptive-bit-probe} is a simple adaptation of the proof of Theorem 2 in~\cite[Theorem 2]{KPI25}.
The key observation is that their analysis can be extended to all degree-$t$ predicates with level-$t$ $\ell_1$-Fourier weight bounded by $1$.
We prove that $t$-query adapative decision trees possess this property.
\begin{lemma}\label{lemma: DT_Fourier_weight}
    Let $g:\{\pm 1\}^n\to \{\pm 1\}$ be a Boolean function with Fourier expansion $g(x)=\sum_{\alpha\subseteq [n]} \widehat{g_\alpha}x_\alpha$.
    Suppose $g$ can be computed by a $t$-query adaptive decision tree, then
    \[
    L_{1,t}(g)\coloneqq \sum_{\alpha\subseteq[n]:|\alpha|=t} |\widehat{g_\alpha}|\le 1.
    \]
\end{lemma}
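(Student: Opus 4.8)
The plan is to expand $g$ over the leaves of a decision tree computing it and track exactly where its degree-$t$ Fourier mass can originate. Fix a $t$-query decision tree for $g$; as usual we may assume it never re-queries a variable along a path, so every root-to-leaf path queries a set $P_\ell\subseteq[n]$ of at most $t$ variables. For a leaf $\ell$ write $\sigma_{\ell,i}\in\{\pm1\}$ for the answer given to query $i\in P_\ell$ on that path, and $v_\ell\in\{\pm1\}$ for the label of $\ell$. The indicator that an input $x$ reaches $\ell$ is $\prod_{i\in P_\ell}\frac{1+\sigma_{\ell,i}x_i}{2}$, and since the leaves partition $\{\pm1\}^n$,
$$g(x)=\sum_{\ell} v_\ell\prod_{i\in P_\ell}\frac{1+\sigma_{\ell,i}x_i}{2}=\sum_{\ell} v_\ell\,2^{-|P_\ell|}\sum_{\beta\subseteq P_\ell}\Bigl(\prod_{i\in\beta}\sigma_{\ell,i}\Bigr)x_\beta .$$
Reading off the coefficient of $x_\alpha$ gives $\widehat{g_\alpha}=\sum_{\ell:\,\alpha\subseteq P_\ell} v_\ell\,2^{-|P_\ell|}\prod_{i\in\alpha}\sigma_{\ell,i}$.

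Now I specialize to $|\alpha|=t$. Since $|P_\ell|\le t$ for every leaf, the requirement $\alpha\subseteq P_\ell$ forces $P_\ell=\alpha$, so only leaves at depth exactly $t$ with query set equal to $\alpha$ contribute, each with weight $\pm 2^{-t}$. Hence $\widehat{g_\alpha}=2^{-t}\sum_{\ell:\,P_\ell=\alpha} v_\ell\prod_{i\in\alpha}\sigma_{\ell,i}$, and by the triangle inequality $|\widehat{g_\alpha}|\le 2^{-t}\cdot\lvert\{\ell: P_\ell=\alpha\}\rvert$. Summing over all $\alpha$ of size $t$, each full-depth leaf is counted exactly once (for $\alpha=P_\ell$), so
$$L_{1,t}(g)=\sum_{|\alpha|=t}|\widehat{g_\alpha}|\le 2^{-t}\sum_{|\alpha|=t}\lvert\{\ell: P_\ell=\alpha\}\rvert=2^{-t}\cdot\#\{\text{leaves at depth exactly }t\}.$$

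Finally I invoke Kraft's inequality: because the leaf indicators partition the cube, $\sum_\ell 2^{-|P_\ell|}=1$, which in particular gives $\#\{\text{leaves at depth exactly }t\}\le 2^{t}$ (equivalently, a binary tree has at most $2^t$ nodes at level $t$). Combining with the previous display yields $L_{1,t}(g)\le 1$, proving the lemma. There is no genuinely hard step; the two points deserving a little care are (i) that no leaf of depth $<t$ contributes to any degree-$t$ coefficient — this is what makes the leading weight exactly $2^{-t}$ — and (ii) the counting bound on full-depth leaves, for which Kraft's inequality is exactly what is needed.
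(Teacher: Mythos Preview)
Your proof is correct and takes a genuinely different route from the paper's. The paper argues by induction on $t$: writing $g(x)=\tfrac{1+x_j}{2}g|_{x_j=1}+\tfrac{1-x_j}{2}g|_{x_j=-1}$ with $x_j$ the root query, it observes that degree-$(t{+}1)$ characters of $g$ can only arise as $x_j$ times a degree-$t$ character of a subfunction, so $L_{1,t+1}(g)\le\tfrac12\bigl(L_{1,t}(g|_{x_j=1})+L_{1,t}(g|_{x_j=-1})\bigr)\le 1$ by the inductive hypothesis. Your argument instead expands $g$ over all leaves at once and pinpoints the degree-$t$ mass as coming solely from full-depth leaves, then finishes with Kraft's inequality to count those. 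Your approach is a bit more explicit about the combinatorial source of the bound (each full-depth leaf contributes at most $2^{-t}$ of top-level $\ell_1$ mass, and there are at most $2^t$ of them), whereas the paper's induction is terser and perhaps more easily adapted if one wanted analogous bounds at other levels.
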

\begin{proof}
We prove the lemma by induction: The statement clearly holds for $t=1$. Now assume the statement is true for $t\le d$.
For any $g:\{\pm 1\}^n\to \{\pm 1\}$ that can be computed by a $(d+1)$-query decision tree $T$, let $x_j$ be the variable queried at the root of $T$.
We can then write
\[
g(x)=\frac{1+x_j}{2} g|_{x_j=1}+\frac{1-x_j}{2}g|_{x_j=-1}.
\]
Observe that both $g|_{x_j=1}$ and $g|_{x_j=-1}$ can be computed by $d$-query decision trees and have degree at most $d$.
By induction hypothesis, we have
\[
L_{1,d+1}(g)\le \frac{1}{2}(L_{1,d}(g|_{x_j=1})+L_{1,d}(g|_{x_j=-1}))\le 1.\qedhere
\]
\end{proof}
We also need the following result from~\cite{KPI25}.
\begin{theorem}[\cite{KPI25}]\label{thm: weak_refutation}
 For every $k\ge 2$, there are universal constants $c_{\text{weak}}=c_{\text{weak}}(k),c_{\text{unsat}}=c_{\text{unsat}}(k)$ such that the following holds.
 Let $\mathcal{H}$ be a $k$-uniform hypergraph with $n$ vertices and $m$ edges, and $r$ be any integer satisfying $k\le r\le n/8$.

    Let $\mathcal{D}$ be a $\gamma$-biased distribution over $\{\pm 1\}^{\cH}$. Then provided that $m\ge c_{\text{weak}}\cdot n(\frac{n}{r})^{k/2-1}\log n$,
    \[
        \Pr_{\bm{b}\sim \cD}[\val(\psi_{\cH,\bm{b}})\le 1-\frac{1}{c_{\text{unsat}}\cdot r\log n}]\ge 1-2\gamma.
    \]
\end{theorem}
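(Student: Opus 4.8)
The plan is to reproduce the argument of \cite{KPI25}, which operates in the ``hypergraph Moore bound'' regime of the Kikuchi method \cite{GKM,HsiehKM23}: translate the claim ``$\val(\psi_{\cH,\bm b})$ is bounded away from $1$'' into the existence of many short \emph{even covers} of $\cH$ on which $\bm b$ carries the ``wrong'' parity, and then supply such even covers by a girth argument on the level‑$r$ Kikuchi graph of $\cH$. Concretely, write $\mathrm{viol}_b(x):=\#\{C\in\cH: b_C x_{e_C}=-1\}$, so that $\psi_{\cH,b}(x)=1-\tfrac{2}{m}\mathrm{viol}_b(x)$; it suffices to exhibit, for all but a $2\gamma$ fraction of $\bm b$, a lower bound $\min_x\mathrm{viol}_{\bm b}(x)\ge \tfrac{m}{2\,c_{\mathrm{unsat}}\,r\log n}$. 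The mechanism is: if $F\subseteq\cH$ is an even cover (every vertex lies in an even number of edges of $F$), then $\bigoplus_{C\in F}e_C=\emptyset$, so for \emph{every} $x$ one has $\prod_{C\in F}(b_C x_{e_C})=\bigl(\prod_{C\in F}b_C\bigr)\cdot x_{\bigoplus_{C\in F}e_C}=\prod_{C\in F}b_C=:\chi_F(b)$. Hence if $\chi_F(b)=-1$ then every assignment violates at least one constraint of $F$, and running this over a family of pairwise \emph{edge‑disjoint} even covers whose parities are all $-1$ gives $\min_x\mathrm{viol}_b(x)\ge$ (size of the family).

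The technical heart is producing such a family; this is exactly the parametrized hypergraph Moore bound of \cite{GKM} (with the refinements of \cite{HsiehKM23}). Build the level‑$r$ Kikuchi graph of $\cH$ on vertex set $\binom{[n]}{r}$ (for odd $k$, the unbalanced construction of Section~\ref{sec: construct kikuchi odd k} on $\binom{[2n]}{r}$). Under the hypothesis $m\ge c_{\mathrm{weak}}\,n(n/r)^{k/2-1}\log n$ and $k\le r\le n/8$, the average degree is $d\ge\tfrac12(r/n)^{k/2}m\ge\tfrac12 c_{\mathrm{weak}}\,r\log n$, which exceeds $\Omega(\log\binom nr)$. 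Pass to the $2$‑core by iteratively deleting Kikuchi‑vertices of degree below $d/4$: this removes fewer than half of the Kikuchi edges, hence comes from at most a constant fraction of the hyperedges, and leaves a subgraph of minimum degree $\ge d/4$ and therefore of girth $O(\log\binom nr/\log d)=O(r\log n)$. A shortest cycle $S_1\to\cdots\to S_g\to S_1$ pulls back to an even cover of size $O(r\log n)$, since the edge‑labels $C_1,\dots,C_g$ satisfy $\bigoplus_i e_{C_i}=\bigoplus_i(S_i\oplus S_{i+1})=\emptyset$. Iterating greedily — find such an even cover, delete its $O(r\log n)$ hyperedges, repeat while at least $\tfrac12 c_{\mathrm{weak}}n(n/r)^{k/2-1}\log n$ hyperedges remain — yields pairwise edge‑disjoint even covers $F_1,\dots,F_M$ with $M=\Omega\!\bigl(m/(r\log n)\bigr)$ and each $|F_j|=O(r\log n)$. (For $k=2$ this is the familiar statement that a signed multigraph with $\Omega(n\log n)$ edges contains $\Omega(m/\log n)$ edge‑disjoint short cycles.)

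Finally we bring in the small bias. Since the $F_j$ are edge‑disjoint, for every nonempty $S\subseteq[M]$ the product $\prod_{j\in S}\chi_{F_j}(\bm b)=\chi_{\bigcup_{j\in S}F_j}(\bm b)$ is a parity over a nonempty set of coordinates, so $\bigl|\E_{\bm b\sim\cD}\prod_{j\in S}\chi_{F_j}(\bm b)\bigr|\le\gamma$; that is, $\bigl(\chi_{F_j}(\bm b)\bigr)_{j\in[M]}$ is itself a $\gamma$‑biased vector in $\{\pm1\}^M$. A small‑bias concentration estimate — using the elementary bound $\Pr_{\gamma\text{-biased}}[\text{$s$ fixed coordinates all equal }+1]\le 2^{-s}+\gamma$ together with moment bounds of order $\Theta(\log(1/\gamma))$, and keeping $M$ well above $\log(1/\gamma)$ — shows that with probability $\ge 1-2\gamma$ at least a constant fraction of the $\chi_{F_j}(\bm b)$ equal $-1$. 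Restricting the family to those covers and combining with the two previous steps gives $\min_x\mathrm{viol}_{\bm b}(x)\ge\Omega(M)=\Omega\!\bigl(m/(r\log n)\bigr)\ge \tfrac{m}{2\,c_{\mathrm{unsat}}\,r\log n}$ for a suitable $c_{\mathrm{unsat}}=c_{\mathrm{unsat}}(k)$, i.e.\ $\val(\psi_{\cH,\bm b})\le 1-\tfrac{1}{c_{\mathrm{unsat}}\,r\log n}$; and since the certifying sub‑family and its parities can be found by shortest‑cycle searches and checked in time $n^{O(r)}$, the bound is also efficiently certifiable, which the range‑avoidance applications use.

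The main obstacle is Step~2: the quantitative hypergraph Moore bound matching the threshold $m=\Theta\!\bigl(n(n/r)^{k/2-1}\log n\bigr)$, in particular the $2$‑core/girth bookkeeping that upgrades ``average Kikuchi degree $\gg\log(\#\text{vertices})$'' into \emph{many edge‑disjoint} short even covers, and — for odd $k$ — the fact that this must be routed through the unbalanced Kikuchi construction of \cite{HsiehKM23,GKM} rather than the naive symmetric‑difference one. A secondary subtlety is extracting the clean failure probability $2\gamma$ in Step~3 from a distribution that is only assumed $\gamma$‑biased (with no auxiliary independence), which is where the calibration of $M$ against $\log(1/\gamma)$ is needed.
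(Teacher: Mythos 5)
The paper itself does not prove Theorem~\ref{thm: weak_refutation} --- it is imported verbatim from \cite{KPI25} --- and your outline (pull back short cycles of the Kikuchi graph to even covers, extract $M=\Omega(m/(r\log n))$ pairwise edge-disjoint covers by greedy deletion at the Moore-bound threshold, and then argue that a negative-parity cover forces a violated constraint for every assignment) is indeed the mechanism behind that result, so the structural part of your plan is the right one. Note only that the two points you defer are more than bookkeeping: for odd $k$ the symmetric-difference Kikuchi graph has no edges at all, and even for even $k$ a short cycle may reuse a hyperedge, in which case its pullback can be the empty even cover; handling these is exactly the content of the constructions in \cite{GKM,HsiehKM23}.

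The genuine gap is in your final small-bias step. You need a \emph{constant} fraction of the $M$ disjoint parities $\chi_{F_j}(\bm{b})$ to be $-1$ (otherwise you do not get $\Omega(m/(r\log n))$ violations, hence not $\val\le 1-\tfrac{1}{c_{\text{unsat}}(k)\,r\log n}$ with $c_{\text{unsat}}$ depending only on $k$), and you claim this happens with probability $\ge 1-2\gamma$ using $\Pr[\text{$s$ disjoint parities all equal }+1]\le 2^{-s}+\gamma$ plus moments of order $\log(1/\gamma)$. That combination does not deliver the claim. The random-$s$-subset argument gives $\Pr[\text{fewer than }\alpha M\text{ negatives}]\le (1-\alpha)^{-s}(2^{-s}+\gamma)$, and making this $\le 2\gamma$ with $s\approx\log_2(1/\gamma)$ forces $\alpha=O(1/\log(1/\gamma))$, which degrades the conclusion to $\val\le 1-\Omega\bigl(1/(r\log n\cdot\log(1/\gamma))\bigr)$ --- an extra $\log(1/\gamma)$ (in the applications, $\Theta(t\log n)$) loss. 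The alternative second-moment route (using edge-disjointness, $\E[(\sum_j\chi_{F_j}(\bm{b}))^2]\le M+\gamma M^2$, then Chebyshev) does give a constant fraction of negatives, but with failure probability $O(\gamma)+O(1/M)$; in the regime where the theorem is applied one has $\gamma\approx n^{-t}$ while $M\approx m/(r\log n)\ll n^{t}$, so the $O(1/M)$ term dominates $\gamma$ and cannot be absorbed into $2\gamma$. Indeed, the literal bound ``$\ge 1-2\gamma$ for every $\gamma$'' cannot follow from this kind of argument alone: for $\gamma=0$ (the uniform distribution) a realizable signing $b_C=x_{e_C}$ occurs with probability $2^{n-m}>0$ and has $\val=1$. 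So either you must track an additive $2^{-\Omega(M)}$ (equivalently $1/\mathrm{poly}$) error, or restrict the range of $\gamma$, or supply a different mechanism for the clean $2\gamma$; as written, Step~3 is the step that fails, and calibrating ``$M$ well above $\log(1/\gamma)$'' does not fix it because the required fraction of negative covers, not just their number, is what the unsatisfiability gap depends on.
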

    We are now ready to present the proof of~\Cref{thm:adaptive-bit-probe}.
    \begin{proof}[Proof of~\Cref{thm:adaptive-bit-probe}]
    Throughout the proof, we will work under the $\{\pm 1\}$-basis.
    Since a $t$-query adaptive decision tree depends on at most $2^t-1$ variables, 
    we can assume that there is a universal predicate $g:\{\pm 1\}^{2^t-1}\to \{\pm 1\}$ such that for every $i\in [m]$, $C(x)_i=g(x_{p^i_1},\ldots,x_{p^i_{2^t-1}})$ for some $p^i_1,\ldots,p^i_{2^t-1}\in [n]$.
    This assumption can be made without loss of generality since it will only incur a constant multiplicative overhead to the stretch in light of the fact there are at most $(2^t-1)^{2^t-1}\cdot 2^{2^t}=2^{2^{O(t)}}$ different depth-t decision trees.
    
    Let us now write down the predicate in its Fourier expansion form: $g(x)=\sum_{\alpha \subseteq [2^t-1]} \widehat{g_\alpha}x_\alpha$.
    Observe that $g$ has degree at most $t$, namely, $\widehat{g_\alpha}=0$ for all $|\alpha|>t$.
    Moreover, the number of non-zero coefficients of $g$ is at most $\lVert \widehat{g} \rVert_0\le 4^t$.

    Similar to what we have done in the proof of~\Cref{thm:nonadaptive-bit-probe}, for each $\alpha\subseteq [2^t-1]$ of size $|\alpha|\le t$, we construct an $|\alpha|$-uniform hypergraph $\cH_{\alpha}\coloneqq \{(p^i_j)_{j\in \alpha}\mid i\in [m]\}$ with $n$ vertices and $m$ edges.

    Set $r=n^{1/(t+2+4\cdot \mathbf{1}\{t \text{ is even}\})}$.
    Let $\widetilde{B}$ denote the set of all $b\in \{\pm 1\}^m$ such that:
    \begin{enumerate}
        \item 
    $\val(\psi_{\cH_{\alpha},b})\le \frac{1}{2^{2t+1}\cdot c_{\text{unsat}}\cdot r\log n}$ for all $|\alpha|<t$; 
 \item $\val(\psi_{\cH_\alpha,b})\le (1-1/(c_{\text{unsat}}\cdot r\log n))$ for all $|\alpha|=t$.

    \end{enumerate}
    We prove that $\widetilde{B}\cap \range(C)=\emptyset$.
    Indeed, fix any $b\in \widetilde{B}$ and observe that for each $x\in \ZO^n$,
    \begin{align*}
    \langle C(x),b\rangle&=\frac{1}{m}\sum_{i=1}^m b_ig(x_{p^i_1},\ldots,x_{p^i_{2^t-1}})\\
    &=\frac{1}{m}\sum_{i=1}^m b_i\sum_{\alpha\subseteq[2^t-1]} \widehat{g_\alpha}\prod_{j\in \alpha} x_{p^i_j}\\
    &=\frac{1}{m}\sum_{\lvert\alpha\rvert\le t} \widehat{g_\alpha} \sum_{i=1}^m b_i\prod_{j\in \alpha} x_{p^i_j}\\
    &=\sum_{\lvert\alpha\rvert < t} \widehat{g_\alpha}\cdot\psi_{\cH_{\alpha},b}(x)+\sum_{\lvert\alpha\rvert = t} \widehat{g_\alpha}\cdot\psi_{\cH_{\alpha},b}(x)\\
    &\le \lVert \widehat{g} \rVert_0\cdot \frac{1}{2^{2t+1}\cdot c_{\text{unsat}}\cdot r\log n} + \sum_{\lvert\alpha\rvert=t} \lvert \widehat{g_\alpha}\rvert\cdot (1-1/(c\cdot r\log n)) & (b\in \widetilde{B}) \\
    &\le (1/(2c_{\text{unsat}}\cdot r\log n))+(1-1/(c_{\text{unsat}}\cdot r\log n)) & (\text{\Cref{lemma: DT_Fourier_weight}})\\
    &<1.
    \end{align*}

    Lastly, by \Cref{thm: refute semi random xor almost independence}, we have $\Pr_{\bm{b}\sim \cD}[\val(\psi_{\cH_{\alpha},\bm{b}})\ge \frac{1}{2^{2t+1}\cdot c_{\text{unsat}}\cdot r\log n}]\le \frac{1}{\poly(n)}$ for every $\lvert\alpha\rvert<t$ provided that
    \begin{align*}
    m&\ge c_{\text{adaptive}}\cdot n^{t/2-\frac{t-2}{2(t+2+4\cdot\mathbf{1}\{t \text{ is even}\})}}\cdot \log^5 n \\
    &\ge c_{\text{refute}}\cdot n^{(t-1)/2}\cdot \log n\cdot (2c_{\text{unsat}}\cdot r\log n)^{2+2\cdot \mathbf{1}\{t \text{ is even}\}},
    \end{align*}
    and by ~\Cref{thm: weak_refutation}, we have $\Pr_{\bm{b}\sim \cD}[
    \val(\psi_{\cH_\alpha,\bm{b}})\ge (1-1/(c_{\text{unsat}}\cdot r\log n))]\le \frac{1}{\poly(n)}$ for every $\lvert \alpha \rvert=t$ provided that
    \[
    m\ge c_{\text{adaptive}}\cdot n^{t/2-\frac{t-2}{2(t+2+4\cdot\mathbf{1}\{t \text{ is even}\})}}\cdot \log^5 n     \ge c_{\text{weak}}\cdot n(n/r)^{k/2-1}\log n.
    \]
    Together with the union bound, we establish that
    \[
    \Pr_{\bm{b}\sim \cD}[\bm{b}\in \range(C)]\le 1-\Pr_{\bm{b}\sim \cD}[\bm{b}\in \widetilde{B}]\le 2^{2t}\cdot \frac{1}{\poly(n)}+2^{2t}\cdot \frac{1}{\poly(n)}\le \frac{1}{\poly(n)}. \qedhere
    \]
    \end{proof}

\section{Algorithms for Range Avoidance}

\subsection{Polynomial Time Algorithm for Stretch \texorpdfstring{$m=\tilde{\Omega}(n^{(t-1)/2})$}{~}}
\label{sec: poly_algo_for_avoid}
\PolyAlgoForAvoid*
We will only briefly sketch the algorithm here as most of the ideas have been illustrated in~\cite{KPI25} and the proof of~\Cref{thm:nonadaptive-bit-probe}.

As pointed out by~\cite[Lemma 7]{KPI25},~\Cref{thm:nonadaptive-bit-probe} immediately yields an $\FP^{\NP}$ algorithm for $\NC^0_t$-Avoid with stretch $m=\tilde{\Omega}(n^{(t-1)/2})$.
In slightly more detail, let $G:\{0,1\}^s\to \{0,1\}^m$ be an explicit generator for a $(c_{\text{biased}}\cdot n)^{-t}$-biased distribution with seed length $s=O(\log m+t\log n)$, where $c_{\text{biased}}$ is a constant as in~\Cref{thm:nonadaptive-bit-probe}.
Observe that $|\range(G)|\le 2^s=\poly(m,n^t)$.
\Cref{thm:nonadaptive-bit-probe} guarantees that there always exists some $b\in \range(G)$ such that $b\notin \range(C)$.
Hence, one can go over every $b\in \range(G)$, and use the $\NP$ oracle to check if $b$ is present in the range of $C$.

Korten, Pitassi, and Impagliazzo~\cite{KPI25} further get rid of the $\NP$ oracle used in their algorithm for stretch $m=\tilde{\Omega}(n^{t/2})$ by observing the following:
When proving the corresponding data structure lower bound, they certify if $b\notin \range(C)$ by upper bounding the maximum correlation by a nonnegative combination of refutation bounds of weighted XORs. As these latter bounds are computed in polynomial time, so is the certification.

We can remove the $\NP$ oracle in our algorithm in the same manner. Following are more details.



If there are at least $n+1$ output bits are XOR or NXOR, we can simply find one solution outside the range by performing Gaussian elimination.

Otherwise, only an $o(1)$-fraction of output bits are XOR or NXOR, so we may assume that none of the output bits are.
We construct the hypergraph $\cH_\alpha$ and the vector $\widehat{g_\alpha}$ for every $\alpha\subsetneq [t]$ as in the proof of~\Cref{thm:nonadaptive-bit-probe}.
Afterwards, we enumerate each $b\in \range(G)$.
For every $\alpha\subsetneq [t]$, we run the refutation algorithm on $(\cH_\alpha, b\odot \widehat{g_\alpha})$ given by~\Cref{thm: refute semi random xor almost independence} and obtain $\REF(\psi_{\cH_{\alpha},b\odot \widehat{g_\alpha}})\ge\val(\psi_{\cH_{\alpha},b\odot \widehat{g_\alpha}})$. 
Then we inspect if
\begin{equation}\label{eqn:sum_of_REF_less_than_m}
\sum_{\alpha\subsetneq [t]} |\REF(\psi_{\cH_\alpha,b\odot \widehat{g_\alpha}})|+(1-2^{1-t}) <1.
\end{equation}
If this is indeed the case, we can then claim that $b\notin \range(C)$ since the left-hand side of \eqref{eqn:sum_of_REF_less_than_m} is at least the maximum correlation $\max_{x\in \{\pm 1\}^n} \langle C(x),b\rangle$.
Lastly, as demonstrated in the proof of~\Cref{thm:nonadaptive-bit-probe},~\eqref{eqn:sum_of_REF_less_than_m} holds for constant fraction of $b\in \range(G)$, hence,  
the algorithm can always find a solution.

\subsection{Subexponential Time Algorithm for Arbitrary Stretch}

\SubEXPAlgoForAvoid*
The algorithm is nearly identical to the one presented in~\cref{sec: poly_algo_for_avoid}, with only two differences:
\begin{itemize}
    \item Let $r\coloneqq n^{1-\frac{2\epsilon}{t-3}+o(1)}$. We enumerate $b$ from the range of an explicit generator $G:\{0,1\}^s\to \{0,1\}^m$ for a $2\lceil r \log n\rceil$-wise independent distribution with seed length $s=O(r\log n\cdot \log m)$, instead of a small biased one.
    \item For each weighted $|\alpha|$-XOR instance $\psi_{\cH_\alpha,b\odot \widehat{g_\alpha}}$, we run the refutation algorithm with Kikuchi level $r$. 
\end{itemize}

By~\Cref{thm: refute semi random xor}, for each $\alpha\subsetneq [t]$, we have $\REF(\psi_{\cH_\alpha,\bm{b}\odot \widehat{g_\alpha}})\le 2^{-2t}$ with probability $1-\frac{1}{\poly(n)}$ provided that $m\ge n^{1+\epsilon}\ge n\cdot (n/r)^{(t-1)/2-1}\cdot \log n\cdot (2^{2t})^4$.
Following the same line of reasoning, we can show that the algorithm always succeeds in finding a solution.

The running time of the algorithm is $2^{s+\tilde{O}(r)}=2^{n^{1-\frac{2\epsilon}{t-3}+o(1)}}$.

\bibliographystyle{alpha}
\bibliography{references}

\end{document}